\documentclass[letterpaper,12pt]{article}

\usepackage[left=1in, right=1in, top=1in, bottom=1in]{geometry}
\usepackage{amsfonts, dsfont}
\usepackage[pdftex]{graphicx}
\usepackage{amsmath,amssymb,amsthm}
\usepackage{microtype}
\usepackage{rotating}
\usepackage{pdflscape}
\usepackage{adjustbox}
\usepackage{xcolor}
\usepackage{epstopdf}
\usepackage{xr-hyper}
\usepackage{fancyhdr}
\usepackage{enumitem}
\usepackage{comment}
\usepackage{mathtools}
\usepackage[unicode=true,pdfusetitle,
 bookmarks=true,bookmarksnumbered=false,bookmarksopen=false,
 breaklinks=false,pdfborder={0 0 1},backref=false,colorlinks=true]
 {hyperref}
\hypersetup{colorlinks=true, linkcolor=red, citecolor=blue, filecolor=red}

\usepackage{setspace}
\makeatletter
\def\@setthanks{\vspace{-\baselineskip}\def\thanks##1{\@par##1\@addpunct.}\thankses}
\makeatother
\usepackage{natbib}
\usepackage{caption}
\usepackage{tabularx}
 \usepackage{float}
 \usepackage{booktabs}
 \usepackage{multirow}
\usepackage{bm}
\newtheorem{thm}{Theorem}[section]

\newtheorem{lem}{Lemma}[section]

\newtheorem{assum}{Assumption}
\newtheorem{prop}{Proposition}[section]
\numberwithin{equation}{section}

\newtheorem{remark}{Remark}
\newtheorem{innercustomgeneric}{\customgenericname}
\providecommand{\customgenericname}{}
\newcommand{\newcustomtheorem}[2]{%
  \newenvironment{#1}[1]
  {%
   \renewcommand\customgenericname{#2}%
   \renewcommand\theinnercustomgeneric{##1}%
   \innercustomgeneric
  }
  {\endinnercustomgeneric}
}
\numberwithin{equation}{section}
\newcustomtheorem{customthm}{Theorem}
\newcustomtheorem{customlem}{Lemma}
\newcustomtheorem{customassumption}{Assumption}
\newcustomtheorem{customprop}{Proposition}
\newcustomtheorem{customexample}{Example}
\newcustomtheorem{customdef}{Definition}
\newcustomtheorem{customcor}{Corollary}
\newcustomtheorem{customrem}{Remark}

\usepackage[title]{appendix}

\usepackage{todonotes}
\newcommand{\ind}{\perp\!\!\!\!\perp} 
\newcommand{\E}{\mathbb{E}}

\newcommand{\1}{\mathds{1}}
\newcommand{\R}{\mathbb{R}}
\newcommand{\I}{\mathcal{I}}
\renewcommand{\H}{\mathcal{H}}
\newcommand{\bb}{\mathbf{b}}

\renewcommand{\cite}{\citet}
\usepackage{todonotes}

\title{Endogenous Quantile Regression with Measurement Error in Dependent Variable}
\author{Xuanjing Su\thanks{Department of Economics, University of Wisconsin--Madison. Email: \href{mailto:xssu@wisc.edu}{xssu@wisc.edu}. I thank Bruce Hansen for his invaluable advice. I am also grateful for the inspiring discussions and helpful comments from Jack Porter, Xiaoxia Shi, Harold Chiang, Kohei Yata, Yong Cai, and participants at New York Camp Econometrics XX. Any remaining errors are my own.}}
\date{May 20, 2026}

\begin{document}
\onehalfspacing

\maketitle

\begin{abstract}
This paper studies quantile regression with an endogenous regressor and measurement error in the dependent variable. Standard quantile regression estimators ignoring these two elements can induce substantial bias. We adopt a control-function approach in a triangular system and show that the conditional quantile coefficient functions, together with all other distributional parameters, are nonparametrically identifiable. Building on this constructive identification result, we propose a two-step sieve ML estimator. The first step estimates the control function. The second step performs a sieve likelihood maximization that incorporates the generated control variable through copula weights. When the number of quantile grid knots grows at an appropriate speed, the estimator is consistent and asymptotically normal, permitting inference via bootstrap. Monte Carlo simulations demonstrate that the estimator markedly reduces bias relative to existing methods, confirming its effectiveness in settings with endogeneity and additive measurement error in the outcome.

\vspace{0.3in}
\noindent\textbf{Keywords:} Measurement error, quantile regression, control variable, copula.
\end{abstract}

\microtypesetup{
  activate=false,
  protrusion=false,
  expansion=false
}

\clearpage

\section{Introduction}
\label{sec:intro}

Quantile regression (QR) has been extensively studied in theoretical econometrics \citep{koenker1978regression, koenker2005quantile} and has gained substantial traction in empirical work. It helps capture the distributional effects of covariates on the outcome variable, not only at the center but also heterogeneously across different quantiles.

In observational data, however, two practical challenges frequently arise. First, variables of interest are often endogenous, rendering standard QR estimators inconsistent for marginal quantile effects. Second, variables used in empirical analyses are commonly subject to measurement error, especially in self-reported survey data, such as income \citep{bound1991extent, bound2001measurement}, health status \citep{bound1999dynamic}, or receipt of government transfers \citep{meyer2009under}, among many others. While left-hand-side errors in variables (LHS EIV) do not bias slope estimates in linear models, such robustness does not carry over to nonlinear models like quantile regression. \citet{hausman2021errors} examine LHS EIV in quantile regression under the assumption of independence between the treatment and unobserved heterogeneity. This paper extends their analysis to an endogenous quantile regression framework, allowing for dependence between individual heterogeneity and treatment selection, as is typical when the treatment arises as an equilibrium outcome of individual decisions.

We address endogeneity using a control-function approach within a triangular simultaneous equations model. The model consists of an outcome equation and a reduced-form first stage. The outcome equation takes the form of a linear conditional quantile specification with an additive measurement error in the dependent variable. This setup retains the structure in \citet{hausman2021errors}, while allowing a continuously distributed endogenous regressor to be correlated with the unobserved heterogeneity (i.e., the latent rank). The endogenous regressor is generated in the first-stage equation, where instruments are assumed independent of the reduced-form scalar disturbance. In this framework, the conditional distribution of the endogenous variable given the instruments, which coincides with the distribution of the first-stage disturbance, functions as the control variable \citep{imbens2009identification}. This control variable thereby restores conditional independence between the endogenous regressor and the latent rank in the outcome equation. 

Although the control-function approach effectively isolates the endogenous component, it also introduces additional nuisance parameters due to the dependence between the latent rank and the control variable. In our model, this dependence is naturally captured through a copula with uniform margins. Exploiting this marginal uniformity property, this paper establishes nonparametric identification of the structural quantile coefficient functions, as in the presence of both additive measurement error in the outcome variable and endogeneity. This extends the analysis in \citet{hausman2021errors} by accommodating endogeneity without imposing any parametric restrictions on the stochastic relationship between the latent rank and the first-stage disturbance.

For estimation, we propose a two-stage sieve maximum likelihood estimator (2SSMLE). We parametrize the nuisance distributions, while sieving over the main functional parameters, namely the quantile coefficient functions. The first stage constructs the control variable. The second stage applies a sieve MLE to recover the distributional effects conditional on both the endogenous regressor and the generated control variable. Under a $\sqrt{n}$-consistent parametric first stage and a mildly ill-posed second stage, the resulting estimator achieves pointwise asymptotic normality with convergence rate of $(n\kappa_{J_n})^{-1/2}$, where $\{\kappa_{J_n}\}$ denotes the decay rate of the minimum eigenvalue of the Hessian matrix as the sieve dimension $J_n\to\infty$, capturing the degree of ill-posedness. Both the convergence rate and the shape of the asymptotic variance are determined by the second stage, while the first-stage error propagates into the variance at first order. When the first stage is estimated nonparametrically, for example using a series estimator, an additional regularization bias arises due to series approximation error. Under an undersmoothing condition on the series dimension $K_n$, which ensures that the first stage approximation bias is asymptotically negligible relative to the second stage rate, the 2SSMLE retains asymptotic normality, thereby permitting inference by nonparametric bootstrap.

The proposed estimator applies to a wide range of empirical settings involving continuous treatments, where endogeneity can be addressed through control functions while the outcome variable is potentially subject to measurement error. A prominent example is the estimation of individual Engel curves, as studied in \citet{blundell2007semi} and \citet{imbens2009identification}. In this setting, the latent rank captures heterogeneous consumption preferences, total expenditure is endogenous and survey-reported expenditure shares are prone to misreporting. An empirical investigation along this line will be reported in future work. In this paper, Monte Carlo simulations show that neglecting measurement error can lead to substantial bias, even after correcting for endogeneity, and that the magnitude of this bias depends on both the scale and the shape of the error distribution.

This paper adds to the literature on measurement error in nonlinear models. Although mismeasurement in independent variables has received considerable attention, errors in the dependent variable remain relatively understudied. For discrete outcomes, \citet{hausman1998misclassification} analyze misclassification in discrete-response models. For continuous outcomes, \citet{abrevaya1999semiparametric} examines mismeasurement within a general linear index framework, while \citet{cosslett2004efficient} examine censored regressions. The present work specifically advances the literature on quantile regression. Most existing studies in this domain address measurement error in covariates and restore identification through instruments or repeated measures, including \citet{schennach2008quantile}, \citet{chesher2017understanding}, \citet{wei2009quantile},  \citet{firpo2017measurement}, and \citet{Song2026}. In contrast, left-hand-side measurement error in quantile regression has received far less attention. \citet{hausman2001mismeasured} demonstrates that such errors can produce substantial bias in quantile regression estimates when conditional heteroskedasticity is present. More recently, \citet{hausman2021errors} establishes that the parameters of a linear quantile regression model with left-hand-side measurement error remain identified and can be consistently estimated when the regressors are independent of the unobserved heterogeneity.\footnote{Alternatively, if there is a collection of sufficiently rich included exogenous variables $\mathbf{Z}$, such that the endogenous regressors $\mathbf{X}$ are independent of the latent rank $U$ conditional on $\mathbf{Z}$, then one can also resort to \citep{hausman2021errors}. However, such a set $\mathbf{Z}$ is always unavailable in empirical data, thus we need additional information, typically provided by either instrumental variables or by control functions.} Our paper extends their analysis by incorporating endogeneity and introducing a two-stage estimation procedure based on a control-function approach. It further highlights how copula structures can be used to integrate control variables into random-coefficient models. Relatedly, \citet{DotySong2023} consider nonparametric identification and estimation of conditional quantiles with LHS measurement error in a dynamic production function framework. Their setting imposes a form of Hicks-neutral productivity, under which the transitory shock and productivity are independent conditional on inputs. This structure effectively renders latent quantile heterogeneity exogenous after conditioning on the control function, placing the model closer in spirit to \citet{hausman2021errors} after conditioning. Finally, \citet{callaway2021distributional} study two-sided measurement error in QR models without endogeneity, and propose a three-step EM-type algorithm for a broad class of distributional effect parameters. 

The literature on identifying heterogeneous treatment effects with endogenous regressors is extensive and largely follows two approaches: instrumental variables methods, such as \citet{chernozhukov2005iv}, and control function methods, such as \citet{imbens2009identification} and \citet{newey1999nonparametric}. Both strategies, however, break down in the presence of left-hand-side measurement error, as the convolution of the error with unobserved heterogeneity distorts the observed conditional quantiles. We address this identification challenge by combining a control variable with the scalar unobservable through a copula representation. Copulas have been widely used in econometrics to model dependence structures. For instance, among many others, \citet{han2017identification} employ single-parameter copulas to characterize dependence among unobservables, and \citet{callaway2021distributional} also adopt copula to model the joint distribution of treatment and outcome. In this paper, we impose no parametric restrictions on the copula in identification. Instead, identification is restored through large-support instruments and the monotonicity inherent in the linear quantile outcome specification.

The rest of the paper is organized as follows. Section \ref{sec:modelid} introduces the model specification and develops the nonparametric identification results. Section \ref{sec:est} presents the two-stage sieve ML estimator. Section~\ref{sec:asymp} derives its asymptotic properties and establishes the validity of the bootstrap for inference. Section~\ref{sec:sim} reports Monte Carlo evidence. Section~\ref{sec:con} concludes the study and outlines directions for future research. Appendix~\ref{sec:appendix} collects all proofs, additional simulation results, and implementation details of the estimator.

We adopt the following notation: We use upper-case Latin letters for random variables and the corresponding lower-cases for their realizations. Bold Latin letters denote matrices. For a random vector $X$ with support $\mathcal{X}$, $X_p$ denotes its $p$th dimension, and $X_{-p}$ denotes the subvector of $X$ corresponding to all but the $p$th dimension. $\E_n(X_i)=\frac1n\sum_{i=1}^n X_i$ denotes the empirical analogue of $\E[X_i]$. We write $F(\cdot)$ for a CDF, $F_{\cdot|\cdot}(\cdot|\cdot)$ for a conditional CDF, and $f(\cdot)$ for a PDF. For a parameter space $\Theta$, $\Theta^{\circ}$ denotes its interior, and $\mathcal{N}(\theta)\subseteq\Theta$ is a neighborhood of $\theta\in\Theta$. For any $\tau\in(0,1)$, the check function is $\rho_\tau(w)=w(\tau-\mathds{1}(w<0))$.


\section{Model and Identification}
\label{sec:modelid}

Let $Y^*$ be the latent outcome and $Y$ be the observed outcome measured with error. Let $X$ be a random vector with dimension $d_x$ and support $\mathcal{X}$. The model we consider has an outcome equation
\begin{equation}\label{e1}
  Y=Y^*+\varepsilon=X'\beta_0(U)+\varepsilon,
\end{equation} where $\varepsilon$ is a scalar measurement error. We adopt a classical mismeasured assumption: $\varepsilon$ is mean-zero, i.i.d., and independent of $Y^*$ and $X$. All unobserved heterogeneity in $Y^*$ is absorbed by a scalar rank $U\sim\mathcal{U}[0,1]$, and $x'\beta_0(\cdot)$ is monotonically increasing for each value $x\in\mathcal{X}$. When $X$ and $U$ are independent, the $\tau$th conditional quantile of $Y^*$ on $X=x$ is $Q_{Y^*|X}(x;\tau)=x'\beta_0(\tau)$, so that $U$ naturally represents the unobserved quantile level of $Y^*$ conditional on $X$. 

In practice, however, $U$ may be correlated with $X$, which often incorporates equilibrium outcomes partially determined by individual heterogeneity. Consider a triangular system with a single continuously distributed endogenous variable $X_1$ included in $X=(X_1,Z_1')'$, where $Z_1$ is a vector of exogenous covariates. In general, if $X_1$ becomes independent of $U$ after conditioning on a sufficiently rich set $Z_1$, then endogeneity ceases to be a concern, and $U$ retains the same interpretation. Such conditioning variables are rarely available in practice, and additional information is therefore required.

We obtain this additional information from a control variable. Let $Z=(Z_1',Z_2')'$, where $Z_2$ contains the excluded instrumental variables. The reduced form for $X_1$ is given by
\begin{equation}\label{e2}
	X_1=h(Z,\eta),
\end{equation} where $h(Z,\eta)$ is strictly monotonic in the scalar disturbance $\eta$ with probability one (w.p.1), and $Z$ is independent of $(\eta,U)$. While we allow for discrete instruments in $Z_2$, $\eta$ is a continuously distributed scalar with a strictly increasing CDF on its support. In this setting, \citet{imbens2009identification} demonstrate that the uniformly distributed 
\begin{equation}\label{e3}
	V=F_{X_1|Z}(X_1,Z)=F_\eta(\eta)
\end{equation} serves as a control variable: $X$ and $U$ are independent conditional on $V$.\footnote{Our framework is not restricted to a single endogenous regressor. Appendix~\ref{subsec:multi} presents an extension to multiple endogenous variables within a triangular system. Identification and estimation proceed analogously by employing a vector of control variables $\mathbf{V}$ and a corresponding multivariate copula $F_{U|\mathbf{V}}$.}

In this framework, the coefficient function $\beta_0(\cdot)$ admits multiple interpretations. One perspective, based on the potential outcome framework with a continuous treatment (e.g., \citet{chernozhukov2005iv, d2023nonparametric,callaway2024difference}), views $\beta_{0,1}(\tau)$ as the $\tau$-th conditional quantile treatment effect (QTE) of $X_1$, that is, $\partial_{x_1} q_{Y^*}(\tau,x_1,z_1)=\beta_{0,1}(\tau)$, holding $X_1=x_1$ and conditioning on $Z_1=z_1$. Alternatively, in a triangular simultaneous equations model, the quantile structural function (QSF) $q_{Y^*}(\tau,x)$, defined as the $\tau$-th quantile of $Y^*_x$, equals $x'\beta_0(\tau)$ under a monotonicity condition \citep{lehmann2006nonparametrics,imbens2009identification}. In this representation, $\beta_0(\cdot)$ is the quantile coefficient function that parameterizes the structural quantile response of $Y^*$ to covariates.

However, in the presence of both endogeneity and measurement error, pointwise identification of $\beta_0(\tau)$ fails. Consequently, simple quantile regression of $Y$ on $X$ may lead to significant bias. Namely, note that the minimization problem 
\begin{align*}
	\beta(\tau)\in \arg\min_b \E[\rho_\tau(y-x'b)]
\end{align*} is generally no longer minimized at the true $\beta_0(\tau)$. First, endogeneity shifts the $\tau$th conditional quantile of $Y^*|X=x$ to $x'\beta_0(\tilde\tau)$, where $\tilde\tau=F^{-1}_{U|V}(\tau)$. Moreover, measurement error further moves the $\tau$th conditional quantile of $Y|X=x$ away from $x'\beta_0(\tilde\tau)$ through convolution with the distribution of $\varepsilon$. Hence, instead of assigning a moment condition for each $\beta_0(\tau)$, we characterize $\beta_0(\cdot)$ using the conditional likelihood function given in (\ref{e5}).

\begin{prop}\label{prop1}
	\textbf{Main Statistical Implication}:\\
	Under the maintained conditions (i) $\varepsilon\ind (Y^*,X,Z)$ and (ii) $X\ind U\mid V$, the conditional distribution $F_{Y|X,V}$ and the conditional density $f_{Y|X,V}$ satisfy
\begin{align}
		F_{Y\mid X,V}(y \:| x,v)
        &= \int_0^1 F_\varepsilon\!\big(y - x'\beta_0(u)\big)\, F_{U| V}(du| v),\label{e4}\\ 
		f_{Y\mid X,V}(y \:| x,v)
        &= \int_0^1 f_\varepsilon\!\big(y - x'\beta_0(u)\big)\, f_{U|V}(u|v)\, du.\label{e5}
	\end{align}
\end{prop}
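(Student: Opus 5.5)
The plan is to condition on $(X,V)$ and then on $U$, using the two maintained independence conditions in turn. Write $Y=X'\beta_0(U)+\varepsilon$. Since $V=F_{X_1|Z}(X_1,Z)$ is a measurable function of $(X,Z)$, condition (i) $\varepsilon\ind(Y^*,X,Z)$ implies $\varepsilon\ind(Y^*,X,V)$. Hence $\varepsilon\ind(U,X,V)$ as well, because on the event where $x'\beta_0(\cdot)$ is strictly increasing, $U$ is recovered from $(Y^*,X)$. If only weak monotonicity is assumed, I would instead read (i) as the classical assumption that $\varepsilon$ is independent of $(U,X,Z)$, which is the intended content. This is the main point requiring care: we need $\varepsilon$ independent of $U$ jointly with $(X,V)$, not merely independent of $Y^*$. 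With this in hand, the conditional distribution of $\varepsilon$ given $(U,X,V)$ equals its marginal law $F_\varepsilon$.

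For (\ref{e4}), apply the tower property:
\begin{align*}
F_{Y|X,V}(y|x,v)=\E\big[\Pr(X'\beta_0(U)+\varepsilon\le y\mid U,X,V)\,\big|\,X=x,V=v\big].
\end{align*}
By the independence just established, the inner probability equals $F_\varepsilon(y-X'\beta_0(U))$, evaluated at $X=x$. It then remains to integrate over the conditional law of $U$ given $(X,V)=(x,v)$. By (ii), $X\ind U\mid V$, this conditional law coincides with $F_{U|V}(\cdot|v)$. This yields $\int_0^1 F_\varepsilon(y-x'\beta_0(u))\,F_{U|V}(du|v)$, with the integral taken over $[0,1]$ because $U$ is uniform. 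Measurability of $u\mapsto\beta_0(u)$, which follows from monotonicity of $x'\beta_0(\cdot)$, makes the integrand well defined. Regular conditional distributions exist since all variables are real valued.

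For (\ref{e5}), differentiate (\ref{e4}) in $y$ under the integral sign. Assuming $\varepsilon$ has a bounded density $f_\varepsilon$, dominated convergence justifies the interchange: the difference quotients of $F_\varepsilon$ are bounded by $\sup f_\varepsilon$, and $F_{U|V}(\cdot|v)$ is a probability measure. Alternatively, one can use Fubini on $\int_{-\infty}^y$ of the candidate density, which avoids any differentiability issue. Finally, replace $F_{U|V}(du|v)$ by $f_{U|V}(u|v)\,du$, assuming the copula has a density, which the statement implicitly requires. Beyond the independence bookkeeping in the first paragraph, the steps are routine.
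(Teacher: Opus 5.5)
Your proposal is correct and follows the paper's proof: condition on $(U,X,V)$ by iterated expectations, use independence of $\varepsilon$ to replace the inner probability by $F_\varepsilon(y-x'\beta_0(u))$, swap $F_{U|X,V}(\cdot|x,v)$ for $F_{U|V}(\cdot|v)$ via $X\ind U\mid V$, and differentiate in $y$; your derivation of $\varepsilon\ind(U,X,V)$ (which the paper only asserts) and your dominated-convergence/Fubini justification of the last step are extra care, not a change of route. The strict-monotonicity hedge is also unnecessary: because $V$ is a function of $(X,Z)$, condition (i) already gives $\varepsilon\ind(Y^*,X,V)$, so conditioning on $(Y^*,X,V)$ yields $F_{Y|X,V}(y|x,v)=\E[F_\varepsilon(y-x'\beta_0(U))\mid X=x,V=v]$, and (ii) then finishes the argument without ever needing $\varepsilon\ind U$.
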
 Technically, proposition \ref{prop1} shows that the observed conditional distribution $F_{Y|X,V}$ is a convolution of $F_\varepsilon$ and $F_{U|V}$. A formal derivation is provided in Appendix \ref{pf_prop1}. In the absence of endogeneity (i.e., when $U\ind V$), $F_{U|V}(du|v)=F_U(du)=du$, and \eqref{e4} reduces to the representation in \citep{hausman2021errors}:
\begin{align*}
	F_{Y\mid X}(y \:| x)
        &= \int_0^1 F_\varepsilon\!\big(y - x'\beta_0(u)\big)\, du.
\end{align*} With an endogenous regressor, in contrast, $F_\varepsilon$ is weighted by the copula $F_{U|V}$, introducing an additional nuisance component, which captures the dependence between the latent heterogeneity $U$ and the endogenous variation in $X$. While \citet{hausman2021errors} establish nonparametric identification of $(\beta_0(\cdot),f_\varepsilon)$, we proceed to show that $(\beta_0(\cdot),f_\varepsilon,f_{U|V})$ can also be nonparametrically identified under the following assumptions in Section~\ref{subsec:npid}.

\subsection{Nonparametric Identification}
\label{subsec:npid}

\begin{assum} Restrictions on $X$:\label{npx}
  \begin{enumerate}[label=(\alph*)]
    \item $\E[XX']$ is non-singular, and $\E\|X\|^{2+\epsilon}<\infty$ for some $\epsilon>0$.
    \item Let $X=(X_p,X_{-p})$, where $X_p$ is a scalar  variable and $X_{-p}$ denotes the remaining components. For every $X_{-p}$ with strictly positive density or mass, there exists an open neighborhood of $X_p$ within which $X_p|X_{-p}$ is continuously distributed with strictly positive probability density.\footnote{This $X_p$ could naturally be our endogenous continuous variable $X_1$, but not necessary.}
  \end{enumerate}
\end{assum}

\begin{assum} Restrictions on $\beta(\cdot)$:\label{npb}
\begin{enumerate}[label=(\alph*)]
	\item $\beta_0$ is in the functional space $M_{B_1\times\dots\times B_{dx}}\equiv\{\beta=(\beta_1,\dots,\beta_{dx})':[0,1]\to B_1\times\dots\times B_{dx}\}$, where $B_k\subseteq \R$ denotes closed and uniformly bounded interval for each $k\in\{1,\dots,dx\}$, and {$g_x(u)\equiv x'\beta(u):[0,1]\to\R$ is monotonically increasing in $u$ for any $ x\in\mathcal{X}$}.
	\item For each $k\in\{1,\dots,dx\}$, $\beta_{0,k}$ is in the subspace $M_k\equiv\{e_k'\beta:\beta\in M_{B_1\times\dots\times B_{dx}}\}$, where $e_k\in\R^{dx}$ is the canonical basis vector. Moreover, the coefficient function of $X_p$ satisfies {$\beta_{0,p}\in M_p\subseteq \mathcal{C}([0,1])\cap\{\beta_p:\text{strictly monotonic}\}$}.
\end{enumerate}
\end{assum}

\begin{assum} Restrictions on EIV:\label{npe}
  \begin{enumerate}[label=(\alph*)]
    \item The unobserved $\varepsilon$ is continuously distributed with density $f_{\varepsilon}$ and $\varepsilon\ind (X,Z,Y^*)$.
    \item $E[\varepsilon]=\int e f_\varepsilon(e)de=0$ and $\phi_\varepsilon(s)=\E[\exp(is\varepsilon)]\neq0$ for all $s\in\R$.
    \item There exists a constant $C>0$ such that for all $q>0$, $\E[|\varepsilon^q|]<q!\cdot C^q$. 
  \end{enumerate}
\end{assum}

\begin{assum} Restrictions on copula: \footnote{A bivariate copula $C:[0,1]^2\to[0,1]$ is the joint distribution function of two variables with uniform margins. In our setting, $F_{U,V}$ inherently obeys such structure by the Sklar's theorem and continuity of $X_1$.}\label{npc}
\begin{enumerate}[label=(\alph*)]
	\item $U,V\sim \mathcal{U}[0,1]$. $(U,V)$ jointly follow some copula distribution $C(u,v)$ with continuous joint density $f_{U,V}$, equal to its conditional density $f_{U|V}$.\footnote{ Since $U$ and $V$ both have uniform margins, we have $f_{U,V}(u,v)=f_{U|V}(u|v)f_V(v)=f_{U|V}(u|v)$.}
	\item (Common support) For almost all $x\in\mathcal{X}$, the support of $V|X=x$ equals that of $V$.
\end{enumerate}
\end{assum}

With the above conditions on the parameters, covariates, and the two distributions, we state our main nonparametric identification result as follows.

\begin{thm} \textbf{Nonparametric Global Identification}\label{thm1}\\
	Denote $S$ as the collection of parameters $(\beta(\cdot),f_\varepsilon,f_{U|V})$ satisfying Assumptions \ref{npb}, \ref{npe}, and \ref{npc} respectively. Then under Assumptions \ref{npx}-\ref{npc}, for any set $(\beta(\cdot),f_\varepsilon,f_{U|V})$ that generates the same observed distribution $F_{Y|X,V}$ as the true parameter set $(\beta_0(\cdot),f_{0,\varepsilon},f_{0,U|V})$, it must be that $\beta_0(u)=\beta(u)$ for all $u\in(0,1)$, $f_{0,\varepsilon}(e)=f_\varepsilon(e)$ almost everywhere (a.e.) for all $e\in\R$, and $f_{0,U|V}(u,v)=f_{U|V}(u,v)$ a.e. for all $(u,v)\in[0,1]\times[0,1]$. 
\end{thm}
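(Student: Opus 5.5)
Our plan is to reduce the problem, for each fixed $v$, to the exogenous setting of \citet{hausman2021errors}. We then use the uniform-margin property of the copula to pin down the quantile coefficient functions.

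Fix $v$ in the support of $V$. By Proposition~\ref{prop1} and the common support condition in Assumption~\ref{npc}(b), $F_{Y|X,V}(\cdot|x,v)$ is observed for almost all $x$. It equals the law of $x'\beta_0(U)+\varepsilon$ with $U\sim F_{U|V}(\cdot|v)$.

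\textbf{Step 1: remove the measurement error.} Take characteristic functions in $y$:
\[
\phi_{Y|x,v}(s)=\phi_\varepsilon(s)\int_0^1 e^{isx'\beta_0(u)}f_{U|V}(u|v)\,du .
\]
Assumption~\ref{npe}(b) gives $\phi_\varepsilon\neq 0$, and Assumption~\ref{npe}(c) makes the moments determinate. We then exploit variation in $X_p$ over the open neighborhood from Assumption~\ref{npx}(b), as in \citet{hausman2021errors}.

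Writing $W_{x,v}=x'\beta_0(U)$ given $V=v$, the observed moments $\E[Y^q|x,v]$ are polynomials in $x_p$. Their coefficients involve the moments of $\varepsilon$, which are free of $x$, and mixed moments of $\beta_0(U)$ under $f_{U|V}(\cdot|v)$. Differentiating in $x_p$ separates the $\varepsilon$ terms from the $\beta$ terms, because $\varepsilon$ does not vary with $x$. Inducting on $q$, and using $\E\varepsilon=0$ to start, this identifies all moments of $\varepsilon$. Moment determinacy under Assumption~\ref{npe}(c) then identifies $f_\varepsilon$.

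Dividing by $\phi_\varepsilon$ recovers, for each $(x,v)$, the distribution of $x'\beta_0(U)$ given $V=v$. Call its CDF $G_{x,v}(t)$.

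\textbf{Step 2: integrate out $v$ to remove the copula.} Since $x'\beta_0(\cdot)$ is monotone increasing,
\[
G_{x,v}(t)=F_{U|V}\big(g_x^{-1}(t)\mid v\big).
\]
Integrating over $v\in[0,1]$ with $f_V=1$ gives $\int_0^1 G_{x,v}(t)\,dv=F_U(g_x^{-1}(t))=g_x^{-1}(t)$, because $U$ is uniform. The marginal CDF of $x'\beta_0(U)$ is therefore identified for almost all $x$. Its quantile function is $x'\beta_0(u)$ for every $u$ by monotonicity.

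Given $x'\beta_0(u)$ on a set of $x$ with nonsingular $\E[XX']$ (Assumption~\ref{npx}(a)), $\beta_0(u)$ follows by least squares, $\beta_0(u)=\E[XX']^{-1}\E[X X'\beta_0(u)]$. Continuity from Assumption~\ref{npb} upgrades almost-every to every $u\in(0,1)$.

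\textbf{Step 3: recover the copula.} With $g_x$ known, $F_{U|V}(u|v)=G_{x,v}(g_x(u))$ is identified, and differentiating gives $f_{U|V}$ a.e. Strict monotonicity of $\beta_{0,p}$ ensures $g_x$ is strictly increasing for some $x$, so $g_x^{-1}$ is well defined.

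\textbf{Main obstacle.} The delicate step is Step 1. There we must separate $f_\varepsilon$ from the mixing distribution when that mixing distribution itself depends on $v$ and is unknown. The argument needs that conditioning on $v$ does not interfere with the polynomial-in-$x_p$ structure, which is why the common support Assumption~\ref{npc}(b) is used. It also needs that moment growth under Assumption~\ref{npe}(c), combined with boundedness of $B_k$, makes the distribution of $Y|x,v$ moment-determinate.

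I would first attempt the cleaner characteristic-function route. The idea is to consider $\log\phi_{Y|x,v}(s)$, whose $\varepsilon$ part $\log\phi_\varepsilon(s)$ is additive and free of $x$, and to identify it using a limiting or derivative argument in $x_p$. I would fall back on the moment recursion only if that fails.
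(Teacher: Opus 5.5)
\textbf{There is a genuine gap in Step~1.} That is exactly where the identifying power of the model lives.

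Write $A=\beta_{0,p}(U)$ and $B=x_{-p}'\beta_{0,-p}(U)$, so that given $(x,v)$ you have $Y=x_pA+(B+\varepsilon)$. The coefficient of $x_p^l$ in $\E[Y^{l+m}\mid x,v]$ is, up to a binomial factor, $\E[A^l(B+\varepsilon)^m\mid V=v]$. So the moments of $\varepsilon$ enter only through the sum $B+\varepsilon$, and differentiating in $x_p$ never separates them from the moments of $B$. Already at order two the $x_p$-free term is $\E[B^2\mid v]+\E[\varepsilon^2]$. Each later order brings a new unknown $\E[\varepsilon^q]$ together with new unknown moments of $B$, so the induction on $q$ does not close.

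Notice that Step~1 never uses strict monotonicity of $\beta_{0,p}$, and without it the conclusion is false. Take $X=(1,X_p)$ with $\beta_{0,p}$ constant and $U\ind V$. Let $\varepsilon=\varepsilon_1+\xi$, where $\xi$ is bounded, mean zero, independent, and has a nonvanishing characteristic function. Replacing the intercept function $b(\cdot)$ by the quantile function of $b(U)+\xi$, and $\varepsilon$ by $\varepsilon_1$, leaves $F_{Y\mid X,V}$ unchanged. Yet every ingredient you invoke still holds: the polynomial structure in $x_p$, $\E\varepsilon=0$, $\phi_\varepsilon\neq0$, and the moment bound.

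The characteristic-function variant has the same defect. $\partial_{x_p}$ removes $\log\phi_\varepsilon$ only by also discarding the $x_p$-free part of $\log\phi_{x'\beta_0(U)\mid v}$, which is exactly what it is confounded with. Since $X_p$ ranges only over a neighborhood, there is no limit to exploit.

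\textbf{The missing idea} is the one the paper extracts from $m'(0)=0$ together with the basis property of $\{\beta_{0,p}^k\}_{k\ge0}$ (the paper takes $p=1$).
\begin{itemize}
\item The $x_p^k$-coefficient of $\E[Y^k\mid x,v]$ gives $\E[A^k\mid v]$. Since $A$ is bounded, this determines the law of $A$ given $V=v$.
\item The $x_p^k$-coefficient of $\E[Y^{k+1}\mid x,v]$ gives $\E[(B+\varepsilon)A^k\mid v]=\E[BA^k\mid v]$, because $\E\varepsilon=0$ and $\varepsilon$ is independent of $(U,V)$.
\item Strict monotonicity of $\beta_{0,p}$ makes $B=h(A)$ with $h=x_{-p}'\beta_{0,-p}\circ\beta_{0,p}^{-1}$. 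Polynomials are dense in $L^2$ of the compactly supported law of $A$, so these moments pin down $h$.
\end{itemize}
This yields $G_{x,v}$ directly as the law of $x_pA+h(A)$ given $V=v$, without recovering $f_\varepsilon$ first. Then $\phi_\varepsilon$ comes last: divide $\phi_{Y\mid x,v}$ by the characteristic function of $x'\beta_0(U)$ given $v$, which is entire so its real zeros are isolated, and fill in by continuity.

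With Step~1 repaired this way, your Steps~2 and~3 are sound. They also give a tidier finish than the paper's. Integrating the conditional CDF over $v$ and inverting a monotone quantile function replaces the paper's detour: integrating out $v$ at the level of moments of $\beta_{0,1}(U)$, introducing a measure-preserving rearrangement $\pi$, and then removing it by monotonicity.
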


See Appendix~\ref{pf_thm1} for the proof. As suggested by \citet{hausman2021errors}, monotonicity of $x'\beta_0(\cdot)$, together with independence of $\varepsilon$, is critical for securing nonparametric identification. Although the unobservable measurement error effectively blur each quantile curve through a common kernel, the quasi-linear quantile structure, combined with the presence of one continuous regressor that enters strictly monotonically, transforms this into an identifiable deconvolution problem. In our setting, the common support assumption allows us to integrate the copula weights $f_{U|V}$ over $v\in(0,1)$. Because both $U$ and $V$ have uniform marginal distributions, this integration effectively collapses the problem to the exogenous case. In other words, identification of $(\beta_0(\cdot),f_\varepsilon)$ does not depend on specifying a parametric form for $F_{U|V}$. Once $\beta_0(\cdot)$ is identified, the copula $F_{U|V}$ can then be recovered.

\begin{remark}
	The common support assumption in Assumption~\ref{npc}(b) parallels that in \citep{imbens2009identification}. In a triangular system, where $V| X=x=(x_1,z_1)$ equals $F_{X_1|Z}(x_1,z_1,Z_2)$, this assumption entails that $Z_2$ not only shifts $V$ but also varies sufficiently to span its full support for each $x$. This is strong: it combines a rank-type condition with a full-support condition and implicitly entails at least one continuous component in $Z_2$. In \citep{imbens2009identification}, failure of this assumption still allows one to bound the object of interest. Here, by contrast, we rely on it to recover the uniform marginal of $U$ via integration over the full support of $v$; hence, we cannot allow $\mathcal{V}(x)\subsetneq[0,1]$. Nonetheless, as long as the assumption holds on a subset $\mathcal{X}_c=\{x\in\mathcal{X}:\mathcal{V}(x)=\mathcal{V}=[0,1]\}$, identification of $(\beta_0(\cdot),f_\varepsilon,f_{U|V})$ remains feasible by restricting $\mathcal{X}$ to $\mathcal{X}_c$. Empirically, the common support condition can be in principle checked by examining the data, thus informing the choice of $\mathcal{X}_c$. In that case, $\beta_0(\cdot)$ remains identifiable and consistently estimable.
\end{remark}


\section{Estimation}
\label{sec:est}

Based on the nonparametric identification result, this section describes our estimator. We adopt a two-step estimation strategy based on independent and identically distributed (i.i.d.) data $\{Y_i,X_i,Z_i\}_{i=1}^n$. The first step estimates the control variable $V_i$, yielding $\hat V_i$. The second step plugs in the generated variable $\hat V_i$ to construct a sieve maximum likelihood estimator (SMLE). While Theorem~\ref{thm1} demonstrates nonparametric identification of $(\beta_0(\cdot),f_{0,\varepsilon},f_{0,U|V})$, for estimation, we assume that $f_{0,\varepsilon}$ and $f_{0,U|V}$ are known up to a finite set of parameters,\footnote{In practice, the number of mixture components in each distribution can be increased flexibly to ensure adequate approximation.} and construct a semi-nonparametric estimator targeting $\beta_0(\cdot)$. Specifically, we impose the following assumptions on $(f_{0,\varepsilon},f_{0,U|V})$ in addition to Assumptions~\ref{npe} and \ref{npc}.

\begin{assum}\label{parpro}
	Parametric properties of EIV and copula: Hereafter, we use $f_c$ to denote the conditional copula  density $f_{U|V}$. The pair $(f_\varepsilon,f_c)$ is parametrized as $(f_\varepsilon(\cdot;\sigma), f_c(\cdot;\gamma))$, with true values $(f_{0,\varepsilon},f_{0,c})$ being $(f_\varepsilon(\cdot;\sigma_0), f_c(\cdot;\gamma_0))$.
	\begin{enumerate}[label=(\alph*)]
		\item $\sigma_0\in\Sigma^{\mathrm{o}}\subseteq\R^{d_\sigma}$, $\gamma_0\in\Gamma^{\mathrm{o}}\subseteq\R^{d_\gamma}$, where both $\Sigma$ and $\Gamma$ are compact.
		
		\item $f_{\varepsilon}$ is uniformly bounded and strictly positive on the real line. It is also twice differentiable in both $\varepsilon$ and $\sigma$ a.e. with uniformly bounded derivatives up to the second order.	
			
		\item There exists a sequence $\tau_n \to 0$ with $\tau_n \in (0,1/2)$ such that $(U,V)\in \mathcal{T}_n := [\tau_n,1-\tau_n]^2$ with probability approaching one (w.p.a.1). On $\mathcal{T}_n$, for all $\gamma\in\Gamma$, $f_c(\cdot,\cdot;\gamma)$ is twice continuously differentiable, and its derivatives are $L^p$-integrable up to the second order. Moreover, there exists positive $c_n \to 0$ such that $\inf_{\gamma\in\Gamma}\inf_{(u,v)\in\mathcal T_n} f_c(u,v;\gamma) \ge c_n$.

		\item There exists a uniform constant $\bar{C}>0$ such that, for all $\sigma\in\Sigma$ and 
      $\gamma\in\Gamma$, $\E[|\log f_\varepsilon(\varepsilon;\sigma)\cdot f_c(u|v;\gamma)|]<\bar{C}$. Moreover, there exists $C>0$ such that, at the true value $(\beta_0(\cdot),\sigma_0,\gamma_0)$, $\E\left[\|x\frac{\partial_\varepsilon f_\varepsilon(y-x'\beta_0(\tau);\sigma_0)}{f(y|x,v;\beta_0)}\|^4\right]<C$ and $\E\left[ \|\frac{\int f_\varepsilon(y-x'\beta_0(u);\sigma_0)\partial_\gamma f_c(u|v;\gamma_0)du}{f(y|x,v;\beta_0)}\|^4\right]<C$.	
		
		\item For any $\sigma\in\Sigma$, $l>0$, $\int_{-l}^l|\phi_\varepsilon(s;\sigma) - \phi_\varepsilon(s;\sigma_0)|^2ds\ge C_l\|\sigma-\sigma_0\|_2^2$ for some constant $C_l>0$, where $\phi_\varepsilon(s;\sigma)$ denotes the characteristic function of $\varepsilon$ given density $f_\varepsilon(\,\cdot\,;\sigma)$.
		
		\item For all $\beta\in\mathcal{N}(\beta_0)$ w.p.a.1, $\gamma\in\Gamma,l>0$, and some constant $c_l>0$, 
		\begin{align*}
			\E_{x,v}\left[\int_{-l}^l \left(\int_0^1\exp(isx'\beta(u))(f(u|v;\gamma)-f(u|v;\gamma_0))du\right)^2ds\right]\ge c_l\|\gamma-\gamma_0\|_2^2,
		\end{align*} that is, $\E_{x,v}[\int_{-l}^l |\phi_{x'\beta,\gamma}(s|x,v)-\phi_{x'\beta,\gamma_0}(s|x,v)|^2 ds]\ge c_l\|\gamma-\gamma_0\|_2^2.$
	\end{enumerate}
\end{assum}

Assumption~\ref{parpro} (b) holds for a broad class of mean-zero error distributions in the exponential family. Condition (c) accommodates many commonly used copulas, including Gaussian, Student-$t$, Clayton, and Frank. Specifically, the trimming sequence $\{c_n\}$ imposes a lower bound condition in each finite sample as the copula density approaches the boundary of the unit square. Condition (d) ensures uniformly bounded fourth moments of the score functions, so that the information matrix is locally nonsingular when applying the triangular central limit theorem. Finally, following \citet{hausman2021errors}, conditions (e)-(f) ensure local identification of $(\sigma,\gamma)$ through characteristic-function-based transforms.

Under such parametrization, we adopt the following parameter space and sieve spaces for approximation. Let $\theta_0:=(\beta_0(\cdot),\sigma_0,\gamma_0)\in\Theta_0=M\times\Sigma\times\Gamma$, for $(M,\Sigma,\Gamma)$ satisfying Assumptions~\ref{npb}-\ref{parpro}. Moreover, for each $k\in\{1,\dots,d_x\}$, let $M_k\subseteq \Lambda_{c_k}^{p_k}([0,1])$ with $p_k>\frac12$.\footnote{Following \citet{chen2007large}, $\Lambda^p_c(\mathcal{X})$ denotes the class of all $p$-smooth real-valued functions on $\mathcal{X}$. A real-valued function $h$ is $p$-smooth $(p=m+\gamma)$ if it is $m$ times continuously differentiable on $\mathcal{X}$ and its differential operator $D^\alpha h$ satisfies a H\"{o}lder condition with exponent $\gamma$ for all $\alpha$ with $[\alpha]=m$ and a dilation $c$.} For any $\theta,\tilde\theta\in \Theta_0$, define the metric $d(\theta,\tilde\theta):=\sqrt{\|\beta-\tilde\beta\|_{2,2}^2+\|\sigma-\tilde\sigma\|_2^2+\|\gamma-\tilde\gamma\|_2^2}$, where $\|\cdot\|_2$ denotes the Euclidean norm and $\|\cdot\|^2_{2,2}$ is the $l_2$-norm of $\|\beta(u)-\tilde\beta(u)\|^2_2$ across $u\in[0,1]$. The sieve spaces are defined as follows.

\begin{assum}\label{sieve}
	Sieve spaces: Define $\theta:=(\beta(\cdot),\sigma,\gamma)\in\Theta_n=M_n\times\Sigma\times\Gamma$, where:	\begin{enumerate}[label=(\alph*)]
		\item For all $x\in\mathcal{X}$, $u\mapsto x'\beta(u)$ is monotonically increasing on $u\in[0,1]$.
		\item Each component $M_{n k}$ is a spline space $\mathrm{Spl}(r,\tilde{J}_n)$ with $r\ge \max_{k}[p_k]+1$, so that 
		 \begin{align*}
		 	\beta_k(u)=\sum_{j=0}^{r-1} b_{j,k}u^j +\sum_{j=1}^{\tilde{J}_n}b_{r+j,k}(u-\tau_j)^{r-1}\1\{u\ge \tau_j\}=\sum_{j=0}^{r+\tilde{J}_n}b_{j,k}S_j(u),
		 \end{align*} where $0=\tau_0<\tau_1<\cdots<\tau_{\tilde{J}_n}<\tau_{\tilde{J}_{n+1}}=1$ partition $[0,1]$ into $\tilde{J}_n+1$ subintervals.
	\end{enumerate}
\end{assum} 

By definition, the sieve spaces satisfy $\Theta_n\subseteq\Theta_{n+1}\subseteq \Theta_0$ with each $\Theta_n$ compact under the metric $d(\cdot,\cdot)$ and $\inf_{\theta\in\Theta_n} d(\theta,\theta_0)\to0$. Define $\theta_J^*:=\pi_n\theta_0=(\beta_J^*(\cdot),\sigma_0,\gamma_0)$, where $\pi_n$ denotes the projection mapping from $\Theta_0$ to $\Theta_n$. Denote the sieve dimension as $J_n=\tilde{J}_n+r+1$. For each $k=1,\dots,d_x$, the sieve approximation is $\beta_k(u)=b_k' S(u)$, $b_k\in\R^{{J}_n}$, with basis vector $S(u)=(S_1(u),\dots,S_{{J}_n}(u))'$. Let $\bb := (b_1',\dots,b_{d_x}') \in \mathbb{R}^{d_x \times{J}_n}$ denote the stacked coefficient vectors, then $x'\beta(u) = \sum_{k=1}^{d_x} x_k b_k' S(u)=x'\cdot \bb \cdot S(u)$. The directional derivative of $f(y|x,v;\theta)$ w.r.t $b_k$ along direction of $S(u)$ is given by
\begin{align}\label{hadamard}
	\partial_{b_k}f(y|x,v;\theta):= \int_0^1 (-x_k)\cdot \partial_\varepsilon f(y-x'\bb S(u);\sigma)f_c(u|v;\gamma)S(u)du.
\end{align} Let $\partial_{{b}} f$ denote the stacked gradient in $\mathbb{R}^{d_x {J}_n}$. The score function is given by
\begin{align*}
	\partial_\theta \log f(y|x,v;\theta)=
\frac{1}{f(y|x,v;\theta)}
\begin{pmatrix}
\partial_{b} f \\[0.5ex]
\partial_\sigma f \\[0.5ex]
\partial_\gamma f
\end{pmatrix},
\end{align*} with the corresponding information matrix defined as $\mathcal{I}:=\E[(\partial_\theta\log f)(\partial_\theta\log f)']$ and the Hessian matrix as $\mathcal{H}:=\E[\partial_{\theta\theta'}\log f]$. 

Our estimation proceeds in two steps. The first step constructs the control variable $V=F_{X_1|Z}=F_\eta$. Following \citet{hahn2013asymptotic}, we consider both  parametric and nonparametric approaches for this step, which clarifies how first-stage estimation error propagates into the second-stage criterion, even when the first-stage estimator achieves $\sqrt{n}$-consistency. In the parametric specification, suppose
\begin{align}\label{first_p}
	X_{1}=Z'\pi_0+\eta,\qquad \eta \ind Z, \quad \eta \sim F_\eta,
\end{align} where $F_\eta$ is known. Let $\hat\pi$ denote the OLS estimator of $\pi_0$, satisfying $\hat\pi-\pi_0=O_p(n^{-1/2})$. The control variable is then constructed as
\begin{align}\label{v_p}
	\hat{V}_i\equiv F_\eta(X_{1i},Z_i;\hat\pi)=F_\eta(X_{1}-Z_i'\hat\pi).
\end{align} Such parametric specifications are often justified when researchers are willing to impose structural assumptions; see, for example, \citet{petrin2010control}. More generally, we estimate the control variable nonparametrically by approximating the conditional distribution function $F_{X_1|Z}$. Following \citet{imbens2009identification}, we employ series estimators, which are less sensitive to edge effects that often present in triangular models.\footnote{In particular, the joint density of $(X_1,V)$ may approach zero near the boundary of the support of $V$, thereby limiting the amount of information available in the tails \citep{imbens2009identification}.} Let $p^{K_n}(z) = (p_{1K_n}(z),\dots,p_{K_nK_n}(z))'$ denote a $K_n\times1$ vector of spline basis functions, then
\begin{gather}\label{v_np}
\hat{V}_i\equiv \hat{F}_{X_1\mid Z}(X_{1i},Z_i)=p^{K_n}(Z_i)' \hat a_n(X_{1i}), \\
\hat a_n(x) =\left(\sum_{j=1}^n p^{K_1}(Z_j)p^{K_1}(Z_j)' \right)^{-} \sum_{j=1}^n p^{K_1}(Z_j)\mathbf{1}\{X_{1j}\le x\},\nonumber
\end{gather} where $A^{-}$ denotes a generalized inverse of the matrix $A$.

The second stage implements sieve maximum likelihood estimation by replacing the unobserved control variable $V$ in the conditional log-likelihood with its estimate $\hat{V}$. Denote $W_i=(Y_i,X_i',V_i)'$ and $\hat{W}_i=(Y_i,X_i',\hat{V}_i)'$. Let $l(W_i;\theta)=\log f(Y_i|X_i,V_i;\theta)$. The expected conditional log-likelihood function is 
\begin{equation}
Q(\theta)=\E[l(W_i;\theta)]=\E\left[\log \int_0^1 f_\varepsilon(Y_i-X_i'\beta(u);\sigma)f_c(u|V_i;\gamma)du\right].
\end{equation} By Theorem~\ref{thm1}, the true parameter $\theta_0$ uniquely maximizes $Q(\theta)$ over $\Theta_0$. Let $\hat{Q}(\theta)=\E[l(\hat{W}_i;\theta)]$, with $Q_n(\theta)$ and $\hat{Q}_n(\theta)$ be the corresponding empirical counterparts. The feasible estimator $\hat\theta_n$ is then defined to maximize $\hat{Q}_n(\theta)$ over $\Theta_n$, i.e.,\footnote{For any compact $\hat{\mathcal{V}}_n\subseteq\mathcal{V}$, this integral can be well approximated using standard quadrature rules such as the Gauss-Legendre or Tanh-sinh quadrature.}
\begin{equation}\label{e6}
\hat\theta_n\in\arg\max_{\theta\in\Theta_n}	\frac1n\sum_{i=1}^n\left[\log \int_0^1 f_\varepsilon(y_i-x_i'\beta(u);\sigma)f_c(u|\hat{v}_i;\gamma)du\right].
\end{equation}


\section{Asymptotic Properties}
\label{sec:asymp}

This section establishes consistency and asymptotic normality of the two-step sieve maximum likelihood estimator (2SSMLE), thus permitting nonparametric bootstrap for inference. We distinguish between two cases: (i) the first step is parametric, yielding $\hat\theta_n^{p}$ in Section \ref{subsec:asymp_p}; (ii) the first step is nonparametric, yielding $\hat\theta_n^{np}$ in Section \ref{subsec:asymp_np}. 

\subsection{Parametric First Stage}
\label{subsec:asymp_p}

We begin by imposing some sufficient regularity conditions on the parametric first stage.

\begin{assum}\label{firstasmp_p}
	First Stage Regularity: With i.i.d. $\{X_{1i},Z_i\}_{i=1}^n$,
	\begin{enumerate}[label=(\alph*)]
		\item $\E[\|Z_i\|^4]<\infty$, $\E[X_{1i}^4]<\infty$, and $\E[Z_iZ_i']$ is nonsingular.
		\item CDF $F_\eta$ is continuously differentiable with uniformly bounded continuous density $f_\eta$.
	\end{enumerate} 
\end{assum}

Under the preceding conditions, the generated control variable in (\ref{v_p}) satisfies $|\hat{V}_i-V_i|=O_p(n^{-1/2})$, i.e., it is pointwise $\sqrt{n}$-consistent. Combined with consistency of the sieve approximation in the second stage when $V$ is observed, this yields consistency of $\hat\theta_n^{p}$.

\begin{thm}
	\label{thm2} 
	\textbf{Consistency}\\ 
	Suppose the conditions of Theorem~\ref{thm1} and Assumptions~\ref{parpro}–\ref{firstasmp_p} hold. If the sieve dimension satisfies $J_n \to \infty$ and $J_n / n \to 0$, then the estimator $\hat\theta_n^{p}$ is consistent, that is, $d(\hat\theta_n^{p}, \theta_0) \xrightarrow{p} 0$.
\end{thm}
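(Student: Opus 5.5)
We prove consistency by verifying the conditions of the standard consistency theorem for sieve extremum estimators (Chen, 2007, Theorem 3.1), applied to the feasible criterion $\hat Q_n(\theta)$ over $\Theta_n$. Under the metric $d(\cdot,\cdot)$ these conditions are:
\begin{enumerate}[label=(\roman*)]
\item identification, meaning $Q(\theta)$ is uniquely maximized at $\theta_0$ on $\Theta_0$ and $Q$ is upper semicontinuous;
\item the sieve approximation property $d(\pi_n\theta_0,\theta_0)\to0$;
\item compactness of each $\Theta_n$;
\item uniform convergence $\sup_{\theta\in\Theta_n}|\hat Q_n(\theta)-Q(\theta)|\xrightarrow{p}0$.
\end{enumerate}
Condition (i) follows from Theorem~\ref{thm1}, since a parameter yielding the same $F_{Y|X,V}$ must equal $\theta_0$. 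Together with Jensen's inequality, this gives $Q(\theta)<Q(\theta_0)$ for $\theta\neq\theta_0$. Continuity of $Q$ in $d$ follows from dominated convergence, using the bounded derivatives of $f_\varepsilon$ in Assumption~\ref{parpro}(b) and the integrability bound in (d). Conditions (ii) and (iii) were noted right after Assumption~\ref{sieve}: splines approximate Hölder functions with $p_k>1/2$, the coefficients are bounded by Assumption~\ref{npb}, and $\Sigma,\Gamma$ are compact.

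The substantive work is (iv), which we split as
\[
\sup_{\Theta_n}|\hat Q_n-Q|\le \sup_{\Theta_n}|\hat Q_n(\theta)-Q_n(\theta)|+\sup_{\Theta_n}|Q_n(\theta)-Q(\theta)|.
\]

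For the second term (infeasible, $V$ observed), we bound the $L^1$ bracketing entropy of $\{l(\cdot;\theta):\theta\in\Theta_n\}$. Because $\beta$ enters through a linear spline expansion with $d_xJ_n$ bounded coefficients, and $(\sigma,\gamma)$ are finite-dimensional, the map $\theta\mapsto l(w;\theta)$ is Lipschitz with envelope $M(w)$, built from score-type ratios $\|x\|\,|\partial_\varepsilon f_\varepsilon|/f$ and $|\partial_\gamma f_c|/f$. The moment bounds in Assumption~\ref{parpro}(d) and Assumption~\ref{npx}(a) make $\E M(W)^2$ finite. Hence $\log N_{[\,]}(\delta)\lesssim J_n\log(1/\delta)$, and a ULLN for growing sieves (Chen, 2007, Lemma 3.1 / Pollard's inequality) gives $O_p(\sqrt{J_n\log n/n})=o_p(1)$, which holds provided $J_n/n\to0$, up to a log factor absorbed if needed.

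For the first term, a mean-value expansion in $v$ gives
\[
|\hat Q_n-Q_n|\le \frac1n\sum_i \sup_{\theta,\tilde v}\Big|\frac{\int f_\varepsilon(\cdot)\,\partial_v f_c(u|\tilde v;\gamma)\,du}{f(Y_i|X_i,\tilde v;\theta)}\Big|\,|\hat V_i-V_i|.
\]
Under Assumption~\ref{firstasmp_p}, $|\hat V_i-V_i|\le \sup f_\eta\,\|Z_i\|\,\|\hat\pi-\pi_0\|$, so this term is $O_p(n^{-1/2})$ times $\frac1n\sum_i\|Z_i\|\,D(W_i)$. Here $D$ is the uniform envelope of the $v$-score, and the average is $O_p(1)$ by Cauchy--Schwarz if $\E D^2<\infty$.

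We expect this last step to be the main obstacle. The denominator $f$ and the copula derivatives degenerate near the boundary of $[0,1]^2$, and Assumption~\ref{parpro}(c) only controls $f_c\ge c_n$ on the trimmed set $\mathcal T_n$. The plan is to restrict to the event $\{(U_i,V_i,\hat V_i)\in\mathcal T_n\}$, which has probability approaching one. On that event we bound the ratio by $c_n^{-1}$ times the $L^p$-integrable derivatives, and we require $c_n^{-1}n^{-1/2}\to0$, imposing this rate on $c_n$ if it is not implicit. On the complement we use the integrability bound $\E|\log f_\varepsilon\cdot f_c|<\bar C$ together with the vanishing probability. Once (i)–(iv) hold, the standard argument delivers $d(\hat\theta_n^{p},\theta_0)\xrightarrow{p}0$.
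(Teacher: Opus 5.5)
Your architecture matches the paper's. You use Theorem~\ref{thm1} plus the information inequality for identification. You feed a covering bound of order $J_n\log(1/\delta)$ into Pollard's uniform law for $\sup_{\Theta_n}|Q_n-Q|$, and you control the plug-in error $\sup_{\Theta_n}|\hat Q_n-Q_n|$ with a mean-value bound. The paper then writes out steps (a)--(d) by hand rather than citing Chen's Theorem 3.1.

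The gap is the plug-in step you flagged. As written, it closes only under the extra hypothesis $c_n^{-1}n^{-1/2}\to0$, which is not among the theorem's assumptions, so you would be proving a weaker statement. The trimming sketch also fails to deliver even that weaker result, for two reasons.
\begin{itemize}
\item Restricting to the event $V_i,\hat V_i\in[\tau_n,1-\tau_n]$ controls only the $v$-argument. The likelihood integrates over all $u\in[0,1]$, including $u$ near $0$ and $1$, where Assumption~\ref{parpro}(c) gives neither the lower bound $c_n$ nor control of $\partial_v f_c$. The realized $U_i$ never enters $l$.
\item Bounding $\int_0^1 f_\varepsilon(y-x'\beta(u))\,|\partial_v f_c(u|\tilde v;\gamma)|\,du$ over $\int_0^1 f_\varepsilon(y-x'\beta(u))\,f_c(u|\tilde v;\gamma)\,du$ by $c_n^{-1}$ times a norm of $\partial_v f_c$ leaves an extra factor $\sup_u f_\varepsilon(y-x'\beta(u))/\int_{\tau_n}^{1-\tau_n}f_\varepsilon(y-x'\beta(u))\,du$. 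That factor is unbounded in $(y,x)$.
\end{itemize}

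The paper handles this step by asserting that Assumption~\ref{parpro} yields a Lipschitz constant $L'(W)$ for $v\mapsto l(W;\theta,v)$, uniform over $\Theta_n$. It then applies Cauchy--Schwarz with $n^{-1}\sum_i(\hat V_i-V_i)^2=O_p(n^{-1})$. Since consistency needs only $o_p(1)$, no trimming rate enters.

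If you want to derive such an envelope rather than posit it, the loss in your bound is $|\hat V_i-V_i|\le\sup f_\eta\,\|Z_i\|\,\|\hat\pi-\pi_0\|$, which detaches $f_\eta$ from $\partial_v l$. Expand in $\pi$ instead: $l(W;\theta,\hat\pi)-l(W;\theta,\pi_0)=-\partial_v l(W;\theta,\bar V)\,f_\eta(X_1-Z'\bar\pi)\,Z'(\hat\pi-\pi_0)$ with $\bar V=F_\eta(X_1-Z'\bar\pi)$. The relevant envelope is then that of the $\eta$-scale score $f_\eta\,\partial_v l$, in which the boundary blow-up of $\partial_v f_c$ is offset.

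For instance, take a Gaussian copula with $V=\Phi(\eta)$ and $\eta\sim\mathcal N(0,1)$, as in the simulation design. Then $\int_0^1|\partial_v f_c(u|v)|\,du$ is proportional to $1/\phi(\Phi^{-1}(v))$, which is not even integrable over $v\in(0,1)$. By contrast, $\int_0^1|\partial_\eta f_c|\,du=\sqrt{2/\pi}\,|\rho|/\sqrt{1-\rho^2}$ for every $\eta$. A condition such as $\E[\|Z\|\sup_{\theta\in\Theta_0,\,\pi\in\mathcal N(\pi_0)}|f_\eta\,\partial_v l|]<\infty$ then makes the plug-in term $O_p(n^{-1/2})$ with no restriction on $c_n$.

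Two smaller points:
\begin{itemize}
\item Your stated rate $O_p(\sqrt{J_n\log n/n})$ would need $J_n\log n/n\to0$. The paper's fixed-$\delta$ argument (log covering number $o(n)$ for each fixed $\delta$) needs only $J_n/n\to0$.
\item Upgrading unique maximization plus upper semicontinuity to a well-separated maximum requires $\Theta_0$ itself to be compact under $d$. The paper gets this from Arzel\`a--Ascoli; compactness of each $\Theta_n$ is not enough.
\end{itemize}
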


The proof is given in Appendix~\ref{pf_thm2}. Next, we establish the $L_2$-convergence rate of $\hat\theta_n^{p}$ to the finite-dimensional projection of the true parameter, i.e., $\|\hat\theta_n^{p}-\theta_J^*\|_2^2$. This rate determines how quickly the sieve dimension $J_n$ may grow in the presence of an ill-posed inverse problem. As noted by \citet{chen2006efficient}, the convergence rate of sieve MLE depends on the smoothness level of the likelihood function. Following \cite{fan1991optimal}, we characterize the smoothness of the error distribution through the tail behavior of its characteristic function $\phi_\varepsilon(s)$ as $|s|\to\infty$.\footnote{Although in our model the error distribution is convolved with a copula component, the copula density is uniformly bounded over the trimmed domain $\mathcal{V}_n$, thus does not affect the effective smoothness classification.} Namely, the distribution is \textit{supersmooth} of order $\lambda>0$ if $|\phi_\varepsilon(s)|\asymp |s|^\lambda\exp(-|s|^\lambda/\varsigma)$ as $s\to\infty$, for some constant $\varsigma > 0$. It is \textit{ordinary smooth} of order $\lambda>0$ if $|\phi_\varepsilon(s)|\asymp |s|^{-\lambda}$ as $s\to\infty$. Although the smoothness of the error distribution does not affect consistency of $\hat\theta_n^p$, we require it to be ordinary smooth to establish the asymptotic normality result. While this condition rules out the exactly normal distribution (which is supersmooth of order 2), \citeauthor{hausman2021errors} (\citeyear{hausman2021errors}) demonstrate that even small perturbations from normality typically render the characteristic function ordinary smooth of finite order. Consequently, a sieve based on mixtures of ordinary-smooth densities remains sufficiently flexible to approximate supersmooth distributions while effectively avoiding ill-posedness in finite samples.

We introduce one final assumption, adapted from \cite{hausman2021errors}, which facilitates the derivation of convergence rate for the finite-dimensional distributional parameters.

\begin{assum}\label{cf}
	Variation on characteristic function:\\ Let $\phi_{x\beta,\gamma}(s|x,v)=\int_0^1\exp(isx'\beta(u))f_c(u|v;\gamma)du$, which is the characteristic function of $x'\beta$ conditional on $(x,v)$. There exists a local neighborhood $\mathcal{N} \subseteq \mathcal{X} \times \mathcal{V}$ and a constant $c > 0$ such that, for every $\beta \in M$ and every $s \in [-l,l]$ with $l > 0$ fixed,
	\[
	Var\left(\left|\frac{\phi_{x\beta,\gamma_0}(s|x,v)}{\phi_{x\beta_0,\gamma_0}(s|x,v)}\right|\right)\ge c\cdot \E\left[\left|\frac{\phi_{x\beta,\gamma_0}(s|x,v)-\phi_{x\beta_0,\gamma_0}(s|x,v)}{\phi_{x\beta_0,\gamma_0}(s|x,v)}\right|^2\right],
	\] where the variance and expectation are taken with respect to the joint distribution of $(x,v)$ restricted to $\mathcal{N}$ at the true value $\gamma_0$.
\end{assum} 

Assumption~\ref{cf} requires that, when weighted by the true copula density, there is sufficient variation in $(x,v)$ within the local neighborhood so that the characteristic function is nondegenerate. We now establish the $L_2$-convergence rate of $\|\hat\theta_n^p - \theta_J^*\|_2^2$ as follows.

\begin{lem}\label{lem1}
    \textbf{Convergence Rates}\\
	Suppose the conditions of Theorem~\ref{thm2} and Assumption~\ref{cf} are satisfied. Assume that the error density $f_\varepsilon$ is ordinary smooth of order $\lambda>0$ and that $J_n^{2r+2}/n \to \infty$. Then the estimator $\hat\theta_n^p=(\hat\beta_n(\cdot),\hat\sigma,\hat\gamma)$ obeys the following convergence rates:
	\begin{enumerate}[label=(\alph*)]
		\item $\|\hat\sigma-\sigma_0\|_2^2=O_p\left( \max\left\{\frac{\log n}{n},\frac{\delta\sqrt{-\log \delta}}{\sqrt{n}}\right\}\right)$, where $\delta=\max\{\|\hat\beta_n-\beta_J^*\|_\infty,\|\hat\gamma-\gamma_0\|_2\}$,
		\item $\|\hat\beta_n-\beta_J^*\|_2^2=O_p(J_n^{\frac{2\lambda}{\lambda+1}}n^{-\frac{1}{2(\lambda+1)}})$, and
		\item $\|\hat\gamma-\gamma_0\|_2^2=O_p(n^{-1/2})$, where $\theta_J^*=(\beta_J^*(\cdot),\sigma_0,\gamma_0)$ denotes the projection $\pi_n\theta_0\in\Theta_n$.
	\end{enumerate} 
\end{lem}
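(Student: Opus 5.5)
The plan is to follow the rate argument of \citet{hausman2021errors}, with two additions: the copula weights $f_c(u|v;\gamma)$ and the generated regressor $\hat V_i$. The starting point is the basic inequality for the sieve MLE. Since $\hat\theta_n^p$ maximizes $\hat Q_n$ over $\Theta_n$ and $\theta_J^*\in\Theta_n$, we have $\hat Q_n(\hat\theta_n^p)\ge \hat Q_n(\theta_J^*)$. I would first replace $\hat Q_n$ by $Q_n$. A mean-value expansion in $v$ does this: by Assumption~\ref{parpro}(c),(d) the derivative $\partial_v f_c$ is integrable on $\mathcal T_n$ and the density is bounded below by $c_n$, and by the parametric first stage $|\hat V_i-V_i|=O_p(n^{-1/2})$. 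Hence $\sup_{\theta\in\Theta_n}|\hat Q_n(\theta)-Q_n(\theta)|=O_p(n^{-1/2})$, and this error is absorbed in the stated rates. Next I would pass from the empirical to the population criterion with a uniform entropy bound for the class $\{l(W;\theta):\theta\in\Theta_n,\ d(\theta,\theta_J^*)\le\delta\}$. Because of the spline structure (Assumption~\ref{sieve}) and the bounded derivatives of $f_\varepsilon$, this class has bracketing entropy of order $J_n\log(1/\delta)$. A maximal inequality then gives
\[
Q(\theta_J^*)-Q(\hat\theta_n^p)\lesssim O_p\!\left(\frac{\log n}{n}\right)+O_p\!\left(\frac{\delta\sqrt{-\log\delta}}{\sqrt n}\right)+o_p(\cdot),
\]
with $\delta$ as in part (a). The requirement $J_n^{2r+2}/n\to\infty$ is used to keep the sieve bias $Q(\theta_0)-Q(\theta_J^*)$ of no larger order than these stochastic terms.

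Next I would convert the likelihood gap into a distance. Since $Q(\theta_J^*)-Q(\theta)$ is essentially a Kullback--Leibler divergence, it bounds the squared Hellinger, and hence $L_2$, distance between the conditional densities $f(\cdot|x,v;\theta)$ and $f(\cdot|x,v;\theta_J^*)$. I would move to the Fourier domain using Plancherel. The characteristic function of $Y|X,V$ under $\theta$ factorizes as $\phi_\varepsilon(s;\sigma)\,\phi_{x\beta,\gamma}(s|x,v)$, by Proposition~\ref{prop1}. Restricting to $s\in[-l,l]$ then gives
\[
\E_{x,v}\!\int_{-l}^{l}\bigl|\phi_\varepsilon(s;\hat\sigma)\phi_{x\hat\beta,\hat\gamma}-\phi_\varepsilon(s;\sigma_0)\phi_{x\beta^*_J,\gamma_0}\bigr|^2ds\lesssim Q(\theta_J^*)-Q(\hat\theta_n^p).
\]
The three parameters are separated from this one display as follows. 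For $\sigma$, divide by $\phi_{x\beta_0,\gamma_0}$ and use Assumption~\ref{cf}. The ratio $\phi_{x\hat\beta,\gamma_0}/\phi_{x\beta_0,\gamma_0}$ varies with $(x,v)$, whereas $\phi_\varepsilon$ does not depend on $(x,v)$. This lets me bound $\int_{-l}^l|\phi_\varepsilon(s;\hat\sigma)-\phi_\varepsilon(s;\sigma_0)|^2ds$, and Assumption~\ref{parpro}(e) then yields (a). For $\gamma$, apply Assumption~\ref{parpro}(f) on the fixed window $[-l,l]$, after using the fact that $\phi_\varepsilon\ne0$ is bounded away from zero there. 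This yields $\|\hat\gamma-\gamma_0\|_2^2\lesssim$ the likelihood gap. Since $\delta$ is bounded, that gap is $O_p(n^{-1/2})$, which gives (c).

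For $\beta$, the difficulty is the ill-posedness, and I expect this to be the main obstacle. The Fourier-domain bound controls $\phi_{x\hat\beta,\gamma_0}-\phi_{x\beta_J^*,\gamma_0}$ only after multiplication by $\phi_\varepsilon$. Under ordinary smoothness $|\phi_\varepsilon(s)|\asymp|s|^{-\lambda}$, so deconvolving over $|s|\le l_n$ costs a factor $l_n^{2\lambda}$. The high-frequency remainder is bounded by the smoothness of $u\mapsto x'\beta(u)$ within $\mathrm{Spl}(r,\tilde J_n)$. Monotonicity makes $x'\beta(U)$ a quantile function, so the $L_2$ distance between quantile functions is controlled by the Wasserstein distance between the induced distributions. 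Spline derivative bounds then give a remainder of order $J_n^{2\lambda}/l_n^{2}$-type. Balancing $l_n^{2\lambda}\, n^{-1/2}$ against the tail term and choosing $l_n\asymp n^{1/(4(\lambda+1))}$-type should produce $J_n^{2\lambda/(\lambda+1)}n^{-1/(2(\lambda+1))}$. Getting the exponents to match exactly, while handling the copula weight through the lower bound $c_n$ on $\mathcal T_n$, is the delicate part. I would first carry this out with $\gamma=\gamma_0$ fixed and then absorb the $\gamma$ error using (c).
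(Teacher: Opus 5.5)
Your skeleton matches the paper's: a basic inequality in which the first stage contributes $O_p(n^{-1/2})$, a maximal inequality, Pinsker/KL, the factorization $\phi_{Y|X,V}=\phi_\varepsilon\,\phi_{x\beta,\gamma}$, the Lemma~4 argument of \citet{hausman2021errors} with Assumption~\ref{cf} for (a), Assumption~\ref{parpro}(f) on a fixed window for (c), and frequency truncation for (b). The genuine gap is the step that separates $\gamma$ from your single joint display. Assumption~\ref{parpro}(f) lower-bounds the change in $\phi_{x\beta,\gamma}$ only when $\gamma$ moves at a \emph{fixed} $\beta$. In your display $\hat\beta_n\neq\beta_J^*$ at the same time, so to isolate the $\gamma$-part you must remove $\phi_\varepsilon(s;\hat\sigma)\bigl(\phi_{x\hat\beta_n,\gamma_0}-\phi_{x\beta_J^*,\gamma_0}\bigr)$. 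On $[-l,l]$ that term is controlled only by $\hat\beta_n-\beta_J^*$. Nothing in the assumptions rules out partial cancellation between the $\beta$- and $\gamma$-perturbations, so the triangle inequality gives only $\|\hat\gamma-\gamma_0\|_2^2\lesssim n^{-1/2}+\|\hat\beta_n-\beta_J^*\|_2^2$. That is the slow ill-posed rate, not (c). Your plan for (b), which absorbs the $\gamma$ error using (c), then becomes circular.

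The paper works block by block instead. For (b) it holds $(\sigma,\gamma)=(\sigma_0,\hat\gamma)$ in both densities. It then bounds $\E[\log f(y|x,v;\beta_J^*,\sigma_0,\hat\gamma)]-\E[\log f(y|x,v;\hat\beta_n,\sigma_0,\hat\gamma)]$ via \eqref{c1}--\eqref{c4}, with the first-stage term \eqref{c4} dominating at $O_p(n^{-1/2})$, so no rate for $\hat\gamma$ is used. For (c) it holds $(\beta,\sigma)=(\hat\beta_n,\sigma_0)$. The Pinsker step then involves a density difference $\tilde h$ in $\gamma$ alone, and (f) applies at $\beta=\hat\beta_n\in\mathcal N(\beta_0)$. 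Be aware that the paper's Pinsker step treats these one-block gaps, which are expectations under the true law, as KL divergences between the two compared densities. The same difficulty therefore resurfaces there. A clean version of either route needs a joint condition ruling out cancellation between the $\beta$- and $\gamma$-directions, which is stronger than (f).

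There are further problems.

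In (b) your exponents do not come out. The paper balances $q^\lambda\,\E_{x,v}\|h\|_1\asymp q^\lambda n^{-1/4}$ against the tail $J_n/q$ implied by $|\phi_{x\beta,\hat\gamma}(s|x,v)|\le KJ_n/|s|$, which gives $q^*\asymp J_n^{1/(\lambda+1)}n^{1/(4(\lambda+1))}$. In your squared form the tail must therefore be $J_n^2/l_n^2$, and $l_n$ must grow with $J_n$. A tail of $J_n^{2\lambda}/l_n^2$ balances to $J_n^{2\lambda^2/(\lambda+1)}n^{-1/(2(\lambda+1))}$. A cutoff $l_n\asymp n^{1/(4(\lambda+1))}$ leaves the tail too large by a factor $J_n^{2/(\lambda+1)}$.

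Your Wasserstein step also needs care. Given $v$, $U$ has density $f_c(\cdot|v;\gamma_0)$, so $x'\beta(\cdot)$ is not the conditional quantile function of $x'\beta(U)$. Moreover, $\int_0^1|F^{-1}-G^{-1}|^2\,du\le D\cdot W_1(F,G)$, with $D$ the width of the support, loses a square root. Avoiding that loss requires a lower bound on the conditional density of $x'\beta(U)$, which brings in $c_n$.

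In (a), the crude bound $\sup_\theta|\hat Q_n(\theta)-Q_n(\theta)|=O_p(n^{-1/2})$ is \emph{not} absorbed. Once $\delta\to0$, the rate in (a) is $o_p(n^{-1/2})$. You need the first-stage effect on the difference $\hat Q_n(\hat\theta_n^p)-\hat Q_n(\theta_J^*)$, which a mean-value expansion in $v$ makes $O_p(n^{-1/2}d(\hat\theta_n^p,\theta_J^*)+n^{-1})$. Likewise, bracketing entropy of order $J_n\log(1/\delta)$ yields the modulus $\delta\sqrt{J_n\log(1/\delta)}/\sqrt n$, not the $J_n$-free one stated in (a).
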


The proof is provided in Appendix~\ref{pf_lem1}. The derived convergence rates impose restrictions on how quickly the sieve dimension $J_n$ may diverge, because the second-stage estimation involves an ill-posed inverse problem, under which the minimum eigenvalue of the Hessian can approach zero as the number of basis functions increases. To ensure the validity of the high dimensional central limit theorem, the convergence rates established above must dominate the decay rate of this eigenvalue (See more details in Theorem~\ref{thm3}). This requirement motivates a careful analysis of the rates in Lemma~\ref{lem1}. In particular, combining Lemma~\ref{lem1} with consistency of $\hat\theta_n^p$ yields $\|\hat\sigma-\sigma_0\|_2=o_p(n^{-1/4})$ for the error distribution parameter. By contrast, the convergence rates for the quantile coefficient functions and the copula parameter are slower and depend directly on the first-stage estimation, since the true $V$ is unobserved. As a result, a slow-converging first stage may limit the range of admissible growth rates of $J_n$ required for weak convergence.\footnote{The distinction between $(a)$ and $(b$-$c)$ stems from independence between $\varepsilon$ and $x'\beta$, resulting in the factorization $\phi_{y\mid x,v}(s)=\phi_\varepsilon(s) \cdot \phi_{x'\beta\mid v}(s)$, which is then used to determine the convergence rates.} We now present the asymptotic normality of $\hat\theta_n^p$.

\begin{thm}\label{thm3}
\textbf{Asymptotic Normality}\\
Suppose the conditions of Theorem~\ref{thm2} and Assumption~\ref{cf} are satisfied. Assume that the error density $f_\varepsilon$ is ordinary smooth of order $\lambda>0$. Define the score function $\psi(w;\theta,\pi):=\partial_\theta \log f(y|x,v(x,z;\pi);\theta)$, with associated information matrix $\I(\theta,\pi)=\E[\psi\psi']$ and Hessian $\mathcal{H}(\theta,\pi)=\E[\partial_\theta\psi]$. Assume that there exists a positive sequence $\{\kappa_{J_n}\}$ with $\kappa_{J_n}\to0$ as $J_n\to\infty$ such that the minimum eigenvalue of the information matrix $\I(\theta_0,\pi_0)$ satisfies $\lambda_{\text{min}}(\I(\theta_0,\pi_0))\asymp\kappa_{J_n}$. If $J_n^{\frac{\lambda}{\lambda+1}}n^{\frac{-1}{4(\lambda+1)}}/\kappa_{J_n}\to0$, $J_n^{r+1}\kappa_{J_n}\to\infty$, and $J_n^{r+1}/\sqrt{n}\to\infty$, then 
\begin{align*}
	\sqrt{n \kappa_{J_n}} \, \Omega_{J,\sigma}^{-1/2} (\hat{\sigma} - \sigma_0) \xrightarrow{d} \mathcal{N}(0, \mathbf{I}_{d_\sigma}), \\
\sqrt{n \kappa_{J_n}} \, \Omega_{J,\gamma}^{-1/2} (\hat{\gamma} - \gamma_0) \xrightarrow{d} \mathcal{N}(0, \mathbf{I}_{d_\gamma}),
\end{align*} for each fixed $\tau\in(0,1)$,
\begin{gather*}
	\sqrt{n \kappa_{J_n}} \, \Omega_{J,\tau}^{-1/2} \bigl( \hat{\beta}_n(\tau) - \beta_0(\tau) \bigr) \xrightarrow{d} \mathcal{N}(0, \mathbf{I}_{d_x})\\
	\Omega_{J,\tau}:=(S(\tau)\otimes \mathbf{I}_{d_x})'\:\Omega_{J,\beta}\: (S(\tau)\otimes \mathbf{I}_{d_x}),
\end{gather*} where each $\Omega_{J,\cdot}$ is the corresponding positive definite block (sub-matrix) of
\begin{align*}
	\Omega_J &:= \kappa_{J_n} \mathcal{H}^{-1}(\theta_J^*,\pi_0)\E\big[\phi(W_i;\theta_0,\pi_0)\phi(W_i;\theta_0,\pi_0)'\big]\mathcal{H}^{-1}(\theta_J^*,\pi_0)\\
	\phi(W_i;\theta_0,\pi_0) &:= \psi(W_i;\theta_0,\pi_0) + \E\big[\partial_v\psi(W_i;\theta_0,\pi_0)\cdot \partial_\pi V(X_i,Z_i;\pi_0)\big]\cdot\E[Z_iZ_i']^{-1} Z_i\eta_i,
\end{align*} with the largest eigenvalue of $\Omega_J$ remains bounded. 
\end{thm}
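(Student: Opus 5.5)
The plan is the standard two-step sieve M-estimation argument: a Taylor expansion of the sample first-order condition around the sieve projection $\theta_J^*$, with a separate linearization of the generated regressor $\hat V_i$ in $\hat\pi$. By Theorem~\ref{thm2} and Lemma~\ref{lem1}, $\hat\theta_n^p$ lies w.p.a.1 in a shrinking neighborhood of $\theta_J^*$. In particular, $\|\hat\beta_n-\beta_J^*\|_2=O_p(J_n^{\lambda/(\lambda+1)}n^{-1/(4(\lambda+1))})$, $\|\hat\gamma-\gamma_0\|_2=O_p(n^{-1/4})$ and $\|\hat\sigma-\sigma_0\|_2=o_p(n^{-1/4})$. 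Since $\sigma_0,\gamma_0$ are interior (Assumption~\ref{parpro}(a)) and the sieve coefficients are unconstrained locally (the monotonicity constraint in Assumption~\ref{sieve}(a) is slack near $\beta_J^*$ because $\beta_{0,p}$ is strictly monotone), $\hat\theta_n^p$ satisfies $\E_n[\psi(W_i;\hat\theta,\hat\pi)]=0$ w.p.a.1.

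\textbf{Step 1 (expansion).} I will expand this condition jointly in $\theta$ and $\pi$:
\[
0=\E_n[\psi(W_i;\theta_J^*,\pi_0)]+\mathcal{H}(\bar\theta,\bar\pi)(\hat\theta-\theta_J^*)+\E_n[\partial_v\psi\,\partial_\pi V](\hat\pi-\pi_0).
\]
I will then replace the sample Hessian by $\mathcal{H}(\theta_J^*,\pi_0)$. The replacement error is controlled by a uniform LLN over the $d_xJ_n+d_\sigma+d_\gamma$-dimensional sieve class, using the bounded second derivatives in Assumption~\ref{parpro}(b)--(c) and $c_n$-trimming; it is $O_p(\sqrt{J_n/n})+O_p(\delta)$ in operator norm. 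For $\hat\pi$, I will substitute the OLS representation $\hat\pi-\pi_0=\E[ZZ']^{-1}\E_n[Z_i\eta_i]+o_p(n^{-1/2})$, which is valid by Assumption~\ref{firstasmp_p}.

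\textbf{Step 2 (bias and linearization).} I will also replace $\psi(\cdot;\theta_J^*,\pi_0)$ by $\psi(\cdot;\theta_0,\pi_0)$. The resulting bias term is bounded by the spline approximation error $O(J_n^{-(r+1)})$ times the Hessian norm. After multiplying by $\mathcal H^{-1}$, whose norm is of order $\kappa_{J_n}^{-1}$, it is negligible relative to $(n\kappa_{J_n})^{-1/2}$ under the rate conditions $J_n^{r+1}\kappa_{J_n}\to\infty$ and $J_n^{r+1}/\sqrt n\to\infty$ (the bias controls are read as undersmoothing). This gives
\[
\hat\theta-\theta_J^*=-\mathcal{H}^{-1}(\theta_J^*,\pi_0)\,\E_n[\phi(W_i;\theta_0,\pi_0)]+R_n,
\]
with $\phi$ as in the statement.

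\textbf{Step 3 (remainder).} This is the main obstacle: showing $\sqrt{n\kappa_{J_n}}\|R_n\|=o_p(1)$. The remainder is at most $\|\mathcal H^{-1}\|\cdot(\text{Hessian error})\cdot\|\hat\theta-\theta_J^*\|$. Because $\lambda_{\min}\asymp\kappa_{J_n}$, each inverse costs a factor $\kappa_{J_n}^{-1}$. I will argue via the information equality that $-\mathcal H=\I$ at the truth, so the same eigenvalue bound applies to the Hessian. With the Lemma~\ref{lem1} rate, the remainder condition reduces to $J_n^{\lambda/(\lambda+1)}n^{-1/(4(\lambda+1))}/\kappa_{J_n}\to0$, which is the stated condition. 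I would first try a stochastic-equicontinuity (bracketing entropy) bound for the empirical process $\E_n-\E$ applied to $\psi(\cdot;\theta,\pi)-\psi(\cdot;\theta_J^*,\pi_0)$ over the shrinking neighborhood, with entropy of order $J_n\log(1/\epsilon)$. If this proves too loose, I would fall back on a direct second-order Taylor bound using the fourth-moment envelopes in Assumption~\ref{parpro}(d).

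\textbf{Step 4 (CLT and projection).} The leading term is a triangular-array sum of i.i.d. mean-zero vectors $a_n'\mathcal H^{-1}\phi(W_i)$ for fixed directions $a_n$. The Lyapunov condition follows from the fourth-moment bounds in Assumption~\ref{parpro}(d) together with the bounded largest eigenvalue of $\Omega_J$, so normalizing by $\sqrt{n\kappa_{J_n}}\,\Omega_J^{-1/2}$ yields a standard normal limit. Applying the Cramér--Wold device with the selection vectors for $\sigma$, for $\gamma$, and $(S(\tau)\otimes\mathbf I_{d_x})$ for $\beta(\tau)=\bb S(\tau)$ gives the three displayed results. Finally, $\hat\beta_n(\tau)-\beta_0(\tau)$ differs from $\hat\beta_n(\tau)-\beta_J^*(\tau)$ by $O(J_n^{-(r+1)})$, which is negligible by the same undersmoothing argument as in Step~2.
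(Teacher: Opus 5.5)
Your overall architecture matches the paper's: the first-order condition, an expansion around $\theta_J^*$ with $\mathcal H(\theta_J^*,\pi_0)$, the OLS influence function for $\hat\pi$ giving the $G_\pi\E[Z_iZ_i']^{-1}Z_i\eta_i$ correction, undersmoothed sieve bias, and a Lindeberg/Lyapunov CLT on fixed linear functionals. However, Step~3 as you set it up cannot be closed with the rates available from Lemma~\ref{lem1}. Write $r_n:=J_n^{\lambda/(\lambda+1)}n^{-1/(4(\lambda+1))}$ for the Lemma~\ref{lem1}(b) rate. Your bound $\|R_n\|\lesssim\kappa_{J_n}^{-1}(\sqrt{J_n/n}+r_n)\,r_n$ makes the leading term of $\sqrt{n\kappa_{J_n}}\|R_n\|$ of order
\[
\sqrt{n/\kappa_{J_n}}\;r_n^2\;\ge\;\sqrt n\,r_n^2=J_n^{\frac{2\lambda}{\lambda+1}}\,n^{\frac{\lambda}{2(\lambda+1)}}\to\infty ,
\]
because the preliminary rate is slower than $n^{-1/4}$. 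So $r_n/\kappa_{J_n}\to0$ does not make the remainder $o_p((n\kappa_{J_n})^{-1/2})$. Neither the entropy bound nor the fourth-moment fallback helps, since the obstruction is the slow preliminary rate, not the envelope. What $r_n/\kappa_{J_n}\to0$ actually delivers is a \emph{relative} statement. The second-order perturbation $E_n$ of the Hessian satisfies $\|E_n\|=O_p(r_n)=o_p(\kappa_{J_n})$, so it is small compared with $\lambda_{\min}(-\mathcal H)$. The paper uses the condition exactly this way. It shows $\|\hat b_n-b_J^*\|_2^2=\kappa_{J_n}\,o_p(\|\hat b_n-b_J^*\|_2)$ (similarly for $\hat\gamma$), so the quadratic term is dominated by $\mathcal H(\theta_J^*,\pi_0)(\hat\theta-\theta_J^*)$, whose norm is at least of order $\kappa_{J_n}\|\hat\theta-\theta_J^*\|$. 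To formalize this, write $(\mathcal H+E_n)(\hat\theta-\theta_J^*)=-\E_n[\phi]+\cdots$ and invert $\mathcal H(\mathbf I+\mathcal H^{-1}E_n)$ with $\|\mathcal H^{-1}E_n\|=o_p(1)$, rather than bounding an additive remainder.

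Two further points. First, in Step~2 your crude bound $\|\mathcal H^{-1}\|\cdot O(J_n^{-(r+1)})=O(\kappa_{J_n}^{-1}J_n^{-(r+1)})$ is $o((n\kappa_{J_n})^{-1/2})$ only if $\sqrt{n/\kappa_{J_n}}\,J_n^{-(r+1)}\to0$. The stated conditions do not imply this: take $J_n^{r+1}=\sqrt n\log n$ and $\kappa_{J_n}=J_n^{-\delta}$ with $(1+\delta)\lambda+\delta<\tfrac12(r+1)$, which satisfies all three rate conditions. The stated pair does suffice if you use the information identity, $\E\psi(\theta_J^*,\pi_0)\approx-\E[\psi(\theta_0,\pi_0)s_\Delta]$, with $s_\Delta$ the score in the direction $\beta_J^*-\beta_0$. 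Then $\mathcal H^{-1}\E\psi(\theta_J^*,\pi_0)$ is a projection coefficient of norm at most $\kappa_{J_n}^{-1/2}\|s_\Delta\|_{L^2}=O(\kappa_{J_n}^{-1/2}J_n^{-(r+1)})$, which $J_n^{r+1}/\sqrt n\to\infty$ kills. The condition $J_n^{r+1}\kappa_{J_n}\to\infty$ handles the second-order piece and the swap of $\mathcal H(\theta_0,\pi_0)$ for $\mathcal H(\theta_J^*,\pi_0)$; the paper is terse on this step as well. Second, the bounded largest eigenvalue of $\Omega_J$ is part of what must be proved, not an input you may use in the Lyapunov step. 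Since $\|\mathcal H^{-1}\|\asymp\kappa_{J_n}^{-1}$, the first-stage piece $\kappa_{J_n}\mathcal H^{-1}G_\pi\Sigma_\pi G_\pi'\mathcal H^{-1}$ could a priori be of order $\kappa_{J_n}^{-1}$. The paper rules this out by differentiating $\E[\psi\mid x,v]=0$ in $v$, which gives $\E[\partial_v\psi\mid x,v]=-\E[\psi\,\partial_v\log f\mid x,v]$. Cauchy--Schwarz then yields $G_\pi\Sigma_\pi G_\pi'\le K\,\I(\theta_0,\pi_0)$, hence $\Omega_J\le(1+K)\Omega_{\text{sec}}$ with $\Omega_{\text{sec}}\approx\kappa_{J_n}\I^{-1}(\theta_0,\pi_0)$ bounded. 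Your proposal needs this argument.
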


The proof is provided in Appendix~\ref{pf_thm3}. Theorem~\ref{thm3} shows that the 2SSMLE achieves pointwise asymptotic normality provided that the sieve dimension $J_n$ grows at an appropriate rate. In particular, $J_n^{2r+2}/n\to\infty$ requires $J_n$ to grow sufficiently fast so that the sieve approximation bias is adequately controlled. Meanwhile, $J_n^{\frac{\lambda}{\lambda+1}}n^{\frac{-1}{4(\lambda+1)}}/\kappa_{J_n}\to0$ restricts $J_n$ from growing too quickly, thereby preserving weak convergence in the growing dimensional setting and preventing the problem from becoming severely ill-posed. By the information identity, $\mathcal{I}(\theta_0,v) = -\mathcal{H}(\theta_0,v)$, so the minimum eigenvalues of $\mathcal{H}(\theta_0,v)$, $\mathcal{H}(\theta_J^*,v)$, and $\mathcal{I}(\theta_0,v)$ share the same decay rate. Consequently, the estimator is $(n\kappa_{J_n})^{-1/2}$ consistent, where $\kappa_{J_n}$ captures the degree of ill-posedness in the inverse problem. Using the same measure of ill-posedness as in \citet{hausman2021errors}, if the problem is \textit{mildly ill-posed}, so that $\kappa_{J_n} \asymp J_n^{-\delta}$, then weak convergence holds under the above conditions if $(1+\delta)\lambda + \delta < \frac{1}{2}(r+1)$. By contrast, if $\kappa_{J_n} \asymp \exp(-J_n^\delta)$, corresponding to the \textit{severely ill-posed} case, then weak convergence fails even though consistency is maintained.

Theorem~\ref{thm3} also indicates that the contribution of the first stage is not simply negligible, even though it attains a parametric rate that dominates the mildly ill-posed second stage. Instead, the generated regressor error propagates through the second step in a first-order way: It enters through a $J_n\times d_z$ loading, which embeds the first-stage randomness into the same high-dimensional sieve score space. More specifically, the asymptotic variance $\Omega_J$ can be decomposed into three components, corresponding to the randomness arising from both stages as well as their interaction. The first-and second-stage randomness are given by  
\begin{gather*}
\Omega_{\text{sec}}:=\kappa_{J_n} \mathcal{H}^{-1}(\theta_J^*,\pi_0)\mathcal{I}(\theta_0,\pi_0)\mathcal{H}^{-1}(\theta_J^*,\pi_0)\\
\Omega_{\text{fir}}:=\kappa_{J_n} \mathcal{H}^{-1}(\theta_J^*,\pi_0)G_\pi(\theta_0,\pi_0)\Sigma_\pi G'_\pi(\theta_0,\pi_0)\kappa_{J_n} \mathcal{H}^{-1}(\theta_J^*,\pi_0),
\end{gather*} where $G_\pi(\theta_0,\pi_0):=\E\big[\partial_v\psi(W_i;\theta_0,\pi_0)\cdot \partial_\pi V(X_i,Z_i;\pi_0)\big]$ and $\Sigma_\pi:=\E[Z_iZ_i']^{-1}\E[\eta_i^2]$. In general, the lack of orthogonality renders $G_\pi\neq0$. However, in the special case without endogeneity, where $f_c(u|v) = f_c(u) = 1$ and $\partial_v \psi = 0$, we have $G_\pi(\theta_0,\pi_0) = 0$. In this case, the variance reduces to that found in \citet{hausman2021errors}, i.e., $\Omega_{\text{sec}}$, and the first-stage contribution disappears. Moreover, by the definition of $\kappa_{J_n}$, the eigenvalues of $\Omega_{\text{sec}}$ are bounded. We show that the full asymptotic variance satisfies $0\le\Omega_{\text{sec}}\le \Omega_J\le (1+C)\Omega_{\text{sec}}$, for some uniform constant $C>0$. Hence, while the first-stage contribution is not asymptotically negligible, the second stage determines both the convergence rate $\sqrt{n\kappa_{J_n}}$ and the shape of the asymptotic variance.

\subsection{Nonparametric First Stage} 
\label{subsec:asymp_np}

We now present the asymptotic theory for the series first-stage estimator defined  in (\ref{v_np}). As emphasised by \citet{imbens2009identification}, the convergence rate of $\hat V$ depends on the smoothness of the conditional distribution $F_{X_1\mid Z}$. We hereby adopt the following assumption.

\begin{assum}
	\label{firstasym_np} First Stage Regularity: \\
	$Z_i\in \R^{d_z}$ has compact support and $F_{X_1|Z}(x_1,z)$ is continuously differentiable of order $d_1$ on the support with derivatives uniformly bounded on the control variable.
\end{assum}

Under Assumptions~\ref{npc} and \ref{firstasym_np}, Lemma 11 of \citet{imbens2009identification} shows that 
\begin{align}
	\E\Big[\sum_{i=1}^n (\hat{V}_i-V_i)^2/n\Big]=O(K_n/n+K_n^{1-2d_1/d_z}).	
\end{align} This rate depends on both the series dimension $K_n$ and the smoothness level of $F_{X_1|Z}$. It consists of a variance term, $(K_n/n)$, and a squared bias term, $(K_n^{1-2d_1/d_z})$.\footnote{The additional factor $K_n$ in the squared bias arises because the predicted values $\hat{V}_i$ are obtained from regressions whose dependent variables vary across observations.} Let $\alpha_n:=(K_n/n+K_n^{1-2d_1/d_z})^{1/2}$. As in the parametric case in Section~\ref{subsec:asymp_p}, when $\alpha_n \to 0$, the estimator $\hat\theta_n^{np}$ is consistent and satisfies the following convergence rates.

\begin{thm}\label{thm4}
	\textbf{Consistency}\\
		Suppose the conditions of Theorem~\ref{thm1} and Assumptions~\ref{parpro}, \ref{sieve}, \ref{firstasym_np} hold. If $\alpha_n\to0$, $J_n \to \infty$, and $J_n / n \to 0$, then the estimator $\hat\theta_n^{np}$ is consistent.
\end{thm}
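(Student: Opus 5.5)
The plan mirrors the proof of Theorem~\ref{thm2}. Only the first stage changes: the $O_p(n^{-1/2})$ control from the parametric case is replaced by the series rate $\alpha_n$. I would verify the conditions of a standard consistency theorem for sieve extremum estimators (e.g.\ Theorem~3.1 of \citet{chen2007large}) applied to the feasible criterion $\hat Q_n(\theta)$ over $\Theta_n$. There are three ingredients. First, $Q(\theta)$ has a unique, well-separated maximizer at $\theta_0$. Second, the sieve approximates $\theta_0$. Third, $\sup_{\theta\in\Theta_n}|\hat Q_n(\theta)-Q(\theta)|\xrightarrow{p}0$.

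\textbf{Steps (i) and (ii).} Identifiable uniqueness follows from Theorem~\ref{thm1}, since $\theta_0$ uniquely maximizes $Q$ over $\Theta_0$. Continuity of $Q$ in $d(\cdot,\cdot)$ follows from the boundedness and differentiability in Assumption~\ref{parpro}(b)--(c). Compactness of $\Theta_n$ then upgrades uniqueness to well-separatedness on each sieve. Sieve approximation, $d(\pi_n\theta_0,\theta_0)\to0$ and hence $Q(\pi_n\theta_0)\to Q(\theta_0)$, holds by Assumption~\ref{sieve}. This uses the spline approximation of H\"older functions with $p_k>1/2$ as $J_n\to\infty$.

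\textbf{Step (iii).} I would split the uniform deviation as
\[
\sup_{\Theta_n}|\hat Q_n-Q|\le \sup_{\Theta_n}|Q_n-Q|+\sup_{\Theta_n}|\hat Q_n-Q_n|.
\]
The first term does not involve the first stage and is identical to the corresponding term in Theorem~\ref{thm2}. It is a uniform law of large numbers over a class whose entropy grows with $J_n$, using the envelope bound in Assumption~\ref{parpro}(d). This term is $o_p(1)$ when $J_n/n\to0$, and I would invoke that part of the proof of Theorem~\ref{thm2} directly. For the second term, a mean-value expansion in $v$ gives
\[
|\hat Q_n(\theta)-Q_n(\theta)|\le \frac1n\sum_{i=1}^n \sup_{\theta,\tilde v}\Big|\frac{\int f_\varepsilon(y_i-x_i'\beta(u);\sigma)\partial_v f_c(u|\tilde v;\gamma)du}{\int f_\varepsilon(y_i-x_i'\beta(u);\sigma)f_c(u|\tilde v;\gamma)du}\Big|\,|\hat V_i-V_i|.
\]
The ratio is bounded on the trimming set $\mathcal{T}_n$ by $C/c_n$. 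This holds because $f_\varepsilon$ is bounded and positive and $\partial_v f_c$ is integrable (Assumption~\ref{parpro}(b)--(c)). Cauchy--Schwarz together with Lemma~11 of \citet{imbens2009identification} then gives the bound $O_p(c_n^{-1}\alpha_n)$. The nonparametric $\hat V_i$ need not lie in $[0,1]$, so I would replace $\hat V_i$ by its projection onto $[\tau_n,1-\tau_n]$; this step only shrinks $|\hat V_i-V_i|$ because $V_i\in\mathcal{T}_n$ w.p.a.1.

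\textbf{Main obstacle.} The hard step is controlling the derivative ratio uniformly over $\theta$ and $\tilde v$ when $c_n\to0$. One needs $\alpha_n/c_n\to0$, which is implicit in letting the trimming sequence go to zero slowly relative to $\alpha_n$. I would state this as a coupling of $\tau_n$ (hence $c_n$) with $K_n$, exactly as the parametric proof presumably couples $c_n$ with $n^{-1/2}$. If the ratio cannot be bounded uniformly, the fallback is a truncation argument. Split on $\{|\hat V_i-V_i|\le\alpha_n^{1/2}\}$. On the complement, whose empirical frequency is $O_p(\alpha_n)$ by Markov's inequality, use the integrable envelope from Assumption~\ref{parpro}(d) to bound the log-likelihood contribution. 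Combining (i)--(iii) then yields $d(\hat\theta_n^{np},\theta_0)\xrightarrow{p}0$.
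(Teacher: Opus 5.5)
Your skeleton matches the paper's proof. The paper declares this proof parallel to that of Theorem~\ref{thm2}: a ULLN for $\sup_{\Theta_n}|Q_n-Q|$ via covering numbers, a bound on the first-stage perturbation as in (\ref{p5}), the chain (a)--(d), and identification from Theorem~\ref{thm1}.

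The gap is in the second half of your Step (iii). In (\ref{p5}) the paper controls $\sup_{\theta}|\hat Q_n(\theta)-Q_n(\theta)|$ with a Lipschitz-in-$v$ envelope $L'(W)$ that is uniform in $\theta$ and has $\E_n L'(W_i)^2=O_p(1)$. Cauchy--Schwarz then gives $O_p(\alpha_n)$, which is why $\alpha_n\to0$ is the only first-stage condition in the statement. You instead bound the Lipschitz constant by $C/c_n$. That forces the extra hypothesis $\alpha_n/c_n\to0$, which is not in the theorem, and the bound itself is not justified:

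\begin{itemize}
\item Since $f_c(\cdot|v;\gamma)$ integrates to one, the denominator is bounded below only by $\inf_u f_\varepsilon(y-x'\beta(u);\sigma)$, which vanishes as $|y|\to\infty$. Boundedness of $f_\varepsilon$ and integrability of $\partial_v f_c$ therefore give nothing uniform in $y$.
\item The identity that does hold is $\partial_v l=\int_0^1 w_\theta(u)\,\partial_v\log f_c(u|v;\gamma)\,du$, where $w_\theta\propto f_\varepsilon(y-x'\beta(u);\sigma)f_c(u|v;\gamma)$ is a posterior density in $u$. It bounds your ratio by $\sup_{u\in[0,1]}|\partial_v\log f_c(u|v;\gamma)|$. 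Assumption~\ref{parpro}(c) controls this neither for $u\notin[\tau_n,1-\tau_n]$ nor uniformly in $n$. For a Gaussian copula, $\partial_v\log f_c(u|v;\rho)$ is affine in $\Phi^{-1}(u)$, so this supremum is infinite.
\item The coupling is not free either. Your projection step needs every $V_i\in\mathcal T_n$, i.e.\ $n\tau_n\to0$. For a Gaussian copula, $c_n\asymp\tau_n^{2|\rho|/(1-|\rho|)}$ up to logarithms. Since $\alpha_n\ge n^{-1/2}$, $\alpha_n/c_n\to0$ is then impossible once $|\rho|>1/5$, which includes the paper's design $\rho=0.5$.
\item The truncation fallback does not help. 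On $\{|\hat V_i-V_i|\le\alpha_n^{1/2}\}$ you again need a Lipschitz constant. Also, Assumption~\ref{parpro}(d) is not an envelope for $\sup_{\theta,v}|l|$; it concerns the complete-data density pointwise in $(\sigma,\gamma)$.
\end{itemize}

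What is missing is a $c_n$-free, $\theta$-uniform control of the $v$-perturbation. You can either adopt the paper's envelope $L'(W)$ as a maintained condition, or argue without derivatives, which needs only $\alpha_n\to0$. The second route is your truncation idea with the Lipschitz constant replaced by a modulus of continuity and a $c_n$-free envelope:

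\begin{itemize}
\item Clip $\hat V_i$ to $[\epsilon_n,1-\epsilon_n]$ for any $\epsilon_n\to0$. Then $|\hat V_i^{c}-V_i|\le|\hat V_i-V_i|+\epsilon_n$, with no coupling to $K_n$.
\item Let $\bar B$ bound $\|\beta(u)\|$ uniformly. Then $\bar L(w):=|\log\sup f_\varepsilon|+|\log\inf_{|t|\le \bar B\|x\|,\,\sigma\in\Sigma}f_\varepsilon(y-t;\sigma)|$ dominates $|l(w;\theta,v)|$ for all $(\theta,v)$, again because $f_c(\cdot|v;\gamma)$ integrates to one.
\item Define $\omega_\delta(w):=\sup_{\theta\in\Theta_0}\sup_{|v'-v|\le\delta}|l(w;\theta,v')-l(w;\theta,v)|$. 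Then $\sup_{\Theta_n}|\hat Q_n-Q_n|\le\E_n\omega_\delta(W_i)+2M\,\E_n\1\{|\hat V_i^{c}-V_i|>\delta\}+2\E_n[\bar L(W_i)\1\{\bar L(W_i)>M\}]$.
\item The middle term is $O_p(M\alpha_n^2/\delta^2)$ by Markov once $\epsilon_n<\delta/2$.
\item The last term is small for large $M$ when $\E\bar L(W)<\infty$, which is a mild tail condition.
\item Finally, $\E\,\omega_\delta(W)\to0$ as $\delta\downarrow0$ by dominated convergence, given joint continuity of $l$ in $(\theta,v)$ and compactness of $\Theta_0$.
\end{itemize}

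A smaller point: well-separatedness of $\theta_0$ must hold uniformly in $n$. Derive it from compactness of $\Theta_0$ (Arzel\`a--Ascoli, as in the proof of Theorem~\ref{thm2}) together with $\Theta_n\subseteq\Theta_0$, not from compactness of each $\Theta_n$.
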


\begin{lem}\label{lem2}
    \textbf{Convergence Rates}\\
	Suppose the conditions of Theorem~\ref{thm4} and Assumption~\ref{cf} are satisfied. Assume that the error density $f_\varepsilon$ is ordinary smooth of order $\lambda>0$ and that $\alpha_n\to0$, $J_n^{2r+2}/n \to \infty$. Then the estimator $\hat\theta_n^{np}=(\tilde\beta_n(\cdot),\tilde\sigma,\tilde\gamma)$ obeys the following convergence rates:
	\begin{enumerate}[label=(\alph*)]
		\item $\|\tilde\sigma-\sigma_0\|_2^2=O_p\left( \max\left\{\frac{\log n}{n},\frac{\delta\sqrt{-\log \delta}}{\sqrt{n}}\right\}\right)$, where $\delta=\max\{\|\tilde\beta_n-\beta_J^*\|_\infty,\|\tilde\gamma-\gamma_0\|_2\}$,
		\item $\|\tilde\beta_n-\beta_J^*\|_2^2=O_p(J_n^{\frac{2\lambda}{\lambda+1}}{\alpha_n}^{\frac{1}{(\lambda+1)}})$, and
		\item $\|\tilde\gamma-\gamma_0\|_2^2=O_p({\alpha_n})$, where $\theta_J^*=(\beta_J^*(\cdot),\sigma_0,\gamma_0)$ denotes the projection $\pi_n\theta_0\in\Theta_n$.
	\end{enumerate} 
\end{lem}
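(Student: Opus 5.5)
The plan is to reproduce the argument behind Lemma~\ref{lem1} with the parametric first-stage error $O_p(n^{-1/2})$ replaced throughout by the nonparametric rate $\alpha_n$. Consistency from Theorem~\ref{thm4} ensures that, w.p.a.1, $\hat\theta_n^{np}$ lies in a shrinking neighborhood of $\theta_J^*$. This licenses local expansions of the log-likelihood and allows the Hellinger/$L_2$ bounds to be converted into characteristic-function bounds.

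\emph{Step 1: reduce the feasible criterion to the infeasible one.} Write
\[
\hat Q_n(\theta)-Q_n(\theta)=\frac1n\sum_i \partial_v l(\bar W_i;\theta)(\hat V_i-V_i).
\]
On the trimmed set $\mathcal{T}_n$ of Assumption~\ref{parpro}(c), the copula derivative is controlled and the score moments are bounded by Assumption~\ref{parpro}(d). Cauchy--Schwarz together with Lemma 11 of \citet{imbens2009identification} then gives
\[
\sup_{\theta\in\Theta_n}|\hat Q_n(\theta)-Q_n(\theta)|=O_p(\alpha_n).
\]
The basic inequality $\hat Q_n(\hat\theta)\ge \hat Q_n(\theta_J^*)$, combined with the usual empirical-process bound for the sieve class (entropy of order $J_n\log(1/\delta)$), yields
\[
Q(\theta_J^*)-Q(\hat\theta)\lesssim \alpha_n+\text{(stochastic term)}+\text{(sieve bias)}.
\]
The condition $J_n^{2r+2}/n\to\infty$ is used, exactly as in Lemma~\ref{lem1}, to keep the sieve approximation term (of order $J_n^{-(r+1)}$) dominated. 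Since $\alpha_n\ge n^{-1/2}$, the first-stage term $\alpha_n$ replaces $n^{-1/2}$ as the leading rate for the Kullback--Leibler/Hellinger distance between $f(\cdot|x,v;\hat\theta)$ and $f(\cdot|x,v;\theta_J^*)$.

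\emph{Step 2: pass to characteristic functions.} The $L_1$ (or $L_2$) density gap bounds the conditional characteristic-function gap uniformly in $s$, and the factorization $\phi_{y|x,v}(s)=\phi_\varepsilon(s;\sigma)\phi_{x\beta,\gamma}(s|x,v)$ applies.
\begin{itemize}[label={}]
\item For (c): evaluating at $\beta$ near $\beta_J^*$ and using Assumption~\ref{parpro}(f) gives $c_l\|\tilde\gamma-\gamma_0\|_2^2\lesssim \E\int_{-l}^l|\Delta\phi|^2ds=O_p(\alpha_n)$.
\item For (a): the error-distribution part is separated because $\varepsilon$ is independent of $(x,v)$. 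Assumption~\ref{cf} is used to extract variation in $(x,v)$ and isolate $\phi_\varepsilon$, and then Assumption~\ref{parpro}(e) applies. The remaining cross term is a local empirical-process modulus of order $\delta\sqrt{-\log\delta}/\sqrt n$, plus the $\log n/n$ term from the maximal inequality. This is identical to part (a) of Lemma~\ref{lem1}: the $\hat V$ perturbation only enters the $\phi_{x\beta,\gamma}$ factor and is absorbed into $\delta$.
\end{itemize}

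\emph{Step 3: the coefficient rate (b), which is the main obstacle.} From Step 1 we have
\[
\E\int_{-l}^l|\phi_{x\tilde\beta,\gamma_0}-\phi_{x\beta^*_J,\gamma_0}|^2|\phi_\varepsilon(s)|^2ds\lesssim\alpha_n
\]
(after replacing $\tilde\gamma$ by $\gamma_0$ at cost $O_p(\alpha_n)$ from (c)). By ordinary smoothness, $|\phi_\varepsilon(s)|^2\gtrsim l^{-2\lambda}$ on $[-l,l]$, so the low-frequency part is $O_p(l^{2\lambda}\alpha_n)$. The high-frequency tail $|s|>l$ is bounded using the spline smoothness of $x'\beta(\cdot)$ on $\Theta_n$: derivatives are bounded by powers of $J_n$, giving a tail of order $J_n^{2}/l^{2}$ roughly. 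Balancing the two pieces by choosing $l\asymp \alpha_n^{-1/(2(\lambda+1))}$ gives a characteristic-function distance of order $\alpha_n^{1/(\lambda+1)}$. A Plancherel-type/monotone-rearrangement inequality then converts the distance between the distributions of $x'\beta(U)$ back to $\|\tilde\beta-\beta_J^*\|_2^2$, losing the factor $J_n^{2\lambda/(\lambda+1)}$ from the spline norm equivalences. Here Assumption~\ref{npb}(b), the strict monotonicity in the continuous $X_p$, makes the map injective, and Assumption~\ref{cf} provides the needed variation in $(x,v)$. This reproduces Lemma~\ref{lem1}(b) with $n^{-1/2}$ replaced by $\alpha_n$.

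The delicate points are, first, ensuring the $\partial_v$ expansion in Step 1 is uniform over $\Theta_n$ despite the trimming constant $c_n\to0$ (I would require $\alpha_n/c_n\to0$ implicitly or absorb it in $\mathcal{T}_n$), and second, the frequency-balancing step in Step 3.
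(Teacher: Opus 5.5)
Your plan is the paper's own. The paper proves Lemma~\ref{lem2} by rerunning the proof of Lemma~\ref{lem1}, with the first-stage term (now of order $\alpha_n\ge n^{-1/2}$) dominating the Kullback--Leibler gap. It then applies Pinsker, the factorization $\phi_{y|x,v}=\phi_\varepsilon\,\phi_{x\beta,\gamma}$, Assumption~\ref{parpro}(f) for $\gamma$, and the adapted argument of Hausman et al.\ (2021) for $\sigma$. Your Steps 1--2 match this.

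The problem is Step 3, which as written does not deliver (b).
\begin{itemize}
\item Integrating the uniform bound $|\Delta\phi_y(s)|\le\|h\|_1$ over $[-l,l]$ costs a factor $l$. Your low-frequency piece should therefore be $O_p(l^{2\lambda+1}\alpha_n)$, not $O_p(l^{2\lambda}\alpha_n)$.
\item More importantly, your cutoff $l\asymp\alpha_n^{-1/(2(\lambda+1))}$ does not depend on $J_n$. With your own tail of order $J_n^2/l^2$, that choice leaves a term of order $J_n^{2}\alpha_n^{1/(\lambda+1)}$, so the claimed characteristic-function distance $O_p(\alpha_n^{1/(\lambda+1)})$ does not follow.
\item The factor $J_n^{2\lambda/(\lambda+1)}$ you then attribute to ``spline norm equivalences'' is not produced by any step. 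If such a loss were really incurred, it would multiply whatever the balancing gives, and the stated rate would fail.
\end{itemize}

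The factor $J_n^{2\lambda/(\lambda+1)}$ comes from the frequency balancing itself once $J_n$ is kept in the cutoff. Mirror the proof of Lemma~\ref{lem1}(b) in $L^1$:
\begin{itemize}
\item Pinsker gives $\E_{x,v}\|h\|_1\le(\E_{x,v}\|h\|_1^2)^{1/2}=O_p(\alpha_n^{1/2})$.
\item The inversion formula, with $|\phi_\varepsilon(s)|^{-1}\asymp|s|^\lambda$ on $[-q,q]$ and the tail bound $|\phi_{x\beta,\gamma}(s|x,v)|\le KJ_n/|s|$, gives $\E_{x,v}|F_{x\tilde\beta_n,v}(t)-F_{x\beta_J^*,v}(t)|\lesssim q^\lambda\alpha_n^{1/2}+J_n/q$.
\item Choosing $q^*\asymp J_n^{1/(\lambda+1)}\alpha_n^{-1/(2(\lambda+1))}$ yields $J_n^{\lambda/(\lambda+1)}\alpha_n^{1/(2(\lambda+1))}$. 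The paper takes this as the bound on $\|\tilde\beta_n-\beta_J^*\|_2$, and squaring gives exactly (b).
\end{itemize}
No further loss can be layered on top. Your squared-$L^2$ bookkeeping, once the missing factor $l$ and the integrated tail are accounted for, does not reproduce these exponents without an additional conversion step you have not supplied. So follow the $L^1$ inversion route here.
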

The proofs of the two results above parallel those of Theorem~\ref{thm2} and Lemma~\ref{lem1}. For the series basis functions, we impose Assumption 2 of \citet{newey1997convergence}. Specifically, let $p^K(z)$ denote the original $K$-dimensional vector of basis functions, and let $P^K(z)$ denote its normalized transformation used in estimation. Assume there exists a nonsingular constant matrix $B$ such that $P^K(z)=Bp^K(z)$ for all $K$; the smallest eigenvalue of $\E[p^K(Z_i)p^K(Z_i)']$ is bounded away from zero uniformly in $K$, and $\sup_{z\in\mathcal{Z}}\|P^K(z)\|\le \zeta_0(K)$ for a sequence $\zeta_0(K)$ satisfying $\zeta_0(K)^2 K/n\to0$ as $n\to\infty$. It is well acknowledged that for power series $\zeta_0(K)\le CK$ and for splines $\zeta_0(K)\le C K^{1/2}$, for some generic positive constant $C$. Denote the conditional distribution function $F_{X_1|Z}$ as $F$, with the true distribution as $F_0$ and the series estimates as $\hat{F}$. Then the pointwise asymptotic distribution of $\hat\theta_n^{np}$ is as follows.

\begin{thm}
	\label{thm5} \textbf{Asymptotic Normality}\\
	Suppose the conditions of Theorem~\ref{thm4} and Assumption~\ref{cf} are satisfied. Assume that the error density $f_\varepsilon$ is ordinary smooth of order $\lambda>0$. Define the score function $\psi(w;\theta,F):=\partial_\theta \log f(y|x,F(x,z);\theta)$, with associated information matrix $\I(\theta,F)=\E[\psi\psi']$ and Hessian $\mathcal{H}(\theta,F)=\E[\partial_\theta\psi]$. Assume that there exists a positive sequence $\{\kappa_{J_n}\}$ with $\kappa_{J_n}\to0$ as $J_n\to\infty$ such that the minimum eigenvalue of the information matrix $\I(\theta_0,F_0)$ satisfies $\lambda_{\text{min}}(\I(\theta_0,F_0))\asymp\kappa_{J_n}$. If $J_n^{\frac{\lambda}{\lambda+1}}{\alpha_n}^{\frac{1}{2(\lambda+1)}}/\kappa_{J_n}\to0$ with $\alpha_n\to0$, $\zeta_0(K_n)^2 K_n/n\to0$, $J_n^{r+1}\kappa_{J_n}\to\infty$, and $J_n^{r+1}/\sqrt{n}\to\infty$, then
	\begin{align*}
		\sqrt{n\kappa_{J_n}}\Omega_{J,\sigma}^{-1/2}(\tilde\sigma-\sigma_0-\mu_\sigma)\xrightarrow{d} \mathcal{N}(0, \mathbf{I}_{d_\sigma}), \\
\sqrt{n \kappa_{J_n}} \, \Omega_{J,\gamma}^{-1/2} (\tilde{\gamma} - \gamma_0-\mu_\gamma) \xrightarrow{d} \mathcal{N}(0, \mathbf{I}_{d_\gamma}),
\end{align*} for each fixed $\tau\in(0,1)$,
\begin{gather*}
	\sqrt{n \kappa_{J_n}} \, \Omega_{J,\tau}^{-1/2} \bigl( \hat{\beta}_n(\tau) - \beta_0(\tau)-\mu_\beta(\tau) \bigr) \xrightarrow{d} \mathcal{N}(0, \mathbf{I}_{d_x})\\
	\Omega_{J,\tau}:=(S(\tau)\otimes \mathbf{I}_{d_x})'\:\Omega_{J,\beta}\: (S(\tau)\otimes \mathbf{I}_{d_x}),
\end{gather*} where each $\Omega_{J,(\cdot)}$ is the corresponding positive definite block (submatrix) of
\begin{gather*}
	\Omega_J :=\kappa_{J_n}\H^{-1}(\theta_J^*,F_0)\E\big[\phi(W_i,\theta_0)\phi(W_i,\theta_0)'\big]\H^{-1}(\theta_J^*,F_0)\\
	\phi(W_i,\theta_0):=\psi(W_i;\theta_0,F_0)+\phi_{K_n}(W_i,\theta_0,F_0)\\
	\phi_{K_n}(W_i,\theta_0):=\E\Big[\partial_v \psi(W;\theta_0,F_0)\big( \1\{X_{1i}\le X_1\}-F_0(X_{1}|Z_i)\big)P'(Z)\Big]\E[P_iP_i']^{-}P_i,
\end{gather*} with $P_i=P^{K_n}(Z_i)$ and the largest eigenvalue of $\Omega_J$ remains bounded. Bias vectors $\mu_{(\cdot)}$ are the corresponding subvectors of 
	\begin{align*}
		\mu:=-\H^{-1}(\theta_J^*,F_0)\E[\partial_v\psi(W_i;\theta_0,F_0)b(v_i)],\quad b(v_i)=O_p(K_n^{1/2-d_1/d_z}).
	\end{align*} Moreover, if $\sqrt{\frac{n}{\kappa_{J_n}}}K_n^{1/2-d_1/d_z}\to0$, such bias is undersmoothed to be negligible.
\end{thm}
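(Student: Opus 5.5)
The plan is to follow the argument for Theorem~\ref{thm3} and replace the parametric first-stage expansion with the series first-stage expansion of \citet{imbens2009identification} and \citet{newey1997convergence}. We start from the first-order condition $\frac1n\sum_i\psi(\hat W_i;\hat\theta_n^{np},\hat F)=0$, which holds w.p.a.1. It is available because Theorem~\ref{thm4} gives consistency, Lemma~\ref{lem2} gives the rates, and $\theta_0\in\Theta^\circ$ puts $\hat\theta_n^{np}$ in the interior of $\Theta_n$. A mean-value expansion around $(\theta_J^*,F_0)$ then gives
\begin{align*}
\hat\theta_n^{np}-\theta_J^* = -\H_n^{-1}(\bar\theta)\Big[\tfrac1n\textstyle\sum_i\psi(W_i;\theta_J^*,F_0)+\tfrac1n\sum_i\partial_v\psi(W_i;\theta_J^*,F_0)(\hat V_i-V_i)+R_n\Big],
\end{align*}
where $R_n$ collects the second-order terms in $\hat V-V$ and the cross terms. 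We handle $R_n$ with the bound $\E_n[(\hat V_i-V_i)^2]=O_p(\alpha_n^2)$ and the bounded second derivatives in Assumption~\ref{parpro}(b)--(c) on the trimmed set $\mathcal T_n$. The rates of Lemma~\ref{lem2}, combined with $J_n^{\lambda/(\lambda+1)}\alpha_n^{1/(2(\lambda+1))}/\kappa_{J_n}\to0$, give $\|\H_n(\bar\theta)-\H(\theta_J^*,F_0)\|=o_p(\kappa_{J_n})$, so we may replace $\H_n^{-1}(\bar\theta)$ by $\H^{-1}(\theta_J^*,F_0)$ at cost $o_p$ relative to $(n\kappa_{J_n})^{-1/2}$. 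The sieve bias $\psi(\cdot;\theta_J^*)-\psi(\cdot;\theta_0)$ is of order $J_n^{-(r+1)}$ in $\beta$. After scaling by $\sqrt{n\kappa_{J_n}}\,\kappa_{J_n}^{-1}$ it is negligible, using $J_n^{r+1}\kappa_{J_n}\to\infty$ and $J_n^{r+1}/\sqrt n\to\infty$ as in Theorem~\ref{thm3}.

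The main obstacle is linearizing the generated-regressor term $\frac1n\sum_i\partial_v\psi_i(\hat V_i-V_i)$. We write $\hat V_i-V_i=p^{K_n}(Z_i)'(\hat a_n(X_{1i})-a_{K}(X_{1i}))+b(v_i)$, with $b$ the series approximation error of order $K_n^{1/2-d_1/d_z}$. We then follow the Newey (1997)/Imbens--Newey V-statistic argument. First, replace $\partial_v\psi(W_i)$ by its projection on $P_i$ and take the expectation over the $i$-index. This turns the double sum $\frac1{n^2}\sum_i\sum_j\partial_v\psi_i P_i'\E[PP']^{-}P_j(\1\{X_{1j}\le X_{1i}\}-F_0(X_{1i}|Z_j))$ into the single average $\frac1n\sum_j\phi_{K_n}(W_j,\theta_0)$ plus a degenerate U-statistic remainder. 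That remainder is bounded in second moment by $O(\zeta_0(K_n)^2K_n/n)\to0$. The condition $\zeta_0(K)^2K/n\to0$ also gives $\|\hat{\E}[PP']-\E[PP']\|=o_p(1)$, which lets us swap the sample Gram matrix for its population version. The bias part contributes $\E[\partial_v\psi_i b(v_i)]$, and premultiplying by $-\H^{-1}$ yields $\mu$.

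For the limit we use a Lindeberg--Feller CLT for triangular arrays. Fix $\tau$ and take the scalar projection $c'\sqrt{n\kappa_{J_n}}\,\Omega_{J,\tau}^{-1/2}(S(\tau)\otimes\mathbf I_{d_x})'\H^{-1}\frac1n\sum_i\phi(W_i,\theta_0)$ (and likewise for the $\sigma$ and $\gamma$ blocks). This has variance $c'c$ by construction of $\Omega_J$. The Lyapunov condition follows from the fourth-moment bounds in Assumption~\ref{parpro}(d), the bounded first-stage indicator term, and the fact that the eigenvalues of $\Omega_J$ are bounded above and, via $\Omega_{\text{sec}}\le\Omega_J$, below. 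We show the latter sandwich as in Theorem~\ref{thm3}: $\E[\psi\phi_{K_n}']$ is controlled by Cauchy--Schwarz and by $\E[\phi_{K_n}\phi_{K_n}']\le C\,\E[\psi\psi']$, which holds because $\phi_{K_n}$ is an $L_2$ projection of a bounded function times $\partial_v\psi$. Finally, $\beta_J^*(\tau)-\beta_0(\tau)=O(J_n^{-(r+1)})$ is absorbed, and under $\sqrt{n/\kappa_{J_n}}K_n^{1/2-d_1/d_z}\to0$ the bias $\sqrt{n\kappa_{J_n}}\mu$ is $O(\sqrt{n\kappa_{J_n}}\,\kappa_{J_n}^{-1}K_n^{1/2-d_1/d_z})\to0$, which gives the undersmoothing statement.
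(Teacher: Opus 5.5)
Your architecture matches the paper's proof: the first-order-condition expansion, the split of $\hat V_i-V_i$ into a V-statistic part and $b(v_i)$, the Hoeffding projection onto $\phi_{K_n}$ with a degenerate remainder controlled by $\zeta_0(K_n)^2K_n/n$, the bias $\mu$ read off from $\E[\partial_v\psi\,b(v_i)]$, Lindeberg--Feller, and undersmoothing. Your sample-Hessian mean-value expansion versus the paper's $\mathbb{A}+\mathbb{B}+\mathbb{C}$ split is cosmetic. The one step whose stated justification does not go through is the variance sandwich you use for the Lyapunov condition. For the upper bound you argue that $\E[\phi_{K_n}\phi_{K_n}']\le C\,\E[\psi\psi']$ because $\phi_{K_n}$ is an $L_2$ projection of a bounded function times $\partial_v\psi$. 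That argument only yields $\nu'\E[\phi_{K_n}\phi_{K_n}']\nu\le C\,\E[(\nu'g(X,V))^2]$ with $g(x,v):=\E[\partial_v\psi\mid X=x,V=v]$, which is a bound in terms of the $v$-derivative of the score, not the score itself. Nothing makes $\E[gg']$ small in the directions where $\lambda_{\min}(\I(\theta_0,F_0))\asymp\kappa_{J_n}$. Those are exactly the directions that $\H^{-1}(\theta_J^*,F_0)$ inflates by $\kappa_{J_n}^{-1}$, so without domination by $\I$ the largest eigenvalue of $\Omega_J$ can be of order $\kappa_{J_n}^{-1}$.

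The missing ingredient is the conditional identity the paper uses in the proof of Theorem~\ref{thm3}. Differentiating $\int\psi(y,x,v;\theta_0)\,f(y|x,v;\theta_0)\,dy=0$ in $v$ gives $g(x,v)=-\E[\psi\,\partial_v\log f(Y|X,V;\theta_0)\mid X=x,V=v]$. With $\partial_v\log f$ bounded by $K_s$, conditional Cauchy--Schwarz then gives $\E[(\nu'g)^2]\le K_s^2\,\nu'\I(\theta_0,F_0)\nu$. This applies to $\phi_{K_n}$ because it depends on $\partial_v\psi$ only through $\E[\partial_v\psi\mid X,Z]=g(X,V)$. For the lower bound $\Omega_{\text{sec}}\le\Omega_J$, Cauchy--Schwarz on $\E[\psi\phi_{K_n}']$ is the wrong tool, since it bounds the cross term only in absolute value and cannot deliver a positive-semidefinite ordering. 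What you need is that the cross term vanishes. Indeed $\phi_{K_n}(W_i,\theta_0)$ is a function of $(X_{1i},Z_i)$ alone, while $\E[\psi_i\mid X_i,Z_i]=\E[\psi_i\mid X_i,V_i]=0$, because $U\ind Z\mid(X,V)$ in the triangular system and $\varepsilon\ind(U,X,Z)$. Hence $\E[\phi\phi']=\I+\E[\phi_{K_n}\phi_{K_n}']$, which gives $\Omega_J\ge\Omega_{\text{sec}}$. Separately, your sieve-bias claim needs $\sqrt{n\kappa_{J_n}}\,\kappa_{J_n}^{-1}J_n^{-(r+1)}=\sqrt n/(J_n^{r+1}\kappa_{J_n}^{1/2})\to0$, i.e.\ $J_n^{2r+2}\kappa_{J_n}/n\to\infty$. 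This does not follow from $J_n^{r+1}\kappa_{J_n}\to\infty$ and $J_n^{r+1}/\sqrt n\to\infty$. The paper is equally loose here, so treat it as an extra rate condition rather than a consequence.
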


See Appendix~\ref{pf_thm5} for the proof. The above theorem extends the asymptotic normality result to the case of a nonparametric first stage. We show that the first-stage error can be governed so that it does not break down the limiting distribution, while it introduces an additional regularization bias term arising from series approximation error, which enters linearly through the score derivative $\partial_v\psi$. The resulting influence function includes a U-statistic projection that accounts for the dependence between the nonparametric control residuals and the second-stage score. As in Theorem~\ref{thm3}, the influence of the first stage on both the asymptotic variance and the bias term disappears only in the absence of endogeneity, where $\partial_v\psi=0$. In practice, the above result guarantees flexible adoption of nonparametric control-function estimators without sacrificing valid inference, provided the regularity and undersmoothing conditions are satisfied. The empirical selection of the optimal series dimension $K_n$ and sieve dimension $J_n$ is beyond the scope of this paper.

\subsection{Inference via Bootstrap}
\label{subsec:inference}

Although Theorem~\ref{thm3} and \ref{thm5} establishes asymptotic normality of the two-stage sieve MLE, direct variance estimation can be cumbersome depending on the distributional specification. In practice, we therefore recommend inference using nonparametric pairs bootstrap, following the same principle as in \citet{chen2003estimation}. Specifically, for each bootstrap replication $b=1,\dots,B$, draw a bootstrap sample $\{(Y_i^{*(b)},X_i^{*(b)},Z_i^{*(b)})\}_{i=1}^n$ from the empirical distribution of the data. We re-estimate the full two-step procedure to obtain bootstrap control variables $\hat V_i^{*(b)}$ and the second-stage sieve MLE $\hat\theta_n^{*(b)}$. We then construct pointwise $(1-\alpha)$ confidence intervals using the bootstrap standard errors: $\hat\beta_k(\tau)\pm z_{1-\alpha/2}\,\widehat{\mathrm{se}}_{k}(\tau)$, where $z_{1-\frac{\alpha}{2}}$ denotes the $(1-\frac{\alpha}{2})$ standard normal quantile and
\[
\widehat{\mathrm{se}}_{k}(\tau)
=
\left[
\frac{1}{B-1}
\sum_{b=1}^B
\left(
\hat\beta_k^{*(b)}(\tau)
-
\bar\beta_k^*(\tau)
\right)^2
\right]^{1/2},
\qquad
\bar\beta_k^*(\tau)
=
\frac{1}{B}
\sum_{b=1}^B
\hat\beta_k^{*(b)}(\tau),
\] for each quantile index $\tau$ of interest. Analogous intervals can be constructed for finite-dimensional parameters. Such procedure automatically accounts for estimation uncertainty in both stages. We establish its validity in Appendix~\ref{subsec:bootstrap}.

\section{Simulation}
\label{sec:sim}

In this section, we examine the finite-sample performance of the proposed estimator using Monte Carlo simulations. The data-generating-process (DGP) is given by
\begin{align*}
	Y=\beta_0(U)+X_{1}\beta_1(U)+X_2\beta_2(U)+\varepsilon.
\end{align*} The latent rank is generated as $U=\Phi(A)$, where $A\sim\mathcal{N}(0,1)$ and $\Phi$ denotes the standard normal CDF, so that $U\sim\mathcal{U}[0,1]$. The coefficient functions are specified as
\begin{align*}
	\begin{pmatrix}
		\beta_0(u)\\ \beta_1(u)\\ \beta_2(u)
	\end{pmatrix}=
	\begin{pmatrix}
		1+3u-u^2\\ \exp(u) \\ \sqrt{u}
	\end{pmatrix}.
\end{align*} The model includes one endogenous regressor $X_1$ and one exogenous control $X_2$, such that $X_1\ind X_2$ and $X_2\sim \text{LN}(0,1)$. We generate $X_1\sim\text{LN}(0,1)$ by:
\begin{align*}
	\log X_1=\sqrt{\delta^2} Z+\sqrt{1-\delta^2} \eta,
\end{align*} where $Z,\eta\sim \mathcal{N}(0,1)$ and $Z\ind \eta$. The parameter $\delta\in (0,1)$ controls instrument strength and is set to $0.5$ in our baseline DGP. In the baseline scenario, endogeneity is introduced through a Gaussian copula linking the two latent variables $U$ and $V$. Specifically,
\begin{align*}
	V = \Phi(\eta) = \Phi(\rho A+\sqrt{1-\rho^2} B),
\end{align*}
where $B\sim \mathcal{N}(0,1)$, $B\ind A$, and $\rho\in(-1,1)$ governs the degree of endogeneity. Note that $X_1$ is strictly increasing in $\eta$, and $V$ is the CDF of $\eta$. Finally, we draw mean-zero measurement error $\varepsilon$ from a mixed Gaussian distribution following \citet{hausman2021errors}:
\begin{align*}
	\varepsilon\sim \left\{ \begin{array}{lcl}
\mathcal{N}(-3,1), &\text{w.p }0.5,\\
\mathcal{N}(2,1), &\text{w.p }0.25,\\
\mathcal{N}(4,1), &\text{w.p }0.25.
\end{array}\right.
\end{align*}  

We illustrate the two-step sieve maximum likelihood estimator (2SSMLE) using both parametric and nonparametric first-stage estimation. In the parametric case, assume the true specification is fully known, so that $V$ can be estimated by directly plugging in OLS estimates. More generally, in the nonparametric case, we run a spline regression of $X_1$ on $(Z,X_2)$ and then construct $\hat{V}_i=\hat{F}(X_{1i}|Z_i,X_{2i})$ as in (\ref{v_np}). The second step plugs in first-step estimates $\hat{v}$ and solves the empirical log-likelihood maximization problem specified in (\ref{e6}). We use B-splines of order $r$ with $J_n$ grid knots to approximate the unknown coefficient functions at $\{\beta_k(\tau_j)\}_{j=1}^{J_n}$. Given the properties of the measurement error and copula distributions, the integral within the log-likelihood can be well approximated by Gauss-Legendre quadrature with $q$ nodes per interval on the same partition used for the B-spline basis.

Despite the sophistication of the second-stage likelihood and optimization problem, the proposed estimator remains computationally efficient under our implementation. Specifically, we solve the optimization problem using a gradient-based constrained interior-point method, with analytical gradients supplied to stabilize the descent path and facilitate convergence. Stochastic gradient descent may also be employed to mitigate the risk of convergence to local stationary points. Additional implementation details are provided in Appendix~\ref{subsec:algorithm}.

Our 2SSMLE estimator simultaneously corrects the bias from endogeneity and measurement error. In contrast, standard quantile regression (QR) is biased from both sources. The control-function QR (hereafter denoted CFQR), such as \citep{lee2007endogeneity}, addresses only endogeneity, and the sieve MLE (hereafter denoted SMLE) of \citep{hausman2021errors} addresses only measurement error but ignores possible endogeneity. For CFQR in our simulations, we implement an adaptation of \cite{lee2007endogeneity}: the control variable $\hat V$ is estimated from our first stage, and the second runs a partial linear QR of $Y$ on $(X_1,X_2,\hat{V})$.\footnote{Specifically, define $P_k(w)=[x_1,x_2,p_1(v),\dots,p_k(v)]'$ for a power-series basis $\{p_k:k=1,2,\dots\}$. With a trimming function $t(w)=\1\{w\in\mathcal{W}\}$ to limit unduly values \citep{lee2007endogeneity,blundell2007censored}, CFQR solves $\min_\theta S_{nk}(\theta)=\frac1n\sum_{i=1}^n t(\hat{W}_i)\rho_\tau[Y_i-P_k(\hat{W}_i)'\theta]$ for any $\tau\in(0,1)$ and $\hat{W}_i=(X_{1i},X_{2i},\hat{V}_i)'$.} 

We first illustrate how standard estimators are biased across quantiles when both endogeneity and measurement error are present. We then shut down one source of bias at a time to assess the cost of accounting for features that may be absent. Finally, we examine the robustness of 2SSMLE to misspecification and evaluate bootstrap validity.

\begin{figure}[ht]
	\centering
	\vspace{1em}
	\caption{MC Results Comparison: $\varepsilon\sim 3\mathcal{N}$ and $\rho=0.5$} 
	\includegraphics[scale=0.5]{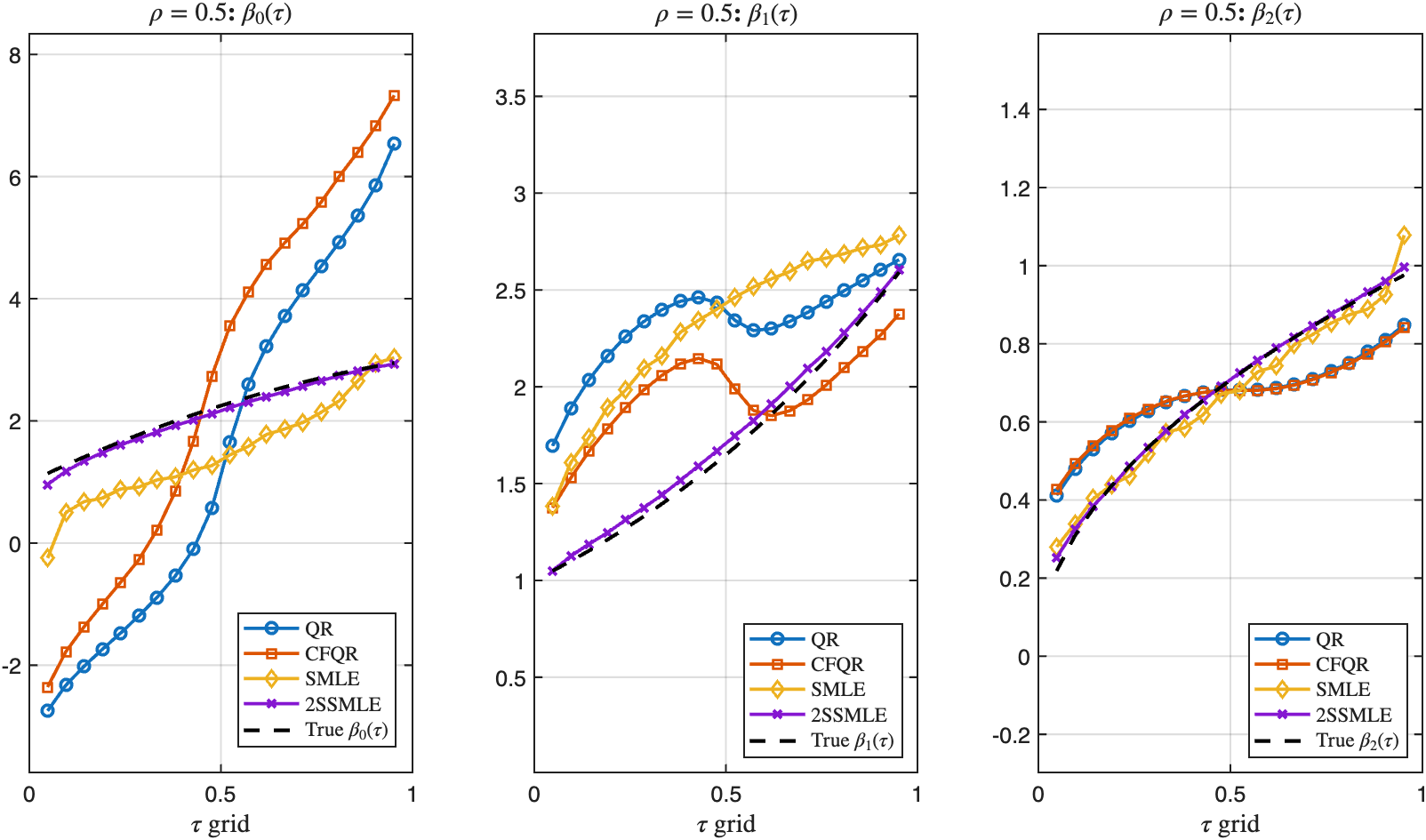}
    \caption*{{\footnotesize Note: Figure~\ref{fig1} compares the true coefficient functions with four estimators—QR (circle), QR with control function (square), sieve MLE (diamond), and two-step sieve MLE with control function (cross)—based on 500 Monte Carlo replications with sample size $n=5000$. The DGP follows Section~\ref{sec:sim}, with measurement error drawn from a three-component Gaussian mixture and copula dependence $\rho=0.5$. CFQR implements an adaptation of \citet{lee2007endogeneity} with $k=5$ regression splines (to the fourth order). SMLE replicates \citep{hausman2021errors}. 2SSMLE uses 20 grid knots, B-splines at $r=1$, and $5$ knots per interval for the quadrature.}}
    \label{fig1}
\end{figure}

Figure~\ref{fig1} plots the average point estimates $\hat\beta(\tau_j)$ at 20 grid knots to visually compare the bias patterns across methods. We first focus on the slope coefficient function $\beta_1(\cdot)$ (Figure~\ref{fig1}, middle panel), while the intercept function $\beta_0(\cdot)$ shown in the left panel exhibits analogous patterns. CFQR (red squares) is biased due to the measurement error $\varepsilon$: lower quantiles biased upwards and upper quantiles downwards, reflecting the influence of the two spikes of the $\varepsilon$-distribution at $-3$ and $3$, as shown in Figure~\ref{fig:error}. SMLE (yellow diamonds), unaffected by EIV, tracks the shape of the true $\beta_1$ but is upward biased due to endogeneity. Simple QR (blue circles) suffers from both issues. By comparison, 2SSMLE (purple crosses) exhibits much smaller bias at all quantiles: its average absolute bias is about 4\% of the true $\beta_1$, versus 43\% for QR, 24\% for CFQR, and 35\% for SMLE. For the slope coefficient of the exogenous covariate $X_2$ (Figure~\ref{fig1}, right panel), there is no endogeneity issue. Consequently, QR and CFQR closely align with each other and are affected only by measurement error. Likewise, SMLE and 2SSMLE closely overlap--both track the true coefficient function well.

\begin{figure}[ht]
	\centering
	\vspace{1em}
	\caption{MC Results: 2SSMLE $\varepsilon\sim 3\mathcal{N}$} 
	\includegraphics[scale=0.5]{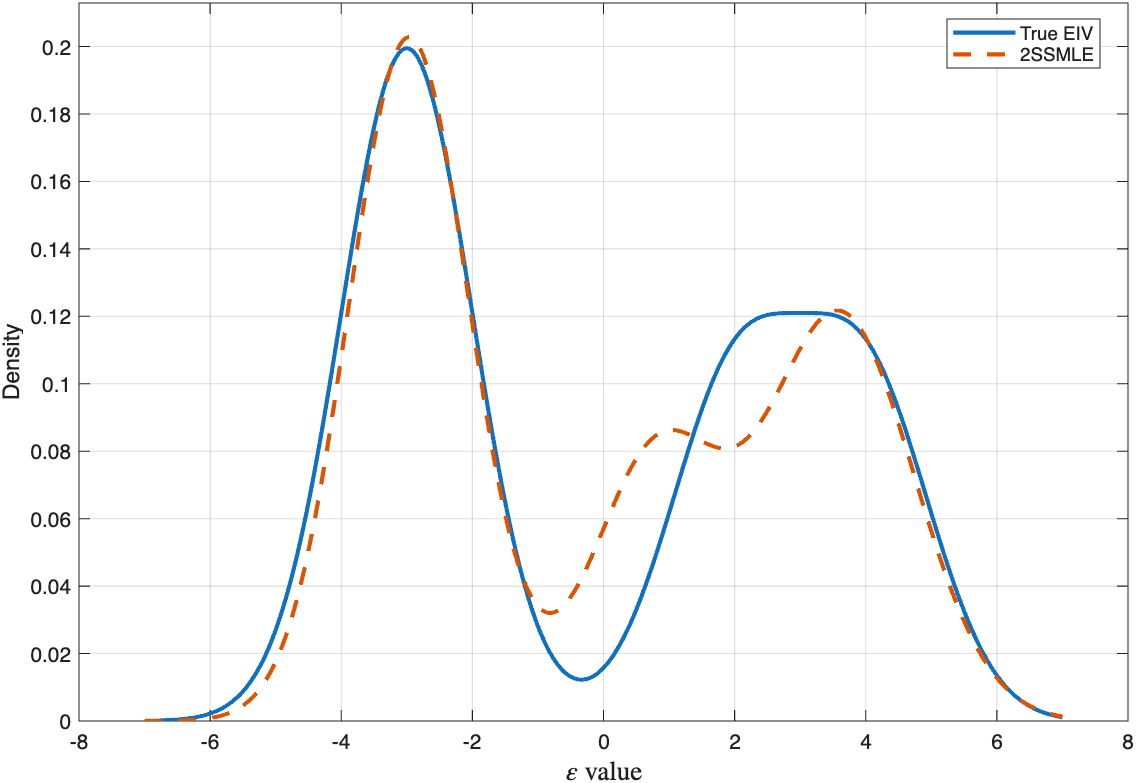}
	\caption*{{\footnotesize Note: Figure~\ref{fig:error} plots the true measurement error density (solid line) against average 2SSMLE estimated density (dash line) from 500 Monte Carlo simulations with sample size $n=5000$. The DGP follows Section~\ref{sec:sim}, with measurement error drawn from a three-component Gaussian mixture and copula dependence $\rho=0.5$. 2SSMLE uses 20 grid knots, B-splines at $r=1$, and $5$ knots per interval for the quadrature.}}
	\label{fig:error}
\end{figure}

Such bias worsens as endogeneity exacerbates: In Figure~\ref{fig2}, both QR and SMLE results in larger deviation as the correlation between $U$ and $V$ increases from $\rho=0.5$ to $\rho=0.9$. Note that the impact of EIV depends on its magnitude relative to the true outcome. In Figure~\ref{fig3}, when $\varepsilon\sim\mathcal N(0,1)$, the EIV-induced distortion is small, making SMLE and QR almost indistinguishable, both primarily reflecting endogeneity rather than measurement error. Finally, we compare 2SSMLE with the default estimators in settings where either only endogeneity or measurement error is absent, thereby illustrating the cost of conservatively accounting for features that may not actually be present. As shown in Table~\ref{tab:loss}, 2SSMLE generally outperforms SMLE when there is no endogeneity and CFQR when there is no EIV, although it produces slightly larger bias in the slope coefficient function than CFQR due to misspecification in the likelihood model.\footnote{Although our identification framework does not formally allow $\varepsilon\stackrel{a.s.}{=}0$, the 2SSMLE can approximate a kernel-smoothed indicator function through the error distribution, making it applicable to multivalued discrete measurement error as well.}

In the absence of knowledge about the true distributional families, we illustrate the flexibility of the Gaussian–mixture specification by showing that it can approximate a variety of alternative EIV distributions following \citet{hausman2021errors}, including student-$t$ and Laplace. Numerical details of the aforementioned results can be found in Tables~\ref{tab:gaumixrho1} to \ref{tab:semiepsilon}. Table~\ref{tab:pfirstboot} shows the bootstrap standard errors and average coverage of our bootstrap confidence interval. At a significance level $\alpha=0.05$, the overall coverage is close to $95\%$. Without further specification, all the above implementation of 2SSMLE uses a parametric first stage. Finally, we repeat the above exercises using series-estimated control functions. Table~\ref{tab:npfirstboot} shows that results parallel the pattern seen in the parametric framework. More details on mean bias and MSE of different methods, bootstrap standard errors, bootstrap coverage and additional simulation results are provided in Appendix~\ref{subsec:simresults}.

\section{Conclusion}
\label{sec:con}

This paper studies quantile regression in the presence of an endogenous regressor and additive measurement error in the dependent variable. After isolating the endogenous treatment from the latent rank with a control variable, we establish nonparametric identification of the conditional quantile coefficient function. Assuming the nuisance distributions are known up to some finite-dimensional parameters, we propose a plug-in two-step sieve maximum likelihood estimator (2SSMLE), which combines a control-function first stage with a sieve ML second stage, incorporating the generated control variable through copula weights. The first-stage control function can be estimated flexibly by series regression. When the series dimension and the sieve dimension grow at appropriate rates, the estimator is consistent and asymptotically normal, with convergence rates governed by the degree of ill-posedness. Monte Carlo simulations show that our estimator substantially outperform existing methods in settings where both endogeneity and measurement error may be present. In future work, a substantive empirical application—such as revisiting Engel curve estimation in \citet{blundell2007semi}—would provide an opportunity to examine how outcome mismeasurement interacts with endogeneity in practice, and to assess the empirical consequences of our proposed corrections for distributional treatment effects.


\bibliography{eeq_references.bib}

\pagebreak
\section{Appendix}
\label{sec:appendix}

\subsection{Appendix A1: 2SSMLE Implementation}
\label{subsec:algorithm}

We choose the starting values of each parameter as follows.
\begin{enumerate}[label=(\alph*)]
	\item Quantile coefficient functions $\beta$: \\Run QR at $\tilde{J}_n$ evaluation points to obtain $\{\hat\beta^{QR}_k(\tau_j)\}_{j=1}^{\tilde{J}_n}$. We then plug in $\hat\beta^{QR}$ and implement an Expectation-Maximization (EM) algorithm for up to 50 iterations. Similar to \citet{hausman2021errors}, the M-step reduces to a weighted least squares (WLS) problem under Gaussian mixture errors and a Gaussian copula specification, where the weights are observation-specific posterior probabilities determined by the current parameter values and the copula structure. Finally, we project the resulting $\hat\beta^{WLS}$ onto the $J_n$-dimensional B-spline basis via least squares to obtain the initial value $\hat\beta^{\text{start}}$.
	\item Distributional parameters $(\sigma,\gamma)$:\\ The copula correlation parameter $\hat\rho^{\text{start}}$ is initialized as the sample correlation between the residuals from regressing $\log(X_1)$ on $(Z,X_2)$ and the residuals from regressing $Y$ on $(X_1,X_2)$. For the Gaussian mixture error distribution, we initialize the component weights equally at $1/3$, set the component means symmetrically around zero with magnitude $\mathrm{SD}(y)/\sqrt{2}$,, and initialize the component standard deviations at $\mathrm{SD}(y)/\sqrt{2}$.
\end{enumerate}

In practice, where the true distributions are unknown, we recommend increasing the number of Gaussian mixture components for the measurement error and incorporating mixtures of heterogeneous copulas following \citet{burda2008bayesian,burda2012poisson,khaled2017approximating,qu2021copula}.\footnote{\citet{khaled2017approximating} show that finite mixtures of elliptical copulas (e.g. Gaussian) or Archimedean copulas are not dense in the space of all copulas. \citet{qu2021copula} thus suggest that a finite mixture of heterogeneous parametric copulas is flexible enough to capture more variations in the dependence structures.} In our simulation, we focus on a single Gaussian copula. 

Because the covariates are strictly positive, as in many empirical applications and related theoretical work \citep{hausman2021errors,chernozhukov2009improving}, we impose monotonicity on each component $\hat\beta_{nk}$. This condition is stronger than the theoretical requirement that $x'\beta(\cdot)$ be monotonically increasing for all $x\in\mathcal{X}$, since it rules out local non-monotonic perturbations in each component function. Nevertheless, such a monotonicity constraint is crucial for numerical stability.\footnote{Note that monotonicity cannot be imposed by rearranging the estimated functions $x'\hat\beta_{nk}$, as suggested by \citet{chernozhukov2009improving}, because the likelihood integrand $g(\beta(u),u):=f_\varepsilon(y-x'\beta(u))f_c(u\mid v)$ depends on $u$ both through $\beta(\cdot)$ and through the copula weighting function.} 

\vspace*{-0.5em}
\subsection{Appendix A2: Simulation Results}
\label{subsec:simresults}

\begin{landscape}
\begin{table}[htbp]
  \centering
  \caption{MC Results Comparison: $\varepsilon\sim 3\mathcal{N}$ and $\rho=0.5$}
  \begin{adjustbox}{width=0.85\linewidth}
    \begin{tabular}{c|cccc|cccc|cccc}
    \toprule
    \toprule
    \multicolumn{1}{c}{} & \multicolumn{12}{c}{\textit{I. Mean Bias}} \\
    \midrule
          & \multicolumn{4}{c|}{$\beta_0$}    & \multicolumn{4}{c|}{$\beta_1$}    & \multicolumn{4}{c}{$\beta_2$} \\
    \midrule
    Quantile & QR    & CFQR  & SMLE  & 2SSMLE & QR    & CFQR  & SMLE  & 2SSMLE & QR    & CFQR  & SMLE  & 2SSMLE \\
    \midrule
    0.1 & -4.381 & -3.695 & -0.777 & -0.101 & 0.925 & 0.520 & 0.510 & 0.026 & 0.172 & 0.184 & 0.031 & 0.017 \\
    0.2 & -3.572 & -2.629 & -0.800 & -0.055 & 1.009 & 0.591 & 0.683 & 0.036 & 0.134 & 0.142 & 0.002 & -0.003 \\
    0.3 & -2.891 & -1.802 & -0.860 & -0.062 & 1.005 & 0.588 & 0.765 & 0.046 & 0.094 & 0.099 & -0.017 & 0.000 \\
    0.4 & -2.156 & -0.859 & -0.915 & -0.076 & 0.935 & 0.520 & 0.818 & 0.053 & 0.049 & 0.050 & -0.032 & 0.002 \\
    0.5 & -1.277 & 0.186 & -0.934 & -0.088 & 0.815 & 0.393 & 0.790 & 0.059 & -0.008 & -0.010 & -0.018 & 0.001 \\
    0.6 & -0.262 & 1.218 & -0.806 & -0.076 & 0.666 & 0.222 & 0.744 & 0.054 & -0.073 & -0.075 & -0.027 & 0.002 \\
    0.7 & 0.745 & 2.035 & -0.687 & -0.078 & 0.518 & 0.052 & 0.647 & 0.053 & -0.120 & -0.122 & -0.018 & 0.001 \\
    0.8 & 1.704 & 2.820 & -0.564 & -0.048 & 0.378 & -0.079 & 0.522 & 0.040 & -0.143 & -0.146 & -0.020 & 0.003 \\
    0.9 & 2.752 & 3.853 & -0.182 & -0.020 & 0.234 & -0.180 & 0.360 & 0.027 & -0.146 & -0.152 & -0.037 & 0.006 \\
    \midrule
          &       &       &       & \multicolumn{1}{c}{} &       &       &       & \multicolumn{1}{c}{} &       &       &       &  \\
          & \multicolumn{12}{c}{\textit{II. Mean Squared Error}} \\
    \midrule
          & \multicolumn{4}{c|}{$\beta_0$}    & \multicolumn{4}{c|}{$\beta_1$}    & \multicolumn{4}{c}{$\beta_2$} \\
    \midrule
    Quantile & QR    & CFQR  & SMLE  & 2SSMLE & QR    & CFQR  & SMLE  & 2SSMLE & QR    & CFQR  & SMLE  & 2SSMLE \\
    \midrule
    0.1 & 19.216 & 16.591 & 1.030 & 0.057 & 0.863 & 0.285 & 0.354 & 0.012 & 0.033 & 0.037 & 0.057 & 0.004 \\
    0.2 & 12.774 & 9.912 & 1.111 & 0.038 & 1.023 & 0.361 & 0.559 & 0.013 & 0.020 & 0.023 & 0.055 & 0.003 \\
    0.3 & 8.377 & 6.437 & 1.212 & 0.035 & 1.013 & 0.356 & 0.670 & 0.015 & 0.011 & 0.012 & 0.057 & 0.003 \\
    0.4 & 4.664 & 4.334 & 1.304 & 0.034 & 0.878 & 0.282 & 0.741 & 0.016 & 0.005 & 0.005 & 0.059 & 0.003 \\
    0.5 & 1.649 & 4.080 & 1.268 & 0.036 & 0.668 & 0.167 & 0.682 & 0.015 & 0.003 & 0.003 & 0.063 & 0.003 \\
    0.6 & 0.086 & 5.052 & 1.038 & 0.032 & 0.447 & 0.062 & 0.598 & 0.014 & 0.008 & 0.008 & 0.061 & 0.003 \\
    0.7 & 0.570 & 7.515 & 0.768 & 0.031 & 0.271 & 0.016 & 0.451 & 0.012 & 0.017 & 0.018 & 0.047 & 0.003 \\
    0.8 & 2.917 & 11.026 & 0.533 & 0.027 & 0.146 & 0.019 & 0.293 & 0.009 & 0.023 & 0.024 & 0.039 & 0.002 \\
    0.9 & 7.586 & 18.018 & 0.195 & 0.018 & 0.058 & 0.045 & 0.141 & 0.006 & 0.024 & 0.026 & 0.031 & 0.002 \\
    \bottomrule
    \bottomrule
    \end{tabular}%
    \end{adjustbox}
  \caption*{{\footnotesize Notes: Table~\ref{tab:gaumixrho1} reports mean bias and MSE for four estimators across 500 Monte Carlo replications with sample size $n=5,000$. The DGP follows Section~\ref{sec:sim}, with $\varepsilon$ drawn from a three-component Gaussian mixture and copula dependence $\rho=0.5$. CFQR implements an adaptation of \citet{lee2007endogeneity} with $k=5$ regression splines (to the fourth order). SMLE replicates  \citep{hausman2021errors}. 2SSMLE uses 20 grid knots, B-splines at $r=1$, and $5$ knots per interval for the quadrature. The upper panel reports mean bias; the lower panel reports MSE.}}%
  \label{tab:gaumixrho1}%
\end{table}%
\end{landscape}

\pagebreak

\begin{landscape}
\begin{table}[htbp]
  \centering
  \caption{MC Results Comparison: $\varepsilon\sim 3\mathcal{N}$ and $\rho=0.9$}
  \begin{adjustbox}{width=0.85\linewidth}
    \begin{tabular}{c|cccc|cccc|cccc}
    \toprule
    \toprule
    \multicolumn{1}{c}{} & \multicolumn{12}{c}{\textit{I. Mean Bias}} \\
    \midrule
          & \multicolumn{4}{c|}{$\beta_0$}    & \multicolumn{4}{c|}{$\beta_1$}    & \multicolumn{4}{c}{$\beta_2$} \\
    \midrule
    Quantile & QR    & CFQR  & SMLE  & 2SSMLE & QR    & CFQR  & SMLE  & 2SSMLE & QR    & CFQR  & SMLE  & 2SSMLE \\
    \midrule
    0.1   & -5.058 & -3.883 & -1.265 & 0.018 & 1.672 & 0.973 & 1.293 & -0.039 & 0.222 & 0.258 & 0.086 & 0.006 \\
    0.2   & -4.170 & -2.662 & -1.328 & -0.011 & 1.602 & 0.934 & 1.421 & 0.015 & 0.167 & 0.185 & 0.046 & 0.001 \\
    0.3   & -3.444 & -1.761 & -1.340 & -0.034 & 1.501 & 0.856 & 1.415 & 0.018 & 0.114 & 0.118 & 0.016 & 0.001 \\
    0.4   & -2.662 & -0.823 & -1.317 & -0.041 & 1.373 & 0.743 & 1.353 & 0.014 & 0.058 & 0.057 & -0.021 & 0.004 \\
    0.5   & -1.714 & 0.246 & -1.245 & -0.017 & 1.220 & 0.590 & 1.241 & 0.010 & -0.010 & -0.014 & -0.066 & 0.001 \\
    0.6   & -0.623 & 1.244 & -1.171 & -0.022 & 1.052 & 0.404 & 1.105 & 0.009 & -0.084 & -0.089 & -0.066 & 0.000 \\
    0.7   & 0.407 & 2.195 & -1.079 & -0.008 & 0.884 & 0.228 & 0.949 & 0.001 & -0.140 & -0.144 & -0.084 & 0.001 \\
    0.8   & 1.361 & 3.173 & -0.951 & -0.017 & 0.706 & 0.062 & 0.764 & 0.006 & -0.167 & -0.179 & -0.085 & -0.001 \\
    0.9   & 2.393 & 4.351 & -0.694 & -0.033 & 0.510 & -0.110 & 0.557 & 0.009 & -0.172 & -0.196 & -0.024 & -0.001 \\
    \midrule
          &       &       &       & \multicolumn{1}{c}{} &       &       &       & \multicolumn{1}{c}{} &       &       &       &  \\
          & \multicolumn{12}{c}{\textit{II. Mean Squared Error}} \\
    \midrule
          & \multicolumn{4}{c|}{$\beta_0$}    & \multicolumn{4}{c|}{$\beta_1$}    & \multicolumn{4}{c}{$\beta_2$} \\
    \midrule
    Quantile & QR    & CFQR  & SMLE  & 2SSMLE & QR    & CFQR  & SMLE  & 2SSMLE & QR    & CFQR  & SMLE  & 2SSMLE \\
    \midrule
    0.1   & 20.948 & 17.765 & 1.896 & 0.032 & 2.797 & 0.956 & 1.745 & 0.054 & 0.052 & 0.069 & 0.058 & 0.004 \\
    0.2   & 14.471 & 9.334 & 2.015 & 0.033 & 2.569 & 0.880 & 2.053 & 0.029 & 0.030 & 0.036 & 0.059 & 0.003 \\
    0.3   & 9.416 & 5.562 & 2.058 & 0.039 & 2.253 & 0.742 & 2.022 & 0.021 & 0.015 & 0.016 & 0.064 & 0.002 \\
    0.4   & 4.912 & 3.797 & 1.957 & 0.042 & 1.887 & 0.563 & 1.837 & 0.017 & 0.005 & 0.005 & 0.067 & 0.002 \\
    0.5   & 1.392 & 3.952 & 1.778 & 0.041 & 1.490 & 0.361 & 1.546 & 0.015 & 0.002 & 0.002 & 0.079 & 0.002 \\
    0.6   & 0.022 & 5.519 & 1.579 & 0.034 & 1.109 & 0.176 & 1.228 & 0.012 & 0.009 & 0.010 & 0.085 & 0.002 \\
    0.7   & 0.804 & 8.333 & 1.316 & 0.030 & 0.784 & 0.063 & 0.910 & 0.009 & 0.022 & 0.023 & 0.082 & 0.002 \\
    0.8   & 3.439 & 13.519 & 1.034 & 0.027 & 0.500 & 0.013 & 0.592 & 0.007 & 0.030 & 0.034 & 0.081 & 0.002 \\
    0.9   & 9.314 & 22.230 & 0.583 & 0.025 & 0.261 & 0.021 & 0.322 & 0.005 & 0.032 & 0.041 & 0.075 & 0.001 \\
    \bottomrule
    \bottomrule
    \end{tabular}%
  \end{adjustbox}
  \caption*{{\footnotesize Notes: Table~\ref{tab:gaumixrho2} reports mean bias and MSE for four estimators across 500 Monte Carlo replications with sample size $n=5,000$. The DGP follows Section~\ref{sec:sim}, with $\varepsilon$ drawn from a three-component Gaussian mixture and copula dependence $\rho=0.9$. CFQR implements an adaptation of \citet{lee2007endogeneity} with $k=5$ regression splines (to the fourth order). SMLE replicates  \citep{hausman2021errors}. 2SSMLE uses 20 grid knots, B-splines at $r=1$, and $5$ knots per interval for the quadrature. The upper panel reports mean bias; the lower panel reports MSE.}}%
  \label{tab:gaumixrho2}%
\end{table}%
\end{landscape}

\pagebreak
\begin{landscape}
\vspace*{\fill}
\begin{table}[htbp]
  \centering
  \caption{2SSMLE: Robustness to $F_\varepsilon$ Misspecification}
  \begin{adjustbox}{width=0.85\linewidth}
    \begin{tabular}{c|cc|cc|cc|cc|cc|cc}
    \toprule
    \toprule
          & \multicolumn{6}{c|}{student-t}                & \multicolumn{6}{c}{Laplace} \\
\cmidrule{2-13}          & \multicolumn{2}{c|}{$\beta_0$} & \multicolumn{2}{c|}{$\beta_1$} & \multicolumn{2}{c|}{$\beta_2$} & \multicolumn{2}{c|}{$\beta_0$} & \multicolumn{2}{c|}{$\beta_1$} & \multicolumn{2}{c}{$\beta_2$} \\
    \midrule
    Quantile & mean bias & MSE   & mean bias & MSE   & mean bias & MSE   & mean bias & MSE   & mean bias & MSE   & mean bias & MSE \\
    \midrule
    0.1   & 0.100 & 0.193 & -0.036 & 0.037 & -0.003 & 0.008 & -0.232 & 0.444 & -0.051 & 0.043 & -0.002 & 0.007 \\
    0.2   & 0.044 & 0.151 & 0.018 & 0.021 & -0.013 & 0.007 & -0.241 & 0.417 & 0.024 & 0.020 & -0.008 & 0.006 \\
    0.3   & -0.045 & 0.137 & 0.026 & 0.019 & -0.007 & 0.006 & -0.292 & 0.423 & 0.041 & 0.020 & -0.003 & 0.006 \\
    0.4   & -0.096 & 0.145 & 0.026 & 0.019 & -0.004 & 0.006 & -0.336 & 0.446 & 0.042 & 0.020 & 0.001 & 0.005 \\
    0.5   & -0.109 & 0.151 & 0.028 & 0.018 & 0.001 & 0.005 & -0.352 & 0.452 & 0.045 & 0.020 & 0.001 & 0.004 \\
    0.6   & -0.082 & 0.162 & 0.012 & 0.016 & 0.006 & 0.005 & -0.322 & 0.436 & 0.048 & 0.020 & 0.002 & 0.004 \\
    0.7   & -0.060 & 0.155 & 0.019 & 0.016 & 0.004 & 0.004 & -0.309 & 0.442 & 0.047 & 0.015 & 0.000 & 0.005 \\
    0.8   & -0.036 & 0.137 & 0.017 & 0.013 & 0.003 & 0.004 & -0.298 & 0.420 & 0.034 & 0.013 & 0.008 & 0.005 \\
    0.9   & -0.046 & 0.116 & 0.009 & 0.011 & 0.001 & 0.003 & -0.291 & 0.387 & 0.014 & 0.011 & 0.016 & 0.006 \\
    \bottomrule
    \bottomrule
    \end{tabular}%
  \end{adjustbox}
    \label{tab:semiepsilon}%
  \caption*{{\footnotesize Notes: Table~\ref{tab:semiepsilon} reports mean bias and MSE for 2SSMLE that models the measurement error as a three-component Gaussian mixture, based on 500 Monte Carlo replications with sample size $n=5,000$  on 20 grid knots. The DGP follows Section~\ref{sec:sim}, except that $\varepsilon$ is generated from (i) a Student's $t$ distribution with three degrees of freedom, scaled by $\sqrt{3.5}$; or (ii) a Laplace distribution with parameter $\lambda=2.29$. Parameters in (i)-(ii) are selected to ensure the variance of $\varepsilon$ is held constant across simulation designs \citep{hausman2021errors}. 2SSMLE uses 20 grid knots, B-splines at $r=1$, and $5$ knots per interval for the quadrature. }}%
\end{table}%
\vspace*{\fill}
\end{landscape}

\pagebreak
\begin{landscape}
\begin{table}[htbp]
  \centering
  \caption{2SSMLE under Single-source Bias}
  \begin{adjustbox}{width=0.85\linewidth}
    \begin{tabular}{ccc|cc|cc|cc|cc|cc}
    \toprule
    \toprule
          & \multicolumn{12}{c}{\textit{I.Mean Bias}} \\
\cmidrule{2-13}          & \multicolumn{6}{c|}{$\varepsilon\sim3\mathcal{N}$, no endogeneity}      & \multicolumn{6}{c}{no EIV, $\rho=0.5$} \\
\cmidrule{2-13}          & \multicolumn{2}{c|}{$\beta_0$} & \multicolumn{2}{c|}{$\beta_1$} & \multicolumn{2}{c|}{$\beta_2$} & \multicolumn{2}{c|}{$\beta_0$} & \multicolumn{2}{c|}{$\beta_1$} & \multicolumn{2}{c}{$\beta_2$} \\
    \midrule
    Quantile & SMLE  & 2SSMLE & SMLE  & 2SSMLE & SMLE  & 2SSMLE & CFQR  & 2SSMLE & CFQR  & 2SSMLE & CFQR  & 2SSMLE \\
    \midrule
    0.1   & -0.298 & -0.182 & 0.035 & 0.049 & 0.043 & 0.011 & -0.378 & -0.578 & 0.178 & -0.013 & -0.003 & -0.019 \\
    0.2   & -0.134 & -0.130 & 0.034 & 0.057 & 0.010 & 0.000 & -0.514 & -0.588 & 0.246 & -0.013 & -0.013 & -0.012 \\
    0.3   & -0.119 & -0.134 & 0.026 & 0.067 & 0.046 & 0.001 & -0.566 & -0.605 & 0.289 & -0.009 & -0.021 & -0.001 \\
    0.4   & -0.161 & -0.142 & 0.037 & 0.076 & 0.065 & 0.006 & -0.572 & -0.623 & 0.313 & -0.002 & -0.027 & 0.010 \\
    0.5   & -0.188 & -0.148 & 0.058 & 0.080 & 0.057 & 0.005 & -0.512 & -0.644 & 0.322 & 0.007 & -0.030 & 0.022 \\
    0.6   & -0.218 & -0.152 & 0.094 & 0.086 & 0.053 & 0.007 & -0.429 & -0.666 & 0.309 & 0.018 & -0.032 & 0.032 \\
    0.7   & -0.244 & -0.135 & 0.103 & 0.080 & 0.050 & 0.004 & -0.269 & -0.681 & 0.277 & 0.027 & -0.032 & 0.039 \\
    0.8   & -0.034 & -0.065 & 0.017 & 0.054 & 0.048 & 0.004 & -0.090 & -0.687 & 0.226 & 0.036 & -0.029 & 0.040 \\
    0.9   & 0.112 & 0.003 & 0.062 & 0.030 & 0.018 & 0.007 & 0.044 & -0.678 & 0.152 & 0.037 & -0.025 & 0.033 \\
    \midrule
          &       & \multicolumn{1}{c}{} &       & \multicolumn{1}{c}{} &       & \multicolumn{1}{c}{} &       & \multicolumn{1}{c}{} &       & \multicolumn{1}{c}{} &       &  \\
          & \multicolumn{12}{c}{\textit{II.Mean Squared Error}} \\
\cmidrule{2-13}          & \multicolumn{6}{c|}{$\varepsilon\sim3\mathcal{N}$, no endogeneity}      & \multicolumn{6}{c}{no EIV, $\rho=0.5$} \\
\cmidrule{2-13}          & \multicolumn{2}{c|}{$\beta_0$} & \multicolumn{2}{c|}{$\beta_1$} & \multicolumn{2}{c|}{$\beta_2$} & \multicolumn{2}{c|}{$\beta_0$} & \multicolumn{2}{c|}{$\beta_1$} & \multicolumn{2}{c}{$\beta_2$} \\
    \midrule
    Quantile & SMLE  & 2SSMLE & SMLE  & 2SSMLE & SMLE  & 2SSMLE & CFQR  & 2SSMLE & CFQR  & 2SSMLE & CFQR  & 2SSMLE \\
    \midrule
    0.1   & 0.416 & 0.083 & 0.044 & 0.007 & 0.073 & 0.004 & 0.238 & 0.642 & 0.036 & 0.002 & 0.001 & 0.002 \\
    0.2   & 0.383 & 0.058 & 0.063 & 0.010 & 0.081 & 0.004 & 0.377 & 0.636 & 0.067 & 0.003 & 0.001 & 0.002 \\
    0.3   & 0.464 & 0.056 & 0.096 & 0.013 & 0.141 & 0.004 & 0.499 & 0.638 & 0.091 & 0.004 & 0.001 & 0.002 \\
    0.4   & 0.444 & 0.052 & 0.108 & 0.016 & 0.143 & 0.003 & 0.566 & 0.644 & 0.105 & 0.005 & 0.002 & 0.002 \\
    0.5   & 0.375 & 0.053 & 0.100 & 0.017 & 0.116 & 0.003 & 0.609 & 0.660 & 0.109 & 0.005 & 0.002 & 0.002 \\
    0.6   & 0.368 & 0.055 & 0.096 & 0.019 & 0.100 & 0.003 & 0.624 & 0.686 & 0.100 & 0.005 & 0.002 & 0.003 \\
    0.7   & 0.308 & 0.052 & 0.090 & 0.017 & 0.101 & 0.003 & 0.575 & 0.715 & 0.082 & 0.005 & 0.002 & 0.003 \\
    0.8   & 0.149 & 0.038 & 0.065 & 0.011 & 0.110 & 0.002 & 0.487 & 0.739 & 0.055 & 0.004 & 0.001 & 0.003 \\
    0.9   & 0.046 & 0.020 & 0.040 & 0.008 & 0.110 & 0.002 & 0.379 & 0.751 & 0.025 & 0.003 & 0.001 & 0.002 \\
    \bottomrule
    \bottomrule
    \end{tabular}%
    \end{adjustbox}
  \label{tab:loss}%
  \vspace{-0.5em}
  \caption*{{\footnotesize Notes: Table~\ref{tab:loss} reports mean bias and MSE for the estimators across 500 Monte Carlo replications with sample size $n=5,000$. The DGP follows Section~\ref{sec:sim}, expect that (i) the left panel uses $\varepsilon$ drawn from a three-component Gaussian mixture and copula dependence $\rho=0$, while (ii) the right panel set $\varepsilon\stackrel{a.s.}{=}0$ and $\rho=0$. CFQR implements an adaptation of \citet{lee2007endogeneity} with $k=5$ regression splines (to the fourth order). SMLE replicates  \citep{hausman2021errors}. 2SSMLE uses 20 grid knots, B-splines at $r=1$, and $5$ knots per interval for the quadrature. }}%
\end{table}%
\end{landscape}

\pagebreak
\begin{table}[htbp]
  \centering
  \caption{Parametric $\hat{V}$ Bootstrap Results: $\varepsilon\sim 3\mathcal{N}$ and $\rho=0.5$}
    \begin{tabular}{rrc|ccccrr}
    \toprule
    \toprule
    \multicolumn{9}{c}{I. Quantile Coefficient Functions} \\
    \midrule
    \multicolumn{1}{c}{$\hat\beta_0$} & \multicolumn{1}{c}{$\hat{\text{se}}_0$} & coverage & $\hat\beta_1$ & $\hat{\text{se}}_1$ & \multicolumn{1}{c|}{coverage} & $\hat\beta_2$ & \multicolumn{1}{c}{$\hat{\text{se}}_2$} & \multicolumn{1}{c}{coverage} \\
    \midrule
    \multicolumn{1}{c}{1.294} & \multicolumn{1}{c}{0.178} & 0.962 & 1.060 & 0.248 & \multicolumn{1}{c|}{0.970} & 0.314 & \multicolumn{1}{c}{0.059} & \multicolumn{1}{c}{0.948} \\
    \multicolumn{1}{c}{1.524} & \multicolumn{1}{c}{0.179} & 0.962 & 1.225 & 0.167 & \multicolumn{1}{c|}{0.930} & 0.438 & \multicolumn{1}{c}{0.052} & \multicolumn{1}{c}{0.948} \\
    \multicolumn{1}{c}{1.741} & \multicolumn{1}{c}{0.184} & 0.942 & 1.349 & 0.136 & \multicolumn{1}{c|}{0.924} & 0.535 & \multicolumn{1}{c}{0.050} & \multicolumn{1}{c}{0.944} \\
    \multicolumn{1}{c}{1.957} & \multicolumn{1}{c}{0.187} & 0.956 & 1.478 & 0.123 & \multicolumn{1}{c|}{0.936} & 0.622 & \multicolumn{1}{c}{0.047} & \multicolumn{1}{c}{0.940} \\
    \multicolumn{1}{c}{2.185} & \multicolumn{1}{c}{0.185} & 0.946 & 1.620 & 0.115 & \multicolumn{1}{c|}{0.918} & 0.691 & \multicolumn{1}{c}{0.046} & \multicolumn{1}{c}{0.932} \\
    \multicolumn{1}{c}{2.366} & \multicolumn{1}{c}{0.183} & 0.958 & 1.780 & 0.105 & \multicolumn{1}{c|}{0.934} & 0.756 & \multicolumn{1}{c}{0.046} & \multicolumn{1}{c}{0.944} \\
    \multicolumn{1}{c}{2.548} & \multicolumn{1}{c}{0.176} & 0.956 & 1.949 & 0.096 & \multicolumn{1}{c|}{0.944} & 0.818 & \multicolumn{1}{c}{0.046} & \multicolumn{1}{c}{0.928} \\
    \multicolumn{1}{c}{2.688} & \multicolumn{1}{c}{0.163} & 0.956 & 2.149 & 0.082 & \multicolumn{1}{c|}{0.938} & 0.871 & \multicolumn{1}{c}{0.047} & \multicolumn{1}{c}{0.950} \\
    \multicolumn{1}{c}{2.804} & \multicolumn{1}{c}{0.143} & 0.940 & 2.365 & 0.069 & \multicolumn{1}{c|}{0.942} & 0.925 & \multicolumn{1}{c}{0.050} & \multicolumn{1}{c}{0.974} \\
    \midrule
          &       & \multicolumn{1}{c}{} &       &       &       &       &       &  \\
          &       & \multicolumn{5}{c}{II. Distributional Coefficients} &       &  \\
\cmidrule{3-7}          &       &       &       & estimate   & $\hat{\text{se}}$    & coverage &       &  \\
\cmidrule{3-7}          &       & weights & $\lambda_1$ & 0.440 & 0.158 & 0.942 &       &  \\
          &       &       & $\lambda_2$ & 0.177 & 0.168 & 0.984 &       &  \\
          &       &       &       &       &       &       &       &  \\
          &       & means & $\mu_1$   & -2.913 & 1.365 & 0.994 &       &  \\
          &       &       & $\mu_2$   & 0.693 & 2.144 & 0.858 &       &  \\
          &       &       &       &       &       &       &       &  \\
          &       & std   & $\sigma_1$ & 0.850 & 0.432 & 0.962 &       &  \\
          &       &       & $\sigma_2$ & 0.891 & 0.610 & 0.880 &       &  \\
          &       &       & $\sigma_3$ & 1.137 & 0.442 & 0.908 &       &  \\
          &       &       &       &       &       &       &       &  \\
          &       & copula & $\rho$   & 0.468 & 0.040 & 0.982 &       &  \\
\cmidrule{3-7}    \end{tabular}%
  \caption*{{\footnotesize Notes: Table~\ref{tab:pfirstboot} reports the point estimates, bootstrap standard error, and bootstrap coverage across 500 Monte Carlo replications and 200 bootstrap repetitions with sample size $n=5,000$. The DGP follows Section~\ref{sec:sim}, with $\varepsilon$ drawn from a three-component Gaussian mixture and copula dependence $\rho=0.5$. The first stage control function is estimated parametrically. 2SSMLE uses 20 grid knots, B-splines at $r=1$, and $5$ knots per interval for the Gauss-Legendre quadrature.}}%
  \label{tab:pfirstboot}%
\end{table}%

\pagebreak
\begin{table}[htbp]
  \centering
  \caption{Spline $\hat{V}$ Bootstrap Results: $\varepsilon\sim 3\mathcal{N}$ and $\rho=0.5$}
    \begin{tabular}{rrc|ccccrr}
    \toprule
    \toprule
    \multicolumn{9}{c}{I. Quantile Coefficient Functions} \\
    \midrule
    \multicolumn{1}{c}{$\hat\beta_0$} & \multicolumn{1}{c}{$\hat{\text{se}}_0$} & coverage & $\hat\beta_1$ & $\hat{\text{se}}_1$ & \multicolumn{1}{c|}{coverage} & $\hat\beta_2$ & \multicolumn{1}{c}{$\hat{\text{se}}_2$} & \multicolumn{1}{c}{coverage} \\
    \midrule
    \multicolumn{1}{c}{1.133} & \multicolumn{1}{c}{0.213} & 0.958 & 1.160 & 0.145 & \multicolumn{1}{c|}{0.962} & 0.319 & \multicolumn{1}{c}{0.063} & \multicolumn{1}{c}{0.926} \\
    \multicolumn{1}{c}{1.454} & \multicolumn{1}{c}{0.180} & 0.946 & 1.290 & 0.118 & \multicolumn{1}{c|}{0.878} & 0.423 & \multicolumn{1}{c}{0.056} & \multicolumn{1}{c}{0.934} \\
    \multicolumn{1}{c}{1.675} & \multicolumn{1}{c}{0.171} & 0.912 & 1.428 & 0.109 & \multicolumn{1}{c|}{0.832} & 0.521 & \multicolumn{1}{c}{0.056} & \multicolumn{1}{c}{0.930} \\
    \multicolumn{1}{c}{1.883} & \multicolumn{1}{c}{0.166} & 0.916 & 1.570 & 0.104 & \multicolumn{1}{c|}{0.810} & 0.604 & \multicolumn{1}{c}{0.058} & \multicolumn{1}{c}{0.928} \\
    \multicolumn{1}{c}{2.081} & \multicolumn{1}{c}{0.163} & 0.916 & 1.715 & 0.099 & \multicolumn{1}{c|}{0.818} & 0.677 & \multicolumn{1}{c}{0.063} & \multicolumn{1}{c}{0.940} \\
    \multicolumn{1}{c}{2.278} & \multicolumn{1}{c}{0.161} & 0.948 & 1.871 & 0.095 & \multicolumn{1}{c|}{0.830} & 0.743 & \multicolumn{1}{c}{0.069} & \multicolumn{1}{c}{0.938} \\
    \multicolumn{1}{c}{2.447} & \multicolumn{1}{c}{0.159} & 0.952 & 2.043 & 0.088 & \multicolumn{1}{c|}{0.870} & 0.802 & \multicolumn{1}{c}{0.078} & \multicolumn{1}{c}{0.934} \\
    \multicolumn{1}{c}{2.637} & \multicolumn{1}{c}{0.157} & 0.948 & 2.213 & 0.078 & \multicolumn{1}{c|}{0.880} & 0.863 & \multicolumn{1}{c}{0.091} & \multicolumn{1}{c}{0.934} \\
    \multicolumn{1}{c}{2.812} & \multicolumn{1}{c}{0.143} & 0.962 & 2.401 & 0.070 & \multicolumn{1}{c|}{0.882} & 0.923 & \multicolumn{1}{c}{0.117} & \multicolumn{1}{c}{0.936} \\
    \midrule
          &       & \multicolumn{1}{c}{} &       &       &       &       &       &  \\
          &       & \multicolumn{5}{c}{II. Distributional Coefficients} &       &  \\
\cmidrule{3-7}          &       &       &       & estimate   & $\hat{\text{se}}$   & coverage &       &  \\
\cmidrule{3-7}          &       & weights & $\lambda_1$ & 0.464 & 0.114 & 0.944 &       &  \\
          &       &       & $\lambda_2$ & 0.164 & 0.154 & 0.952 &       &  \\
          &       &       &       &       &       &       &       &  \\
          &       & means & $\mu_1$   & -2.970 & 1.022 & 0.994 &       &  \\
          &       &       & $\mu_2$   & -0.200 & 2.044 & 0.780 &       &  \\
          &       &       &       &       &       &       &       &  \\
          &       & std   & $\sigma_1$ & 0.897 & 0.509 & 0.948 &       &  \\
          &       &       & $\sigma_2$ & 1.379 & 0.975 & 0.920 &       &  \\
          &       &       & $\sigma_3$ & 1.159 & 0.394 & 0.884 &       &  \\
          &       &       &       &       &       &       &       &  \\
          &       & copula & $\rho$   & 0.439 & 0.039 & 0.758 &       &  \\
\cmidrule{3-7}    \end{tabular}%
  \caption*{{\footnotesize Notes: Table~\ref{tab:pfirstboot} reports the point estimates, bootstrap standard error, and bootstrap coverage across 500 Monte Carlo replications and 200 bootstrap repetitions with sample size $n=5,000$. The DGP follows Section~\ref{sec:sim}, with $\varepsilon$ drawn from a three-component Gaussian mixture and copula dependence $\rho=0.5$. The first stage control function is estimated with spline regression with $K=6$ knots and power $r=2$. 2SSMLE uses 20 grid knots, B-splines at $r=1$, and $5$ knots per interval for the Gauss-Legendre quadrature.}}%
  \label{tab:npfirstboot}
\end{table}%

\pagebreak
\begin{figure}[H]
	\centering
	\caption{MC Results Comparison: $\varepsilon\sim 3\mathcal{N}$ and $\rho=0.9$} 
	\includegraphics[scale=0.5]{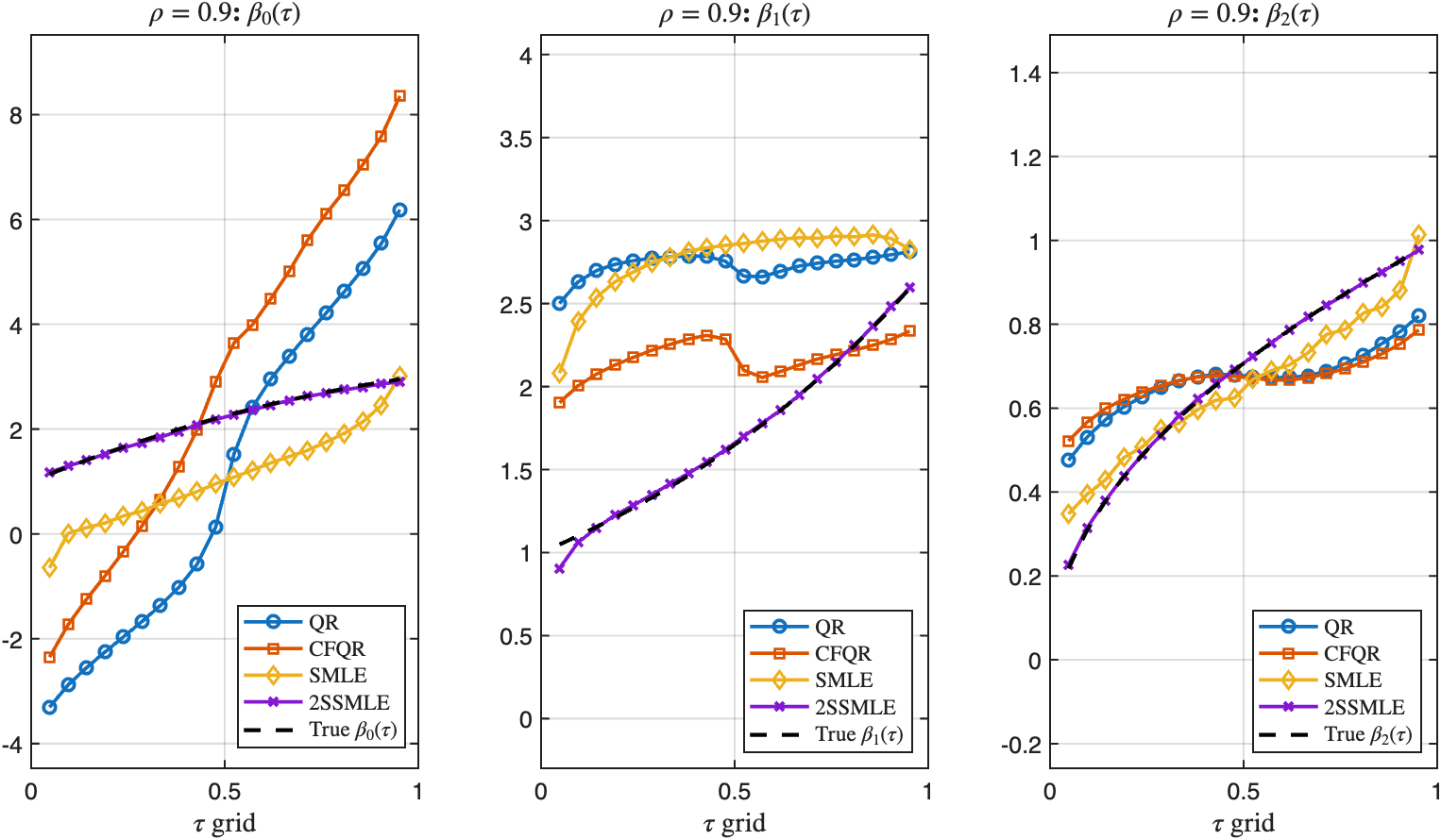}
  \label{fig2}
\end{figure}

\begin{figure}[h]
	\centering 
	\caption{MC Results Comparison: $\varepsilon\sim \mathcal{N}(0,1)$ and $\rho=0.5$}
	\includegraphics[scale=0.5]{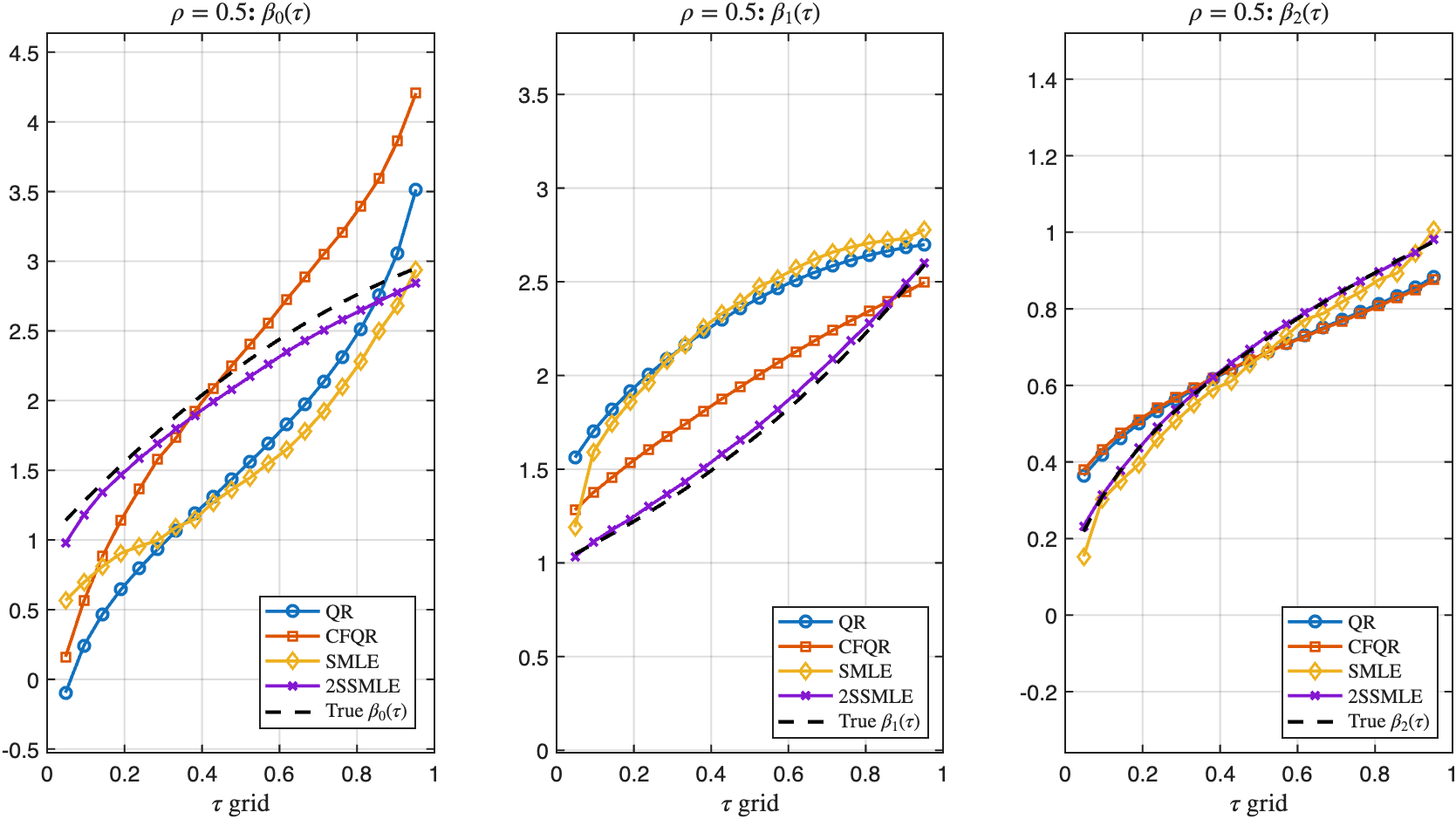}
    \caption*{{\footnotesize Notes: Figure~\ref{fig2} and \ref{fig3} compare the true coefficient functions with four estimators—QR (circle), QR with control function (square), sieve MLE (diamond), and two-step sieve MLE with control function (cross)—based on 500 Monte Carlo replications with sample size $n=5000$. The DGP follows Section~\ref{sec:sim}. Measurement error in \ref{fig2} is drawn from drawn from a three-component Gaussian mixture and copula dependence $\rho=0.9$, while \ref{fig3} is from $\mathcal{N}(0,1)$ and copula dependence $\rho=0.5$. CFQR implements an adaptation of \citet{lee2007endogeneity} with $k=5$ regression splines (to the fourth order). SMLE replicates  \citep{hausman2021errors}. 2SSMLE uses 20 grid knots, B-splines at $r=1$, and $5$ knots per interval for the Gauss-Legendre quadrature.}}
    \label{fig3}
\end{figure}

\newpage
\subsection{Appendix B: Proofs of Results in Section \ref{sec:modelid}}

\begin{proof}[\textbf{Proof of Proposition~\ref{prop1}}]
\label{pf_prop1}
	For any $y\in\mathcal{Y}$, $x\in\mathcal{X}$, and $v\in(0,1)$,
	\begin{align*}
  F_{Y| X,V}(y| x,v)&=P(Y\le y| X=x,V=v)\\
  &=P(x'\beta_0(U)+\varepsilon\le y| X=x,V=v)\\
  &=\int_0^1 P(\varepsilon\le y-x'\beta_0(u)| X=x,V=v,U=v)F_{U|X,V}(du| x,v)\\
  &=\int_0^1 P(\varepsilon\le y-x'\beta_0(u))dF_{U|V}(du| v)\\
  &=\int_0^1 F_\varepsilon(y-x'\beta_0(u))dF_{U|V}(du| v),
\end{align*} where the third equality follows by iterated expectations. The fourth relies on (i) independence of $\varepsilon$ from $(X,V,U)$ and (ii) $U\ind X| V$. Differentiating in $y$ yields \eqref{e5}.
\end{proof}

\begin{proof}[\textbf{Proof of Theorem~\ref{thm1}}]
\label{pf_thm1}
	Since $\varepsilon\ind (X,Z)$ and $V=F_{X_1|Z}(X_1,Z)$, it follows that $\varepsilon$ is independent of $V$. Given $U\ind \varepsilon$ and $\varepsilon\ind (X,V)$, $\varepsilon$ is also independent of $U$ conditional on $(X,V)$. The characteristic function of $Y$ conditional on any $X=x$ and $V=v$ is
	\begin{align*}
		\phi_{Y|X,V}(s;x,v)&=\E[e^{isY}|X=x,V=v]\\
		&=\E[e^{is(x'\beta_0(U)+\varepsilon)}|X=x,V=v]\\
		&=\E[e^{is\varepsilon}|X=x,V=v]\cdot \E[e^{isx'\beta_0(U)}|X=x,V=v]\\
		&=\E[e^{is\varepsilon}]\cdot \E[e^{isx'\beta_0(U)}|V=v]\\
		&=\phi_{\varepsilon}(s)\cdot\phi_{g_x|V}(s;x,v),
	\end{align*} where $g_x(u)=x'\beta_0(u)$ for each $x\in\mathcal{X}$. The first equality holds by the definition of the conditional characteristic function, the second by substituting $Y=X'\beta_0(U)+\varepsilon$, the third by the conditional independence between $\varepsilon$ and $U$ given $(X,V)$, and the fourth by the independence of $\varepsilon$ from $(X,V)$ and $X$ from $U$ given $V$.\\
	
	Suppose there exists another parameter set $(\beta(\cdot),f_\varepsilon,f_{U|V})\in S$ that is observationally equivalent to the true parameters $(\beta_0(\cdot),f_{0,\varepsilon},f_{0,U|V})$. Then, the conditional characteristic function can also be expressed as
	\begin{align}\label{p1}
		\phi_{Y|X,V}(s;x,v)=\phi_{0,\varepsilon}(s)\cdot\phi_{0,g_x|V}(s;x,v)&=\phi_\varepsilon(s)\cdot\phi_{g_x|V}(s;x,v),
	\end{align} which can then be expended as
    \begin{align*}
		\int_{\R} \exp(ise)f_{0,\varepsilon}(e)de\cdot \int_0^1 \exp(is x'\beta_0(u))f_{0,U|V}(u,v)du&=\int_{\R} \exp(ise)f_\varepsilon(e)de\cdot \int_0^1 \exp(is x'\beta(u))f_{U|V}(u,v)du\nonumber\\
    m(s)\cdot\int_0^1 \exp(is x'\beta_0(u))f_{0,U|V}(u,v)du&=\int_0^1 \exp(is x'\beta(u))f_{U|V}(u,v)du,\nonumber
    \end{align*} where $m(s)=\phi_{0,\varepsilon}(s)/\phi_{\varepsilon}(s)$. We take the $x_p$ in Assumption \ref{npx}(b) to be $x_1$ for notation convenience, and wlog, assume the support of $x_1|x_{-1}$ contains an open neighborhood of 0. Applying a Taylor expansion around $x_1=0$ to both sides of the last equality yields 
	\begin{align*}
		m(s)\int_0^1 \exp(isx_{-1}'\beta_{0,-1}(u))\cdot \sum_{k=0}^\infty \frac{(is)^k}{k!}\beta_{0,1}^k(u)x_1^k f_{0,U|V}(u,v)du\\
		=\int_0^1 \exp(isx_{-1}'\beta_{-1}(u))\cdot \sum_{k=0}^\infty \frac{(is)^k}{k!}\beta_1^k(u)x_1^k f_{U|V}(u,v)du.
	\end{align*} Since $x_1$ is continuous, the coefficients of the corresponding polynomials in $x_1$ must be identical. That is, for all $k\ge1$,
	\begin{align}\label{p2}
		m(s)\frac{(is)^k}{k!}\int_0^1 \exp(isx_{-1}'\beta_{0,-1}(u))\beta_{0,1}^k(u)f_{0,U|V}(u,v)du\nonumber\\
		=\frac{(is)^k}{k!}\int_0^1 \exp(isx_{-1}'\beta_{-1}(u))\beta_{1}^k(u)f_{U|V}(u,v)du.
	\end{align} For $s\neq0$, we can cancel out $(is)^k/k!$ on both sides. Since the expression above is continuous in $s$ and $m(0)=1$, letting $s\to0$ gives
	\begin{align}\label{p3}
		\int_0^1 \beta_{0,1}^k(u)f_{0,U|V}(u,v)du=\int_0^1 \beta_1^k(u)f_{U|V}(u,v)du.
	\end{align} By the common support Assumption~\ref{npc} (b), for all $X=x\in\mathcal{X}$, the support of $V|X=x$ remains $[0,1]$. Integrating both sides with respect to $V$ over $[0,1]$, we obtain
	\begin{align*}
		\int_0^1 \beta_{0,1}^k(u)\int_0^1 f_{0,U|V}(u,v)f_V(v)dvdu&= \int_0^1 \beta_{1}^k(u)\int_0^1 f_{U|V}(u,v)f_V(v)dvdu\\
		\int_0^1 \beta_{0,1}^k(u) f_U(u)du&=\int_0^1 \beta_{1}^k(u)f_U(u)du\\
		\int_0^1 \beta_{0,1}^k(u)du &=\int_0^1 \beta_{1}^k(u)du,
	\end{align*} where the third equality follows from $\int_0^1 f_{0,U|V}(u,v)f_V(v)dv=\int_0^1 f_{0,U,V}(u,v)dv=f_{U}(u)=1$. Since this holds for all $k\ge1$, the characteristic functions of $\beta_{0,1}(U)$ and $\beta_{1}(U)$ are
	\begin{align*}
		\phi_{\beta_{0,1}(U)}(s)&=\int_0^1 \exp(is\beta_{0,1}(u))du\\
		&=\sum_{k=0}^\infty \frac{(is)^k}{k!}\int_0^1 \beta_{0,1}^k(u)du\\
		&=\sum_{k=0}^\infty \frac{(is)^k}{k!}\int_0^1 \beta_1^k(u)du\\
		&=\int_0^1 \exp(is\beta_1(u))du=\phi_{\beta_1(U)}(s).
	\end{align*} This implies that $\beta_{0,1}(U)$ and $\beta_{1}(U)$ have the same distribution almost everywhere (a.e). Hence, there exists a measure-preserving transform $\pi:[0,1]\to[0,1]$, such that $\beta_1(\pi(u))=\beta_{0,1}(u)$ a.e. For any integrable $h(\cdot)$ defined on $[0,1]$, $\int_0^1 h(\pi(u))du=\int_0^1 h(u)du$. Substituting $\pi$ into (\ref{p2}) and integrating with respect to $V$,	\begin{align*}
		&m(s)\frac{(is)^k}{k!}\int_0^1 \exp(isx_{-1}'\beta_{0,-1}(u))\beta_{0,1}^k(u)du\\
		&=\frac{(is)^k}{k!}\int_0^1 \exp(isx_{-1}'\beta_{-1}(u))\beta_{1}^k(u)du\\
		&=\frac{(is)^k}{k!}\int_0^1 \exp(isx_{-1}'\beta_{-1}(\pi(u))\beta_{1}^k(\pi(u))du.
	\end{align*} Since $\varepsilon$ has zero mean under both $f_{0,\varepsilon}$ and $f_{\varepsilon}$, we have $m'(0)=0$. Taking the first-order derivative of above with respect to $s$ at $s=0$, we obtain
	\begin{align*}
		\int_0^1 x_{-1}'\beta_{0,-1}(u)\beta_{0,1}^k(u)du=\int_0^1 x_{-1}'\beta_{-1}(\pi(u))\beta_1^k(u)du.
	\end{align*} By Assumption \ref{npx} (c), since $\beta_{0,1}$ is continuous and strictly monotonic, $\{\beta_{0,1}^k\}_{k\ge0}$ is a functional basis of $L^2[0,1]$. Thus $x_{-1}'\beta_{-1}(\pi(u))=x_{-1}'\beta_{0,-1}(u)$ a.e, and $\E[x_{-1}x_{-1}']\beta_{-1}(\pi(u))=\E[x_{-1}x_{-1}']\beta_{0,-1}(u)$ a.e. Since $\E[xx']$ is nonsingular, $\E[x_{-1}x_{-1}']$ is also nonsingular. Multiplying both sides by $\E[x_{-1}x_{-1}']^{-1}$, we have $\beta_{-1}(\pi(u))=\beta_{0,-1}(u)$ a.e. Combining with $\beta_1(\pi(u))=\beta_{0,1}(u)$, we have $\beta(\pi(u))=\beta_0(u)$ a.e, and $x'\beta(\pi(u))=x'\beta_0(u)$ a.e for any $x\in\mathcal{X}$. Note that $x'\beta(\pi(U))$ and $x'\beta(u)$ follow the same distribution, and that $x'\beta(u)$ is strictly monotonic in $u$, thus it must be $x'\beta(u)=x'\beta_0(u)$ a.e, and $\E[xx']\beta(u)=\E[xx']\beta_0(u)$ a.e. Multiplying both sides by $\E[xx']^{-1}$, we conclude that $\beta(u)=\beta_0(u)$ a.e. Suppose $\beta$ deviates from $\beta_0$ at some $\tilde{u}\in[0,1]$, then $\beta$ violates the continuity assumption \ref{npb} (b), if $\beta_0\in M$. Hence, $\beta_0(u)=\beta(u)$ for all $u\in[0,1]$.  \\
	
	Since $\beta_{0,1}=\beta_{1}$, and that $\{\beta_{0,1}^k\}_{k\ge0}$ is a functional basis of $L^2[0,1]$, by equation (\ref{p3}), $f_{0,U|V}=f_{U|V}$ a.e for any given $v\in[0,1]$. With a uniform margin, the joint copula density $f_{0,U,V}=f_{U,V}$ a.e. Based on what we have shown, we get $\phi_{0,g_x|V}(s;x,v)=\phi_{g_x|V}(s;x,v)$ for almost every $x\in\mathcal{X}$ and $v\in[0,1]$, and from (\ref{p1}), $\phi_{0,\varepsilon}(s)=\phi_{\varepsilon}(s)$ a.e, so $f_{0,\varepsilon}=f_\varepsilon$ a.e. Therefore, the true parameter set $(\beta_0(\cdot),f_{0,\varepsilon}.f_{0,U|V})$ is uniquely identified.	
\end{proof}


\subsection{Appendix C: Proofs of Results in Section \ref{subsec:asymp_p}}

\begin{proof}[\textbf{Proof of Theorem~\ref{thm2}}]
\label{pf_thm2}
By Theorem~\ref{thm1}, $\theta_0 = \arg\max_{\theta\in\Theta_0} Q(\theta,\pi_0)$, while the estimator $\hat{\theta}_n^p \in \arg\max_{\theta\in\Theta_n} \hat{Q}_n(\theta,\hat\pi)$. Since every function in $M$ is uniformly bounded and satisfies a H\"{o}lder condition of order $\bar{p}=\max_{1\le k\le d_x} p_k$ with a fixed constant $\bar{c}=\max_{1\le k\le d_x} c_k$, the Arzel\`{a}-Ascoli theorem implies that $M$ is precompact. Since $M$ is closed, it is also compact under $d(\cdot,\cdot)$. Thus, $\Theta_0=M\times\Sigma\times\Gamma$ is compact. Completeness follows from the completeness of $L^p$ spaces and the fact that limits of monotone functions remain monotone.

Let $W=(Y,X',Z')'$. Denote the conditional log-likelihood function at given $(\pi,\theta)$ as $l(w;\theta,\pi):=\log f(y|x,v(x,z;\pi);\theta)$. The true control variable $V_i=V(X_i,Z_i;\pi_0)$. By MVT and CS inequality, there exist $s > 0$ and a random variable $L(W)$ with $\E[|L(W)|^2] < \infty$ (Assumptions~\ref{npx}, \ref{npb}, \ref{parpro}) such that
	\begin{align*}
		\sup_{\theta,\theta'\in\Theta_n:d(\theta,\theta')\le\delta} |l(w;\theta,\pi_0)-l(w;\theta',\pi_0)|\le \delta^s \cdot L(w).
	\end{align*} Let $N(\delta, \{l(w;\theta,\pi_0) : \theta\in\Theta_n\}, L_{1,n})$ denote the $L_1(P_n)$-covering number. It follows that 
	\begin{align*}
		\log N(\delta,\{l(w;\theta,\pi_0):\theta\in \Theta_n\},L_{1,n})\le \log N(\delta^{1/s}, \Theta_n, d(\cdot,\cdot))\le K\cdot J_n\cdot \log(\delta^{-1/s}),
	\end{align*} for some constant $K$ by \citet{chen1998sieve}. Since $J_n=o(n)$, we have $\log N(\delta,\{l(w;\theta,\pi_0):\theta\in \Theta_n\},L_{1,n})=o(n)$ for any fixed $\delta>0$. Applying Theorem 24 of \cite{pollard1989asymptotics}, we thus have $\sup_{\theta\in\Theta_n}|(\E_n-\E)l(w;\theta,\pi_0)|=o_p(1)$. For any fixed $v, \hat{v} \in (0,1)$, Assumption~\ref{parpro} ensures
	\begin{align}
		\sup_{\theta\in \Theta_n}|\hat{Q}_n(\theta,\pi_0)-Q_n(\theta,\hat\pi)|&\le\frac1n\sum_{i=1}^n \sup_{\theta\in \Theta_n}\left|l(W_i;\theta,\pi_0)-l(W_i;\theta,\hat\pi)\right|\nonumber\\
		&\le \Big(\frac{1}{n}\sum_{i=1}^n L'(W_i)^2\Big)^{1/2}\Big(\frac{1}{n}\sum_{i=1}^n (\hat{V}_i-V_i)^2\Big)^{1/2}=o_p(1),\label{p5}
	\end{align} for some $L'(W)=O_p(1)$. By Assumptions~\ref{npe}-\ref{sieve} and DCT, $Q(\theta,\pi_0)$ and $Q_n(\theta,\pi_0)$ are both continuous in $\theta$. For any $\epsilon>0$, w.p.a.1, we have (a) $\hat{Q}_n(\hat\theta_n^p,\hat\pi)>\hat{Q}_n(\theta_J^*,\hat\pi)-\epsilon/6$ by the definition of $\hat\theta_n^p$; (b) $\sup_{\theta\in\Theta_n}|\hat{Q}_n(\theta,\hat\pi)-Q_n(\theta,\pi_0)|<\epsilon/6$ by (\ref{p5}); (c) $\sup_{\theta\in\Theta_n}|Q_n(\theta,\pi_0)-Q(\theta,\pi_0)|<\epsilon/6$ by uniform convergence; and (d) $|Q(\theta_J^*,\pi_0)-Q(\theta_0,\pi_0)|<\epsilon/6$ by the continuity of $Q$. Thus, w.p.a.1, we have
	\begin{align*}
			\hat{Q}_n(\hat\theta_n^p,\hat\pi)&\stackrel{(a)}{>}\hat{Q}_n(\theta_J^*,\hat\pi)-\epsilon/6
			\stackrel{(b)}{>}Q_n(\theta_J^*,\pi_0)-\epsilon/3
			\stackrel{(c)}{>}Q(\theta_J^*,\pi_0)-\epsilon/2
			\stackrel{(d)}{>}Q(\theta_0,\pi_0)-2\epsilon/3\\
			\hat{Q}_n(\hat\theta_n^p,\hat\pi)&\stackrel{(b)}{<}Q_n(\hat\theta_n^p,\pi_0)+\epsilon/6
			\stackrel{(c)}{<}Q(\hat\theta_n^p,\pi_0)+\epsilon/3.
		\end{align*} Thus, for any $\epsilon>0$, $Q(\theta_0,\pi_0)<Q(\hat\theta_n,\pi_0)+\epsilon$ w.p.a.1. By our identification Theorem~\ref{thm1} and a similar argument as \citep{newey1994large}, $d(\hat\theta_n^p,\theta_0)\xrightarrow{p}0$.
\end{proof}

\begin{proof}[\textbf{Proof of Lemma~\ref{lem1}}]
\label{pf_lem1}
In the following proof, we suppress $v(x,z;\pi_0)$ as $v$, and $\hat{v}(x,z;\hat\pi)$ as $\hat{v}$ in notation. Consider the function class $\mathcal{G}=\{\log f(y|x,v;\beta,\sigma,\lambda):\beta\in M,\sigma\in\Sigma,\gamma\in\Gamma\}$. The bracketing entropy integral of $\mathcal{G}$ is 
	\begin{align*}
		J_{[]}(\delta,\mathcal{G},L_2(P))&=\int_0^\delta \sqrt{\log N_{[]}(\epsilon,\mathcal{G},L_2(P))}d\epsilon\le 
	\int_0^\delta \sqrt{\log N_{[]}(\epsilon^{1/s},\Theta_0,d(\cdot))}d\epsilon\\ 
	&\le \int_0^\delta \sqrt{ K(\epsilon^{-1/m s})-2\log(\epsilon^{1/s})}d\epsilon<\infty,
	\end{align*} where the second inequality is given by the smoothness of $\beta\in M$ at order $m$, and the final inequality holds for all $ms>1$. Hence, $\mathcal{G}$ is $P$-Donsker. 
	
	Part (a) of Lemma~\ref{lem1} follows closely from Lemma 4 in \cite{hausman2021errors}. Their proof can be adapted to accommodate the control variable $v$ with the following modification:
	\begin{align*}
		Var\left(\frac{\phi_{x\hat\beta_n,\hat\gamma}(s|x,v)}{\phi_{x\beta_0,\gamma_0}(s|x,v)}\right)&=Var \left(\frac{\phi_{x\hat\beta_n,\hat\gamma}(s|x,v)-\phi_{x\hat\beta_n,\gamma_0}(s|x,v)}{\phi_{x\beta_0,\gamma_0}(s|x,v)}+\frac{\phi_{x\hat\beta_n,\gamma_0}(s|x,v)-\phi_{x\beta_0,\gamma_0}(s|x,v)}{\phi_{x\beta_0,\gamma_0}(s|x,v)}\right)\\
		&\ge K\cdot Var\left(\frac{\phi_{x\hat\beta_n,\gamma_0}(s|x,v)-\phi_{x\beta_0,\gamma_0}(s|x,v)}{\phi_{x\beta_0,\gamma_0}(s|x,v)}\right)\\
		&\ge cK\cdot \E\left[\left|\frac{\phi_{x\hat\beta_n,\gamma_0}(s|x,v)-\phi_{x\beta_0,\gamma_0}(s|x,v)}{\phi_{x\beta_0,\gamma_0}(s|x,v)}\right|^2\right],
	\end{align*} where the last step invokes Assumption~\ref{cf}. Along with $J_n^{2r+2}/n\to\infty$, we analogously have $\|\hat\sigma-\sigma_0\|_2^2=O_p(\delta\sqrt{-\log\delta}/\sqrt{n})$, where $\delta=\max\{\|\hat\beta_n-\beta_J^*\|_\infty,\|\hat\gamma-\gamma_0\|_2,\|\hat\sigma-\sigma_0\|_2\}$. If $\delta=\|\hat\sigma-\sigma_0\|_2$, then $\|\hat\sigma-\sigma_0\|_2^2=O_p(\log n/n)$. Otherwise, the conclusion holds.
	
		The argument for part (b) closely parallels Lemma 9 in \cite{hausman2021errors}. Let $G_n := \sqrt{n}(\mathbb{P}_n - \mathbb{P})$. Fix $\gamma = \hat{\gamma}$ and $\sigma = \sigma_0$. By the maximal inequality,
	\begin{align*}
		\E\left[\max_{\|\beta-\beta_J^*\|_2<\delta} \left|
		G_n \log f(y|x,v;\beta,\sigma_0,\hat\gamma)-
		G_n \log f(y|x,v;\beta_J^*,\sigma_0,\hat\gamma)\right| \right]\le \delta\sqrt{-\log \delta}.
	\end{align*} Thus, we have $|G_n \log f(y|x,v;\hat\beta_n,\sigma_0,\hat\gamma)-
		G_n \log f(y|x,v;\beta_J^*,\sigma_0,\hat\gamma)|=O_p( \hat\delta\sqrt{-\log \hat\delta})$, where $\hat\delta=\|\hat\beta_n-\beta_J^*\|_2$. From (a), $\|\hat\sigma-\sigma_0\|_2^2=O_p( \hat\delta\sqrt{-\log \hat\delta/n})$. Hence,
		\begin{align}
			&\E[\log f(y|x,v;\beta_J^*,\sigma_0,\hat\gamma)]-\E[\log f(y|x,v;\hat\beta_n,\sigma_0,\hat\gamma)]\nonumber\\
			= &\frac{1}{\sqrt{n}}G_n[\log f(y|x,v;\hat\beta_n,\sigma_0,\hat\gamma)]-\frac{1}{\sqrt{n}} G_n [\log f(y|x,v;\beta_J^*,\sigma_0,\hat\gamma)]\label{c1}\\
			+ & \E_n [\log f(y|x,v;\beta_J^*,\sigma_0,\hat\gamma)]-\E_n[\log f(y|x,\hat{v};\hat\beta_n,\hat\sigma,\hat\gamma)]\label{c2}\\
			+ &\E_n[\log f(y|x,\hat{v};\hat\beta_n,\hat\sigma,\hat\gamma)]-\E_n[\log f(y|x,\hat{v};\hat\beta_n,\sigma_0,\hat\gamma)]\label{c3}\\
			+ &\E_n[\log f(y|x,\hat{v};\hat\beta_n,\sigma_0,\hat\gamma)]-
			\E_n[\log f(y|x,v;\hat\beta_n,\sigma_0,\hat\gamma)].\label{c4}
		\end{align} Note that term (\ref{c1}) is $O_p( \hat\delta\sqrt{-\log \hat\delta/n})$. Term (\ref{c2}) $\le0$ by the definition of $\hat\theta_n$. Term (\ref{c3}) is $O_p(\|\hat\sigma-\sigma_0\|^2_2)=O_p( \hat\delta\sqrt{-\log \hat\delta/n})$, with the first-order term vanished. (\ref{c4}) is $O_p(\|\hat{v}-v\|)=O_p(n^{-1/2})$ by the first-stage rate. By Theorem~\ref{thm2} and the approximation property, we have $\|\beta_J^*-\beta_0\|_2=O(\frac{1}{J_n^{r+1}})\to0$, $\hat\delta\le \|\hat\beta_n-\beta_0\|_2+\|\beta_J^*-\beta_0\|_2\to0$. Hence, term (\ref{c4}) dominates all others, yielding 		\begin{align*}
			\E[\log f(y|x,v;\beta_J^*,\sigma_0,\hat\gamma)]-\E[\log f(y|x,v;\hat\beta_n,\sigma_0,\hat\gamma)]=O_p(\frac{1}{\sqrt{n}}).
		\end{align*} 
		Denote the density difference $h(y|x,v):= f(y|x,v,\hat\beta_n,\sigma_0,\hat\gamma)-f(y|x,v,\beta_J^*,\sigma_0,\hat\gamma)$. Let $\|h(y|x,v)\|_1:=\int_\R |h(y|x,v)|dy$ denote the $L^1$-norm conditional on $(x,v)$. Applying Pinsker's inequality gives
				\begin{align*}
			\E_{x,v}\left[\|h(y|x,v)\|_1^2 \right]&\le \E_{x,v}\left[K D_{KL}(f(y|x,v,\hat\beta_n,\sigma_0,\hat\gamma)||f(y|x,v,\beta_J^*,\sigma_0,\hat\gamma)) \right]\\
			&\le \E_{x,v}\left[K \int_{\R}f(y|x,v,\hat\beta_n,\sigma_0,\hat\gamma)-f(y|x,v,\beta_J^*,\sigma_0,\hat\gamma) dy\right]\\
			&=K\E[\log f(y|x,v;\beta_J^*,\sigma_0,\hat\gamma)]-K\E[\log f(y|x,v;\hat\beta_n,\sigma_0,\hat\gamma)]=O_p(\frac{1}{\sqrt{n}}).
		\end{align*} As shown in Theorem~\ref{thm1} proof, the difference in conditional characteristic functions satisfies
		\begin{align*}
			|\phi_{x\hat\beta_n,\hat\gamma}(s|x,v)\phi_\varepsilon(s;\sigma_0)-\phi_{x\beta_J^*,\hat\gamma}(s|x,v)\phi_\varepsilon(s;\sigma_0)|=|\int_\R h(y|x,v)\exp(isy)dy|\le \|h(y|x,v)\|_1.
		\end{align*} Therefore, 
		\[
	\bigl| \phi_{x\hat{\beta}_n,\hat{\gamma}}(s|x,v) - \phi_{x\beta_J^*,\hat{\gamma}}(s|x,v) \bigr| \le \frac{\|h(y|x,v)\|_1}{|\phi_\varepsilon(s;\sigma_0)|}.
	\] Under Assumption~\ref{parpro} (c), one can show that $\max\bigl\{ |\phi_{x\hat{\beta}_n,\hat{\gamma}}(s|x,v)|,\ |\phi_{x\beta_J^*,\hat{\gamma}}(s|x,v)| \bigr\} \le K J_n / |s|$, for any fixed $(x,v)\in\mathcal{X}\times \mathcal{V}$. Since $\phi_\varepsilon$ is ordinary smooth, $\phi_\varepsilon(s;\sigma_0)^{-1}\asymp s^\lambda$. Applying the inverse Fourier representation, for some $q>0$,
		\begin{align*}
			&\E_{x,v}\left[\left|F_{x\hat\beta_n,v}(t)-F_{x\beta_J^*,v}(t)\right|\right]\\
			\le & \E_{x,v}\left[\left|\int_{-q}^q\left|\frac{\exp(its)}{2\pi is}\right|\frac{\|h(y|x,v)\|_1}{|\phi_\varepsilon(s;\sigma_0)|} \right|ds\right]
			+\E_{x,v}\left[2\int_q^\infty \frac{1}{2\pi s}|\phi_{x\hat\beta_n,\hat\gamma}(s|x,v)-\phi_{x\beta_J^*,\hat\gamma}(s|x,v)|ds\right]\\
			\quad\le & \frac{1}{2\pi} \int_{-q}^q \frac{1}{|s| |\phi_\varepsilon(s;\sigma_0)|}\, ds \cdot \E_{x,v}[\|h(y|x,v)\|_1]
			+\frac{2}{\pi}\int_q^\infty \frac{J_n}{s^2}\, ds 
		\precsim_{p}  q^\lambda\cdot n^{-1/4}+J_n/q.
		\end{align*} Balancing the two terms by setting $q^\lambda n^{-1/4} \asymp J_n / q$ gives the optimal cutoff $q^* \asymp J_n^{\frac{1}{\lambda+1}} n^{\frac{1}{4(\lambda+1)}}$. Substituting back produces the $L^2$-type bound $\E_{x,v}\bigl[ |F_{x\hat{\beta}_n,v}(t) - F_{x\beta_J^*,v}(t)| \bigr] = O_p\bigl( J_n^{\frac{\lambda}{\lambda+1}} n^{-\frac{1}{4(\lambda+1)}} \bigr)$. Since $\|\hat{\beta}_n - \beta_J^*\|_2$ is stochastically bounded by the above quantity, we deduce that 
		\[
		\hat\delta=\|\hat\beta_n-\beta_J^*\|_2=O_p(\E_{x,v}\left[\left|F_{x\hat\beta_n,v}(t)-F_{x\beta_J^*,v}(t)\right|\right])=O_p(J_n^{\frac{\lambda}{\lambda+1}}\cdot n^{\frac{-1}{4(\lambda+1)}}),
		\] which echos the desired $L^2$ rate. This completes the proof of part (b).
		
		We now prove part (c) of Lemma~\ref{lem1}. Fix $\beta = \hat{\beta}_n$ and $\sigma = \sigma_0$. By the same maximal inequality and first-stage uniform rate arguments applied in the proof of part (b), we obtain
		\begin{align}
	\E\bigl[ \log f(y|x,v;\hat{\beta}_n,\sigma_0,\hat{\gamma}) \bigr] 
- \E\bigl[ \log f(y|x,v;\hat{\beta}_n,\sigma_0,\gamma_0) \bigr] 
= O_p(n^{-1/2}).
	\label{eq:c-op}
	\end{align} Define $\tilde{h}(y|x,v):=f(y|x,v;\hat\beta_n,\sigma_0,\hat\gamma)-f(y|x,v;\hat\beta_n,\sigma_0,\gamma_0)$. Applying Pinsker's inequality in the same manner as in part (b) yields
\begin{align*}
\E_{x,v}\bigl[ \|\tilde{h}(y|x,v)\|_1^2 \bigr] 
&\le K \E_{x,v}\bigl[ D_{\mathrm{KL}}\bigl( f(\,\cdot\,|x,v;\hat{\beta}_n,\sigma_0,\hat{\gamma}) \big\| f(\,\cdot\,|x,v;\hat{\beta}_n,\sigma_0,\gamma_0) \bigr) \bigr] \\
&\le K \E\bigl[ \log f(y|x,v;\hat{\beta}_n,\sigma_0,\gamma_0) - \log f(y|x,v;\hat{\beta}_n,\sigma_0,\hat{\gamma}) \bigr] = O_p(n^{-1/2}).
\end{align*}
	Since $\hat\beta_n\in\mathcal{N}(\beta_0)$ w.p.a.1 by consistency, we apply Assumption~\ref{parpro} (f) to get
		\begin{align*}
			\|\hat\gamma-\gamma_0\|_2^2&\le K \cdot \E_{x,v}\left[
			\int_{-l}^l |\phi_{x\hat\beta_n,\hat\gamma}(s|x,v)-\phi_{x\hat\beta_n,\gamma_0}(s|x,v)|^2ds\right]\\
			&\le K\cdot \E_{x,v}\left[\int_{-l}^l \frac{\|\tilde{h}(y|x,v)\|_1^2}{\phi_\varepsilon^2(\varepsilon;\sigma_0)} ds\right]\\
			&\le K\cdot \E_{x,v}[\|\tilde{h}(y|x,v)\|_1^2],
		\end{align*} for some $0<l<\infty$. Therefore, we deduce that $\|\hat\gamma-\gamma_0\|_2^2=O_p(n^{-1/2})$.
\end{proof}

\pagebreak
\begin{proof}[\textbf{Proof of Theorem~\ref{thm3}}]
\label{pf_thm3}
We suppress the notation $\hat\theta_n^p$ as $\hat\theta_n$. Denote the score function as $\psi_i(\theta,\pi)\equiv \psi(W_i;\theta,\pi):=\partial_\theta \log f(Y_i|X_i,V(X_i,Z_i;\pi);\theta)$. Since $\hat{\theta}_n = (\hat{\beta}_n(\cdot), \hat{\sigma}, \hat{\gamma})$ maximizes the plug-in empirical criterion $\hat{Q}_n(\theta,\hat\pi)$, the first-order condition implies $\E_n[\psi_i(\hat\theta_n,\hat\pi)]=0$. For notational simplicity, we show the case when $d_x=1$. Extension to $d_x>1$ follows by expanding $\beta(u)$ into a $d_xJ_n$-by-1 vector. Perform a Taylor expansion of the score around $\pi_0$,
	\begin{align*}
		0&=\underbrace{\frac{1}{\sqrt{n}}\cdot\sqrt{n}(\E_n-\E)\psi_i(\hat\theta_n,\pi_0)}_{\mathbb{A}}\\
		&+\underbrace{\E\psi_i(\hat\theta_n,\pi_0)-\E\psi_i(\theta_J^*,\pi_0)+\E\psi_i(\theta_J^*,\pi_0)-\E\psi_i(\theta_0,\pi_0)}_{\mathbb{B}}\\
		&+\underbrace{\E_n[\partial_\pi\psi_i(\hat\theta_n,\pi_0)(\hat\pi-\pi_0)]+o_p(n^{-1/2})}_{\mathbb{C}},
	\end{align*} where the expansion in $\mathbb{B}$ uses $\E\psi_i(\theta_0,\pi_0)=0$ and we denote $G_n=\sqrt{n}(\E_n-\E)$. Since the function class $\mathcal{G}=\{\log f(y|x,v;\theta,\pi_0):\theta\in\Theta_0\}$ is $P$-Donsker and $\hat\theta_n$ is consistent, we have
	\begin{align*}
		\mathbb{A}=\frac{1}{\sqrt{n}} G_n \bigl[ \psi_i(\hat{\theta}_n, \pi_0) \bigr] = \frac{1}{\sqrt{n}} (1 + o_p(1)) G_n \bigl[ \psi_i(\theta_0, \pi_0) \bigr].
	\end{align*} The centered score process converges weakly to a Gaussian limit with covariance $\I(\theta_0,\pi_0)=\E[\psi_i(\theta_0,\pi_0)\psi_i'(\theta_0,\pi_0)]$. By a second-order Taylor expansion around $\theta_J^*$, term $\mathbb{B}$ becomes
	\begin{align}
		\mathbb{B}&=\E\psi_i(\hat\theta_n,\pi_0)-\E\psi_i(\theta_J^*,\pi_0)+\E\psi_i(\theta_J^*,\pi_0)-\E\psi_i(\theta_0,\pi_0) \nonumber \\
		&=\E\bigl[ \partial_\theta \psi_i(\theta_J^*, \pi_0) \bigr] (\hat{b}_n - b_J^*, \hat{\sigma} - \sigma_0, \hat{\gamma} - \gamma_0)' \nonumber \\
		&\quad +O_p(\|\hat{b}_n-b_J^*\|_2^2+ \|\hat\sigma-\sigma_0\|_2^2+\|\hat\gamma-\gamma_0\|_2^2)+O_p(\|\theta_J^*-\theta_0\|_2)\nonumber\\
		&=\mathcal{H}(\theta_J^*, \pi_0) (\hat{b}_n - b_J^*, \hat{\sigma} - \sigma_0, \hat{\gamma} - \gamma_0)'\nonumber\\
		&\quad + O_p\bigl( \|\hat{b}_n - b_J^*\|_2^2 + \|\hat{\gamma} - \gamma_0\|_2^2 \bigr) + O\bigl( J_n^{-(r+1)} \bigr),
	\end{align} where the last line relies on $\|\theta_J^* - \theta_0\|_2 = O(J_n^{-(r+1)})$ from the approximation property of the spline sieves. The smallest eigenvalue of $\mathcal{H}(\theta_J^*, \pi_0)$ is of the same order as that of $\mathcal{H}(\theta_0, \pi_0)$. Using the information identity $\mathcal{H}(\theta_0, \pi_0) = -\I(\theta_0, \pi_0)$ and the assumption that $\min\operatorname{eig}(\I(\theta_0, \pi_0)) \asymp \kappa_{J_n}$, we have $\min\operatorname{eig}\bigl( -\mathcal{H}(\theta_J^*, \pi_0) \bigr) \asymp\kappa_{J_n}$. The condition $\kappa_{J_n}^{-1} J_n^{\frac{\lambda}{\lambda+1}} n^{-\frac{1}{4(\lambda+1)}} \to 0$ together with the $L^2$ rate from Lemma~\ref{lem1}(b) implies that $\|\hat\beta_n-\beta_J^*\|_2^2=\kappa_{J_n}\cdot o_p(\|\hat\beta_n-\beta_J^*\|)$, and $\|\hat{b}_n-b_J^*\|_2^2=\kappa_{J_n}\cdot o_p(\|\hat{b}_n-b_J^*\|)$. A similar relation holds for $\|\hat{\gamma} - \gamma_0\|_2^2$. Thus the quadratic terms are asymptotically negligible compared to the leading linear term scaled by $\kappa_{J_n}$. 
	
	Next, under Assumptions~\ref{parpro} and \ref{firstasmp_p}, 
	\begin{align*}
		\sqrt{n}\mathbb{C}&= \frac{1}{n}\sum_{i=1}^n \partial_v\psi_i(\hat\theta_n,\pi_0)\partial_\pi V(X_i,Z_i;\pi_0)\times \sqrt{n}(\hat\pi-\pi_0)+o_p(1)\\
		&=\frac{1}{n}\sum_{i=1}^n \partial_v\partial_\theta l(W_i;\hat\theta_n,\pi_0)f_\eta(\eta_i)(-Z_i')\times \E[Z_iZ_i']^{-1}\frac{1}{\sqrt{n}}\sum_{i=1}^n Z_i\eta_i+o_p(1)\\
		&=\E[\partial_v\psi_i(\theta_0,\pi_0)\partial_\pi V(X_i,Z_i;\pi_0)]\times\E[Z_iZ_i']^{-1}\frac{1}{\sqrt{n}}\sum_{i=1}^n Z_i\eta_i+o_p(1),
	\end{align*} where the last equality holds by
	\begin{align*}
		&\E_n\partial_\pi \psi_i(\hat\theta_n,\pi_0)-\E \partial_\pi \psi_i(\theta_0,\pi_0)\\
		=& (\E_n-\E)(\partial_\pi\psi_i(\hat\theta_n)-\partial_\pi\psi_i(\theta_0))
		+(\E_n-\E)\partial_\pi \psi_i(\theta_0)+\E[\partial_\pi\psi(\hat\theta_n)-\partial_\pi\psi(\theta_0)]\\
		=& (\E_n-\E)(\partial_v\psi(\hat\theta_n)-\partial_v\psi(\theta_0))\partial_\pi V_i(\pi_0)+ o_p(1)+o_p(1)=o_p(1),
	\end{align*} since $\hat\theta_n$ is consistent to $\theta_0$, and $\|(\E_n-\E)(\partial_v\psi(\hat\theta_n)-\partial_v\psi(\theta_0))\partial_\pi V_i(\pi_0)\|_2=O_p(\sqrt{\frac{J_n}{n}})=o_p(1)$ by $J_n=o(n)$ along with Assumptions~\ref{parpro}-\ref{firstasmp_p}. Define $G_\pi:=\E[\partial_v\psi_i(\theta_0,\pi_0)\partial_\pi V(X_i,Z_i;\pi_0)]$ and $\Sigma_\pi:=\E[\eta_i^2]\E[Z_iZ_i']^{-1}$. Combining $\mathbb{A}$, $\mathbb{B}$ and $\mathbb{C}$, while undersmoothing the sieve approximation bias with $J_n^{2r+2}/n \to \infty$, we arrive at
\begin{gather*}
		- \sqrt{n}\mathcal{H}(\theta_J^*, \pi_0) (\hat{b}_n - b_J^*, \hat{\sigma} - \sigma_0, \hat{\gamma} - \gamma_0)' = G_n \bigl[ \psi(\theta_0, \pi_0) \bigr]+ G_\pi\cdot\E[Z_iZ_i']^{-1}\cdot \frac{1}{\sqrt{n}}\sum_{i=1}^n Z_i\eta_i,\\
		- \sqrt{n\kappa_{J_n}} (\hat{b}_n - b_J^*, \hat{\sigma} - \sigma_0, \hat{\gamma} - \gamma_0)'= \sqrt{\kappa_{J_n}}\mathcal{H}(\theta_J^*, \pi_0)^{-1}\Big[G_n \bigl[ \psi(\theta_0, \pi_0) \bigr]+ G_\pi\cdot\E[Z_iZ_i']^{-1}\cdot \frac{1}{\sqrt{n}}\sum_{i=1}^n Z_i\eta_i \Big].
	\end{gather*} Variance of the right hand side is then
	\begin{align*}
	\Omega_J:=\kappa_{J_n}\mathcal{H}(\theta_J^*,\pi_0)^{-1}\Big[\mathcal{I}(\theta_0,\pi_0)+ G_\pi\Sigma_\pi G'_\pi+2\E\big[\psi(\theta_0,\pi_0)\eta_iZ_i'\E[Z_iZ_i']^{-1}G'_\pi\big]
	\Big]\mathcal{H}(\theta_J^*, \pi_0)^{-1}.
	\end{align*} Consider the first part $\Omega_{\text{sec}}:=\kappa_{J_n}\mathcal{H}(\theta_J^*, \pi_0)^{-1}\mathcal{I}(\theta_0,\pi_0)\mathcal{H}(\theta_J^*, \pi_0)^{-1}$ within $\Omega_J$. Note that
	\begin{align*}
		&\|\kappa_{J_n}\mathcal{H}(\theta_J^*, \pi_0)^{-1}\mathcal{I}(\theta_0,\pi_0)\mathcal{H}(\theta_J^*, \pi_0)^{-1}-\kappa_{J_n}\mathcal{I}(\theta_0,\pi_0)^{-1}\|_2\\
		\le &\: \kappa_{J_n}\|\mathcal{H}(\theta_J^*, \pi_0)^{-1}-\mathcal{H}(\theta_0, \pi_0)^{-1}\|^2_2\cdot\|\mathcal{I}(\theta_0,\pi_0)\|_2\\
		 +&\: 2\kappa_{J_n}\|(\mathcal{H}(\theta_J^*, \pi_0)^{-1}-\mathcal{H}(\theta_0, \pi_0)^{-1})\mathcal{I}(\theta_0,\pi_0)\mathcal{H}(\theta_0, \pi_0)^{-1}\|_2\\
		 +&\: \|\kappa_{J_n}(-\mathcal{H}(\theta_0, \pi_0))^{-1}\mathcal{I}(\theta_0,\pi_0)(-\mathcal{H}(\theta_0, \pi_0))^{-1}-\kappa_{J_n}\mathcal{I}(\theta_0,\pi_0)^{-1}\|_2\to0
	\end{align*} given that $\mathcal{I}(\theta_0,\pi_0)(-\mathcal{H}(\theta_0, \pi_0))^{-1}=\mathbf{I}_{J_n}$, and that
	\begin{align*}
		\|\H(\theta_0,\pi_0)\H(\theta_n^*,\pi_0)^{-1}-\mathbf{I}_{J_n}\|_2&\le \|\H(\theta_0,\pi_0)-\H(\theta_n^*,\pi_0)\|_2\|\H(\theta_n^*,\pi_0)^{-1}\|_2\to0
	\end{align*} by $O(\frac{1}{J_n^{r+1}})=o(\kappa_{J_n})$	. As $\kappa_J\|\mathcal{I}(\theta_0, \pi_0)^{-1}\|_2=1$, $\Omega_{\text{sec}}$ has bounded eigenvalues. Consider the second term $\Omega_{\text{fir}}=\kappa_{J_n}\mathcal{H}(\theta_J^*,\pi_0)^{-1}G_\pi\Sigma_\pi G'_\pi\mathcal{H}(\theta_J^*,\pi_0)^{-1}$. For any nonzero vector $\nu\in\R^{J_n}$,
	\begin{align*}
		\nu'G_\pi\Sigma_\pi G'_\pi\nu &=\sup_{q\in\R^{d_z}:\|q\|_2=1}(\nu'G_\pi\Sigma_\pi^{1/2}q)^2\\
		&= \sup_{q\in\R^{d_z}:\|q\|_2=1} \Big(\E[\nu'\E[\partial_v\psi(\theta_0,\pi_0)|X_i,V_i] f_\eta(\eta_i)Z_i'\Sigma_\pi^{1/2}q]\Big)^2\\
		&\le \sup_{q\in\R^{d_z}:\|q\|_2=1} \Big(\E[\nu'\E[K_s\cdot \psi(\theta_0,\pi_0)|X_i,V_i] f_\eta(\eta_i)Z_i'\Sigma_\pi^{1/2}q]\Big)^2\\
		&\le K^2_s\cdot\E[(\nu'\cdot \psi(\theta_0,\pi_0))^2]\cdot \sup_{q\in\R^{d_z}:\|q\|_2=1}\E[(f_\eta(\eta_i)Z_i'\Sigma_\pi^{1/2}q)^2]\\
		&\le K_s^2\cdot K_c^2\cdot \nu'\E[\psi(\theta_0,\pi_0)\psi'(\theta_0,\pi_0)]\nu =K\cdot \nu'\I(\theta_0,\pi_0)\nu,
	\end{align*} for some constants $K_s,K_c>0$, where the penultimate inequality is by CS, the last one is guaranteed by Assumption~\ref{firstasmp_p}, and the first inequality is from
	\begin{align*}
		\E[\partial_v\psi(\theta_0,\pi_0)|x,v]=-\E[\psi(\theta_0,\pi_0)\frac{\partial_v f(y|x,v;\theta_0)}{f(y|x,v;\theta_0)}|x,v].
	\end{align*} Hence, $0\le G_\pi\Sigma_\pi G'_\pi\le K\cdot\I(\theta_0,\pi_0)$ for some constant $K$. Similarly, we can show that the same for the covariance term. Consequently, we have $0\le\Omega_{\text{sec}}\le \Omega_J\le (1+K)\Omega_{\text{sec}}$, which ensures that $\Omega_J$ also has bounded eigenvalues as $\Omega_{\text{sec}}$. Denote the submatrix of $\Omega_J$ for $\sigma, \gamma$ as $\Omega_{J,\sigma}$, $\Omega_{J,\gamma}$. For any $\nu\in\R^{d_\sigma}$ with $\|\nu\|_2=1$, denote $\xi_n:=\nu'\sqrt{\kappa_{J_n}}\H^{-1}_\sigma(\hat\theta_J^*,\pi_0)G_n\phi_i(\theta_0,\pi_0)$, where $\H^{-1}_\sigma(\hat\theta_J^*,\pi_0)$ is the $d_\sigma$-by-$J_n$ matrix stacking the rows corresponding to $\sigma_0$, and $\phi_i(\theta_0,\pi_0)$ denotes $\psi_i(\theta_0,\pi_0)+G_\pi\E[Z_iZ_i]]^{-1}Z_i\eta_i$. Let $\sigma^2_\xi=Var(\xi_n)=\nu'\Omega_{J,\sigma}\nu$, which is thus within some $[K_1\kappa_{J_n},K_2]$. For any $\epsilon>0$,
	\begin{align*}
		\frac{1}{\sigma_{\xi}^2}\E\big[\xi_n^2\1\{\xi_n>\epsilon \sqrt{n}\sigma_{\xi}\}\big]&\le \frac{1}{\sigma_{\xi}^2}\E\Big[\|\phi_i(\theta_0,\pi_0)\|^2_2\1\{\|\phi_i(\theta_0,\pi_0)\|^2_2>K_2^2\epsilon^2 n \sigma_\xi^2\}\Big]\\
		&\le \frac{1}{K_2^2\epsilon^2 n\sigma_\xi^4}\E\Big[\|\phi_i(\theta_0,\pi_0)\|^4_2\Big]\le \frac{KJ_n^2}{K_2^2\epsilon^2 n\sigma_\xi^4},
	\end{align*} given that $J_n^{\frac{\lambda}{\lambda+1}}n^{\frac{-1}{4(\lambda+1)}}/\kappa_{J_n}\to0$. Therefore, by the Lindeberg-Feller Central Limit Theorem, 
	\begin{align*}
		\Omega_{J,\sigma}^{-1/2}\sqrt{n\kappa_{J_n}}(\hat\sigma-\sigma)=\Omega_{J,\sigma}^{-1/2}\sqrt{\kappa_{J_n}}\H^{-1}_\sigma(\hat\theta_J^*,\pi_0)G_n\phi_i(\theta_0,\pi_0)\xrightarrow{d} \mathcal{N}(0,\mathbf{I}_{d_\sigma}).
	\end{align*} Similarly, one can get for the copula parameters $\hat\gamma$:
	\begin{align*}
		\Omega_{J,\gamma}^{-1/2}\sqrt{n\kappa_{J_n}}(\hat\gamma-\gamma)=\Omega_{J,\gamma}^{-1/2}\sqrt{\kappa_{J_n}}\H^{-1}_\gamma(\hat\theta_J^*,\pi_0)G_n\phi_i(\theta_0,\pi_0)\xrightarrow{d} \mathcal{N}(0,\mathbf{I}_{d_\gamma}).
	\end{align*} For the quantile coefficient at any fixed $\tau\in(0,1)$,
	\begin{align*}
		\sqrt{n\kappa_{J_n}}(\hat\beta_n(\tau)-\beta_J^*(\tau))&=\sqrt{n\kappa_{J_n}}(\hat{b}_n-b_J^*)'S(\tau)\\
		&=\sqrt{\kappa_{J_n}}S'(u)\H_b(\theta_J^*,\pi_0)^{-1}\cdot G_n\phi_i(\theta_0,\pi_0),
	\end{align*} where the basis vector $\|S(u)\|_2^2=O(1)$ by definition, hereby enabling bounded variance for all $J_n$. We can apply the Lindeberg-Feller similar as above to get
	\begin{align*}
		\Omega_{J,\tau}^{-1/2}\sqrt{n\kappa_{J_n}}(\hat\beta_n(\tau)-\beta_J^*(\tau))\xrightarrow{d}\mathcal{N}(0,\mathbf{I}), 
	\end{align*} where $\Omega_{J,\tau}:=S'(\tau)\Omega_{J,b}S(u)$. Since $\|\beta_J^*-\beta_0\|=O(1/J_n^{r+1})=o(1/\sqrt{n\kappa_{J_n}})$, the bias term $\hat\beta_n-\beta_0$ is dominated by $\hat\beta_n-\beta_J^*$, thus we have $\Omega_{J,\tau}^{-1/2}\sqrt{n\kappa_{J_n}}(\hat\beta_n(\tau)-\beta_0(\tau))\xrightarrow{d}\mathcal{N}(0,\mathbf{I})$.
\end{proof}

\subsection{Appendix D: Proofs of Results in Section \ref{subsec:asymp_np}}

\begin{proof}[\textbf{Proof of Theorem~\ref{thm5}}]
\label{pf_thm5} 
The proof of the asymptotic normality of $\hat\theta_n^{np}$ closely parallels that of $\hat\theta_n^{p}$. We therefore omit repetitive arguments and focus on the additional complications introduced by the nonparametric first stage. Denote $\hat\theta_n^{np}=\tilde\theta_n=(\tilde\beta_n(\cdot),\tilde\sigma,\tilde\gamma)$. Define the score function as $\psi_i(\theta,F)\equiv \psi(W_i;\theta,F):=\partial_\theta\log f(Y_i|X_i,F(X_i,Z_i);\theta)$. WLOG, let $d_x=1$. Since $\E_n[\psi_i(\tilde\theta_n,\hat{F})]=0$, by Taylor expansion around scalar $V_i=F_0(X_i,Z_i)$,
\begin{align*}
	0&=\underbrace{\frac{1}{\sqrt{n}}\cdot\sqrt{n}(\E_n-\E)\psi_i(\tilde\theta_n,F_0)}_{\mathbb{A}}
		+\underbrace{\E\psi(\tilde\theta_n,F_0)-\E\psi(\theta_J^*,F_0)+\E\psi(\theta_J^*,F_0)-\E\psi(\theta_0,F_0)}_{\mathbb{B}}\\
		&+\underbrace{\E_n[\partial_v\psi_i(\tilde\theta_n,F_0)(\hat{V}_i-V_i)]+O_p\left(\|\hat{V}_i-V_i\|^2\right)}_{\mathbb{C}},
\end{align*} where the expansion in $\mathbb{B}$ uses $\E\psi_i(\theta_0,F_0)=0$. Similar as in Proof~\ref{pf_thm3}, under the rate restrictions $J_n^{\frac{\lambda}{\lambda+1}}{\alpha_n}^{\frac{1}{2(\lambda+1)}}/\kappa_{J_n}\to0$ and $J_n^{r+1}/\sqrt{n}\to\infty$,
\begin{align*}
		\mathbb{A}&=\frac{1}{\sqrt{n}} G_n \bigl[ \psi_i(\hat{\theta}_n, F_0) \bigr] = \frac{1}{\sqrt{n}} (1 + o_p(1)) G_n \bigl[ \psi_i(\theta_0, F_0) \bigr]\\
		\mathbb{B}&=\mathcal{H}(\theta_J^*, F_0) (\tilde{b}_n - b_J^*, \tilde{\sigma} - \sigma_0, \tilde{\gamma} - \gamma_0)'+o_p(1),
\end{align*} with the sieve approximation error undersmoothed.

Consider the last term $\mathbb{C}$, by the Donsker of $\mathcal{G}$ as in Proof~\ref{pf_thm3},
\begin{align*}
	\mathbb{C}&=\E_n[\partial_v\psi_i(\tilde\theta_n,F_0)(\hat{V}_i-V_i)]+O_p\left(\alpha_n^2\right)\\
	&=\E_n[\partial_v\psi_i(\theta_0,F_0)(\hat{V}_i-V_i)]+\E_n[(\partial_v\psi_i(\tilde\theta_n,F_0)-\partial_v\psi_i(\theta_0,F_0))(\hat{V}_i-V_i)]+O_p\left(\alpha_n^2\right)\\
	&=\E_n[\partial_v\psi_i(\theta_0,F_0)(\hat{V}_i-V_i)]+O_p(\alpha_n(\sqrt{J_n/n}+\alpha_n)).
\end{align*} Decompose $\hat{V}_i-V_i$ into two components:
\begin{align*}
	\hat{V}_i-V_i&=\frac1n\sum_j p_i'\hat{Q}^{-}p_j\underbrace{\big( \1\{X_{1j}\le X_{1i}\}-F(X_{1i}|Z_j)\big)}_{:=\omega_{ij}}\\
	&+ \underbrace{\frac1n\sum_j p_i'\hat{Q}^{-}p_j \big(F(X_{1i}|Z_j)-p_j'a^K(X_{1i})\big)+(p_i'a^K(X_{1i})-F(X_{1i}|Z_i))}_{:= b(V_i)},
\end{align*} where $a^K(X_{1i})$ is the projection coefficient of $F(X_{1i}|Z_i)$ on $P_K(Z_i)$, and $b(V_i)=O_p(K_n^{1/2-d_1/d_z})$ by \citet{imbens2009identification} Lemma 11. Define the asymmetric kernel as $h(W_i,W_j):=\partial_v\psi_i(\theta_0,F_0)p_i'{Q}^{-}p_j\omega_{ij}$, by Hoeffding's decomposition,
\begin{align*}
	\mathbb{C}&=\frac{1}{n^2}\sum_{i}\sum_j h(W_i,W_j)+\frac{1}{n}\sum_i\partial_v\psi_i(\theta_0,F_0)b(V_i)+O_p(\alpha_n(\sqrt{J_n/n}+\alpha_n)),\\
	&=\frac{1}{n^2}\sum_{i,j}\Big\{\E[h(W_i,W_j)|W_j]+\tilde{h}(W_i,W_j)\Big\}+\E_n \partial_v\psi_i(\theta_0,F_0)b(V_i)+O_p(\alpha_n(\sqrt{J_n/n}+\alpha_n)),
\end{align*} where $\tilde{h}(W_i,W_j)$ represents the error after double projection, and the equality is from
\begin{align*}
	\E[h(W_i,W_j)]&=\E[\partial_v\psi_i(\theta_0,F_0)p_i'{Q}^{-}p_j\E[\omega_{ij}|W_i,Z_j]]=0\\
	\E[h(W_i,W_j)|W_i]&=\partial_v\psi_i(\theta_0,F_0)p_i'{Q}^{-}\E[p_j\E[\omega_{ij}|W_i,Z_j]|W_i]=0.
\end{align*} Let $\phi_{K_n}(\theta_0,W_j)=\E[h(W_i,W_j)|W_j]=\E[\partial_v\psi_i(\theta_0,F_0)p_i'{Q}^{-}p_j\omega_{ij}|W_j]$. Given Assumptions~\ref{parpro} and \ref{firstasym_np}, $\E\|\tilde{h}_{i,j}\|^2\le \E\|h_{i,j}\|^2\le C\cdot \E[(P_i'Q^{-}P_j)^2]=C\cdot \zeta_0(K_n)^2 K_n$ for some constant $C$, thus $\sqrt{n}\cdot \frac{1}{n^2}\sum_{i,j}\tilde{h}_{i,j}=O_p(\sqrt{\frac{\zeta_0(K_n)^2K_n}{n}})=o_p(1)$. Consequently,
\begin{align*}
	\sqrt{n}\mathbb{C}&=\sqrt{n}\Big(\frac{1}{n^2}\sum_{i,j}\phi_{K_n}(\theta_0,W_j)\Big)+o_p(1)+\sqrt{n}(\E_n-\E)\partial_v\psi_i(\theta_0,F_0)\cdot b(V_i)\\
	&+\sqrt{n}\E[\partial_v\psi_i(\theta_0,F_0)\cdot b(V_i)]+O_p(\alpha_n (\sqrt{J_n}+\alpha_n\sqrt{n}))\\
	&=G_n\phi_{K_n}(\theta_0,W_j)+\sqrt{n}\E[\partial_v\psi_i(\theta_0,F_0)b(V_i)]+o_p(1),
\end{align*} where first term on the right hand side is asymptotically Gaussian by Lindeberg-Feller CLT, while the second term $G_n\partial_v\psi_i b(V_i)$ is degenerate given that $Var(\partial_v\psi b)=O(\|b\|^2)=o(1)$, and the last term is dominated by the penultimate one. Define $\mu_J:=\mathcal{H}(\theta_J^*, F_0)^{-1}\E[\partial_v\psi_i(\theta_0,F_0) b(V_i)]$, with $\mu_{J,\cdot}$ be the corresponding subvectors. Combining $\mathbb{A}$, $\mathbb{B}$ and $\mathbb{C}$, we arrive at
\begin{align*}
	-\sqrt{n}\mathcal{H}(\theta_J^*, F_0) (\tilde{b}_n - b_J^*, \tilde{\sigma} - \sigma_0, \tilde{\gamma} - \gamma_0)'-\sqrt{n}\mathcal{H}(\theta_J^*, F_0)\mu_J = G_n \bigl[ \psi_i(\theta_0, F_0)+\phi_{K_n}(\theta_0,W_j) \bigr]
\\
	-\sqrt{n\kappa_{J_n}}\begin{pmatrix}
		\tilde{b}_n - b_J^*-\mu_{J,\beta}\\
		\tilde{\sigma} - \sigma_0-\mu_{J,\sigma}\\
		\tilde{\gamma} - \gamma_0-\mu_{J,\gamma}
	\end{pmatrix}
=\sqrt{\kappa_{J_n}}\mathcal{H}(\theta_J^*,F_0)^{-1}\Big[G_n \bigl[ \psi_i(\theta_0, F_0)+\phi_{K_n}(\theta_0,W_j) \bigr] \Big]
\end{align*} Since $\|\mu_J\|=O_p(K_n^{1/2-d_1/d_z}/\kappa_{J_n})$, if $\sqrt{\frac{n}{\kappa_{J_n}}}K_n^{\frac12-\frac{d_1}{d_z}}\to0$, then the bias term is asymptotically negligible. Variance of the right hand side is then
\begin{align*}
	\Omega_J:= \kappa_{J_n}\mathcal{H}(\theta_J^*, F_0)^{-1}\Big[\mathcal{I}(\theta_0,F_0)+\E[\phi_K\phi_K']+2\E[\psi(\theta_0,F_0)\phi_K'(\theta_0,W_j)]] \Big]\mathcal{H}(\theta_J^*, F_0)^{-1}.
\end{align*} Given the assumptions of first-stage series basis, the conclusion follows using arguments similar to Theorem~\ref{thm5} proof. One can analogously show that $\Omega_J$ has bounded eigenvalues, and pointwise asymptotic normality holds by the Lindeberg-Feller CLT.
\end{proof}

\subsection{Appendix E: Validity of Bootstrap}
\label{subsec:bootstrap}

We present the bootstrap validity result for the parametric first stage. Specifically, the bootstrap estimators $(\hat\beta_n^b,\hat\sigma^b,\hat\gamma^b)$, centered around $(\hat\beta_n,\hat\sigma,\hat\gamma)$, can be used to approximate the distribution of $(\hat\beta_n(\tau)-\beta_0(\tau),\hat\sigma-\sigma_0,\hat\gamma-\gamma_0)$. The validity of the bootstrap under an undersmoothed series first stage follows similarly given Theorem~\ref{thm5}. 

\begin{lem}\label{lem3}
	\textbf{Bootstrap Validity}\\
	Suppose the conditions of Theorem~\ref{thm3} hold, such that $J_n^{\frac{\lambda}{\lambda+1}}n^{\frac{-1}{4(\lambda+1)}}/\kappa_{J_n}\to0$, $J_n^{r+1}\kappa_{J_n}\to\infty$, and $J_n^{r+1}/\sqrt{n}\to\infty$, then for any fixed $\tau$,
	\begin{align*}
		\sqrt{n\kappa_{J_n}}\Omega^{-1/2}_{J,\tau}(\hat\beta_n^b(\tau)-\hat\beta_n(\tau))\xrightarrow{d}\mathcal{N}(0,\mathbf{I}_{d_x})\\
		\sqrt{n \kappa_{J_n}} \, \Omega_{J,\sigma}^{-1/2} (\hat{\sigma}^b-\hat{\sigma}) \xrightarrow{d} \mathcal{N}(0, \mathbf{I}_{d_\sigma}), \\
\sqrt{n \kappa_{J_n}} \, \Omega_{J,\gamma}^{-1/2} (\hat{\gamma}^b - \hat{\gamma}) \xrightarrow{d} \mathcal{N}(0, \mathbf{I}_{d_\gamma}),
	\end{align*} where $\Omega_{J,\cdot}$ are the same as in Theorem~\ref{thm3}.
\end{lem}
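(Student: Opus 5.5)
The plan is to replicate the proof of Theorem~\ref{thm3} inside the bootstrap world. The bootstrap estimator $\hat\theta_n^b=(\hat b_n^b,\hat\sigma^b,\hat\gamma^b)$ is obtained from the full two-step procedure on the resampled data. Its first-stage estimator $\hat\pi^b$ is OLS on $\{(X_{1i}^{*},Z_i^{*})\}$, and it solves $\E_n^b[\psi_i(\hat\theta_n^b,\hat\pi^b)]=0$, where $\E_n^b$ is the bootstrap empirical mean. I will expand this score around $(\hat\theta_n,\hat\pi)$, the ``true'' parameter of the bootstrap population, rather than around $(\theta_0,\pi_0)$. This gives the three-term decomposition $\mathbb{A}^b+\mathbb{B}^b+\mathbb{C}^b=0$:
\begin{itemize}
\item $\mathbb{A}^b=n^{-1/2}G_n^b\psi_i(\hat\theta_n^b,\hat\pi)$, with $G_n^b=\sqrt n(\E_n^b-\E_n)$;
\item $\mathbb{B}^b=\E_n\psi_i(\hat\theta_n^b,\hat\pi)-\E_n\psi_i(\hat\theta_n,\hat\pi)$, a linearization term that uses $\E_n\psi_i(\hat\theta_n,\hat\pi)=0$;
\item $\mathbb{C}^b=\E_n^b[\partial_\pi\psi_i(\hat\theta_n^b,\hat\pi)](\hat\pi^b-\hat\pi)+o_{p^*}(n^{-1/2})$.
\end{itemize}
The aim is to show that conditionally on the data, in probability,
\[
-\sqrt{n}\,\mathcal{H}(\theta_J^*,\pi_0)\,(\hat\theta_n^b-\hat\theta_n)=G_n^b\bigl[\phi(W_i;\theta_0,\pi_0)\bigr]+o_{p^*}(1).
\]
Here $\phi$ is the influence function from Theorem~\ref{thm3}. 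One then premultiplies by $\sqrt{\kappa_{J_n}}\mathcal{H}^{-1}$ and applies a conditional Lindeberg--Feller CLT, exactly as in the proof of Theorem~\ref{thm3}.

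The steps, in order, are as follows.
\begin{enumerate}
\item \textbf{Bootstrap consistency.} Show $d(\hat\theta_n^b,\hat\theta_n)=o_{p^*}(1)$ by repeating the argument of Theorem~\ref{thm2}. Two inputs are needed. First, the bootstrap uniform law $\sup_{\Theta_n}|(\E_n^b-\E_n)l|=o_{p^*}(1)$ follows from the same entropy bound $KJ_n\log(\delta^{-1/s})=o(n)$. Second, $\hat\pi^b-\hat\pi=O_{p^*}(n^{-1/2})$ follows from Assumption~\ref{firstasmp_p}.
\item \textbf{Bootstrap rates.} Establish analogues of Lemma~\ref{lem1}: $\|\hat\beta_n^b-\hat\beta_n\|_2^2=O_{p^*}(J_n^{2\lambda/(\lambda+1)}n^{-1/(2(\lambda+1))})$, with similar rates for $\hat\gamma^b$ and $\hat\sigma^b$. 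These are obtained by combining the bootstrap maximal inequality with the Pinsker and characteristic-function argument, with the bootstrap empirical measure playing the role of $P$.
\item \textbf{Term $\mathbb{A}^b$.} Since $\mathcal{G}$ is $P$-Donsker, the bootstrap Donsker theorem (Gin\'e--Zinn) applies. Combined with consistency, it gives $G_n^b\psi(\hat\theta_n^b,\hat\pi)=G_n^b\psi(\theta_0,\pi_0)+o_{p^*}(1)$.
\item \textbf{Term $\mathbb{C}^b$.} A law of large numbers for $\E_n^b\partial_v\psi\,\partial_\pi V$ gives the limit $G_\pi$. Since
\[
\sqrt n(\hat\pi^b-\hat\pi)=\E[ZZ']^{-1}G_n^b[Z_i\eta_i]+o_{p^*}(1),
\]
the first-stage correction enters in exactly the same form as before.
\item \textbf{Term $\mathbb{B}^b$.} This is $\mathcal{H}_n(\hat\theta_n,\hat\pi)(\hat\theta_n^b-\hat\theta_n)$ plus a quadratic remainder. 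The rate condition $J_n^{\lambda/(\lambda+1)}n^{-1/(4(\lambda+1))}/\kappa_{J_n}\to0$ makes the remainder negligible relative to $\kappa_{J_n}\|\hat\theta_n^b-\hat\theta_n\|$, as in the proof of Theorem~\ref{thm3}.
\end{enumerate}

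The main obstacle is step~5: replacing the empirical Hessian at $\hat\theta_n$ by $\mathcal{H}(\theta_J^*,\pi_0)$ uniformly in the spectral norm relative to $\kappa_{J_n}$. That is, I need
\[
\kappa_{J_n}^{-1}\bigl\|\mathcal{H}_n(\hat\theta_n,\hat\pi)-\mathcal{H}(\theta_J^*,\pi_0)\bigr\|_2=o_p(1).
\]
I would split this into a sampling part $\|(\E_n-\E)\partial_\theta\psi\|_2=O_p(\sqrt{J_n/n})$, which is handled via a matrix Bernstein inequality using the fourth-moment bounds in Assumption~\ref{parpro}(d). The remaining part, a parameter-perturbation term, is controlled by the Lipschitz continuity of $\partial_\theta\psi$ times $d(\hat\theta_n,\theta_J^*)$, which is bounded by Lemma~\ref{lem1}. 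Both parts must then be shown to be $o(\kappa_{J_n})$ under the stated rate conditions. Once this holds, the minimum eigenvalue of the bootstrap Hessian is of order $\kappa_{J_n}$ w.p.a.1.

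To finish, the conditional covariance of $\sqrt{\kappa_{J_n}}\mathcal{H}^{-1}G_n^b\phi$ must equal the sample analogue of $\Omega_J$. This sample analogue converges to $\Omega_J$ because its eigenvalues are bounded, as shown in the proof of Theorem~\ref{thm3}. The conditional Lindeberg condition follows from the same fourth-moment bound, $\E_n\|\phi_i\|^4=O_p(J_n^2)$. The result for $\hat\beta_n^b(\tau)$ then follows from premultiplication by $S(\tau)\otimes\mathbf{I}_{d_x}$. Unlike in Theorem~\ref{thm3}, no sieve-bias step is needed, because $\hat\theta_n\in\Theta_n$ and the bootstrap is centered at $\hat\theta_n$ itself.
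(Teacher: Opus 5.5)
Your proposal is correct at the paper's level of rigor and follows essentially the same route. The paper also represents the bootstrap through multinomial weights, expands the bootstrap first-order condition around $(\hat\theta_n,\hat\pi)$ using $\E_n\psi_i(\hat\theta_n,\hat\pi)=0$, and uses the Donsker property plus consistency to move the centered bootstrap score to $(\theta_0,\pi_0)$. It then linearizes the OLS first stage, replaces the Hessian by $\mathcal{H}(\theta_J^*,\pi_0)$, and concludes that $\hat\theta_n^b-\hat\theta_n$ obeys the expansion of Theorem~\ref{thm3}. Your version is more explicit: it writes the leading term correctly as $G_n^b[\phi]$ with $\hat\pi^b-\hat\pi$, and spells out the Hessian replacement and the conditional Lindeberg step.

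Two small repairs are needed.
\begin{itemize}
\item Obtain the Lemma~\ref{lem1}-type rates for $\hat\theta_n^b-\theta_J^*$ under $P$, and reach $\hat\theta_n^b-\hat\theta_n$ by the triangle inequality. The Pinsker/KL step has no content with an empirical measure in the role of $P$.
\item Justify convergence of the bootstrap covariance to $\Omega_J$ by the same fourth-moment bound you use for Lindeberg. Boundedness of its eigenvalues does not imply convergence.
\end{itemize}
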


\begin{proof}[\textbf{Proof of Lemma~\ref{lem3}}]
\label{pf_lem3}
	Let $\{\xi_i\}_{i=1}^n$ be i.i.d. positive random weights satisfying $\E(\xi_i)=1$, $0\le Var(\xi_i)=\sigma_\xi^2<\infty$, and independent of the data. For example, one may take $(\xi_1,\dots,\xi_n)\sim \text{Multinomial}(n,\frac1n,\dots,\frac1n)$. Define the weighted bootstrap empirical measure as $\E_n^b\psi_i=\frac1n\sum_{i=1}^n \xi_i\psi_i$. Standard bootstrap results imply that the OLS estimator satisfies $(\hat\pi^b-\hat\pi)=O_p(n^{-1/2})$. Using arguments analogous to those in Theorem~\ref{thm2}, we obtain the consistency of $\hat\theta^b$. In particular,
	\begin{align*}
		\sup_{\theta\in\Theta_n}|\E_n[(\xi_i-1)l(W_i;\theta,\hat\pi^b)]-\E[(\xi_i-1)l(W_i;\theta,\hat\pi^b)]|=o_p(1).
	\end{align*} Since $\E[(\xi_i-1)l(W_i;\theta,\pi)]=0$, it follows that
	\begin{align*}
		\sup_{\theta\in\Theta_n}|\E_n^b l(W_i;\theta,\hat\pi^b)-\E_n l(W_i;\theta,\hat\pi^b)|=o_p(1),\\
		\sup_{\theta\in\Theta_n}|\E_n l(W_i;\theta,\hat\pi^b)-\E l(W_i;\theta,\pi_0)|=o_p(1),
	\end{align*} where the second equality uses the consistency of $\hat\pi^b$. The $L_2$ convergence rate of $(\hat\beta^b,\hat\sigma^b,\hat\gamma^b)$ likewise parallel those established in Lemma~\ref{lem1}. Given $P$-Donsker of $\mathcal{G}$,
	\begin{align*}
		\E_n^b\psi(\hat\theta^b_n,\hat\pi^b)-\E_n\psi_i(\hat\theta^b_n,\hat\pi^b)=\E_n^b\psi(\hat\theta_n,\hat\pi^b)-\E_n\psi_i(\hat\theta_n,\hat\pi^b)+o_p(n^{-1/2})\\
		\E_n\psi_i(\hat\theta_n,\hat\pi)-\E_n\psi_i(\hat\theta^b_n,\hat\pi^b)=\E_n^b\psi(\hat\theta_n,\hat\pi^b)-\E_n\psi_i(\hat\theta_n,\hat\pi^b)+o_p(n^{-1/2}),
	\end{align*} where the second equality is given by $\E_n^b\psi(\hat\theta^b_n,\hat\pi^b)=\E_n\psi_i(\hat\theta_n,\hat\pi)=0$. The RHS becomes 
	\begin{align*}
		\frac{1}{\sqrt{n}} \: (1+o_p(1))\: G_n\big[(\xi_i-1)\psi(\theta_0,\pi_0)\big]+o_p(n^{-1/2}),
	\end{align*} converging weakly to a Gaussian limit with covariance $\asymp \I(\theta_0,\pi_0)$. Meanwhile, the left-hand side admits the expansion:
	\begin{gather*}
		-\E_n \partial_\theta \psi_i(\hat\theta_n,\hat\pi)(\hat\theta^b_n-\hat\theta_n)- \E_n \partial_\pi \psi_i(\hat\theta_n,\hat\pi)(\hat\pi^b-\hat\pi)+o_p(n^{-1/2})\\
		-\E_n\partial_\theta \psi_i(\theta_J^*,\pi_0)(\hat\theta^b_n-\hat\theta_n)-\E_n \partial_\pi \psi_i(\hat\theta_n,\pi_0)(\hat\pi-\pi_0)+o_p(n^{-1/2})
	\end{gather*} Combining RHS and LHS, along with consistency of $\hat\theta_n$,
	\begin{align*}
		-\sqrt{n}\H(\theta_J^*,\pi_0)(\hat\theta^b_n-\hat\theta_n)= G_n\big[\psi(\theta_0,\pi_0)\big] +G_\pi\cdot \E[Z_iZ_i']^{-1}\cdot \frac{1
	}{\sqrt{n}}\sum_{i=1}^n Z_i\eta_i.
	\end{align*} Hence, the bootstrap estimator satisfies the same asymptotic expansion as in Theorem~\ref{thm3}, which establishes the desired result.
\end{proof}

\subsection{Appendix F: Multiple Endogenous Regressors}
\label{subsec:multi}

Suppose $X=(W', Z_1')'$, where $W$ is a vector of continuous endogenous regressors with $\dim(W)=K$, $1\le K\le d_x$. Let $Z=(Z_1',Z_2')'$, with exogenous variables $Z_1$ and excluded instruments $Z_2$. Assume $W=(X_1,\dots,X_K)'$ is generated by a multivariate triangular system:
\begin{align*}
	X_1&=h_1(Z,\upsilon_1)\\
	X_2&=h_2(Z,X_1,\upsilon_2)\\
	& \dots\\
	X_K&=h_K(Z,X_1,\dots,X_{K-1},\upsilon_K),
\end{align*} where each $h_k(\cdot,\upsilon_k)$ is strictly increasing in $\upsilon_k$. Let $\Upsilon=(\upsilon_1,\dots,\upsilon_K)'$, and assume $Z\ind (U,\Upsilon)$ and $\upsilon_i\ind \upsilon_j$ for all $i\neq j$. Namely, the instrument $Z$ is independent of the latent rank $U$ and all reduced-form disturbances, while each endogenous regressor affects $U$ through a distinct channel. By the Rosenblatt transform and the identification method in \citep{imbens2009identification}, we obtain a vector of control variables $\mathbf{V}=(V_1,\dots,V_K)$:\begin{align*}
	V_1&= F_{X_1|Z}(X_1|Z)=F_{\upsilon_1}(\upsilon_1)\\
	V_2&= F_{X_2|X_1,Z}(X_2|X_1,Z)=F_{\upsilon_2}(\upsilon_2)\\
	&\dots\\
	V_K &= F_{X_K|X_1,\dots,X_{K-1}}(X_k|X_1,\dots,X_{K-1},Z)=F_{\upsilon_k}(\upsilon_k),
\end{align*} each having a uniform marginal distribution. One can show, following Theorem 1 of \citep{imbens2009identification}, that $X\ind U\mid \mathbf{V}$. Although this framework does not require a rank condition, the common support assumption still necessitates that the excluded instruments provide sufficient variation.

\end{document}